\pgfplotsset{compat=1.10}
\newtheorem{theorem}{Theorem}
\newtheorem{lemma}{Lemma}
\newtheorem{corollary}{Corollary}
\newtheorem*{ra3'}{RA 3'}
\newtheorem*{ra4'}{RA 4'}
\newtheorem*{ra3''}{RA 3''}
\newtheorem*{ra4''}{RA 4''}
\newtheorem{definition}{Definition}
\newtheorem{proposition}{Proposition}
\newlength{\algofontsize}
\newcommand{\prob}{\ensuremath{\mathbf{P}}}
\newcommand{\bE}{\mathbb{E}}
\newcommand{\Var}{\mathrm{Var}}
\newcommand{\Cov}{\mathrm{Cov}}
\newcommand{\1}{\mathbf{1}}
\renewcommand{\mathbf}{\boldsymbol}
\tikzset{
    -Latex,auto,node distance =1 cm and 1 cm,semithick,
    state/.style ={ellipse, draw, minimum width = 0.7 cm,rounded corners},
    point/.style = {circle, draw, inner sep=0.04cm,fill,node contents={}},
    bidirected/.style={Latex-Latex,dashed},
    el/.style = {inner sep=2pt, align=left, sloped}
}
\newtheorem{example}{Example}[section]
\begin{document}
	
	\algrenewcommand\algorithmicrequire{\textbf{Input:}}
	\algrenewcommand\algorithmicensure{\textbf{Output:}}
	
	\title{\vspace{0em} On the Robustness of Second-Price Auctions\\ in Prior-Independent Mechanism Design \vspace{1cm}
	 }
	
	\author{ 
	Jerry Anunrojwong\thanks{Columbia University, Graduate School of Business. Email: {\tt janunrojwong25@gsb.columbia.edu}} \and Santiago Balseiro\thanks{Columbia University, Graduate School of Business. Email: {\tt srb2155@columbia.edu}} \and Omar Besbes\thanks{ Columbia University, Graduate School of Business. Email: {\tt ob2105@columbia.edu}}
	}

\date{\vspace{-1em}}
\maketitle
\begin{abstract}
Classical Bayesian mechanism design relies on the common prior assumption, but the common prior is often not available in practice. We study the design of prior-independent mechanisms that relax this assumption: the seller is selling an indivisible item to $n$ buyers such that the buyers' valuations are drawn from a joint distribution that is unknown to both the buyers and the seller, buyers do not need to form beliefs about competitors, and the seller assumes the distribution is adversarially chosen from a specified class. We measure performance through the worst-case \textit{regret}, or the difference between the expected revenue achievable with perfect knowledge of buyers' valuations and the actual mechanism revenue. 

We study a broad set of classes of valuation distributions that capture a wide spectrum of possible dependencies:  independent and identically distributed (i.i.d.) distributions, mixtures of i.i.d.~distributions, affiliated and exchangeable distributions, exchangeable distributions, and all joint distributions. We derive in quasi closed form the minimax values and the associated optimal mechanism. In particular, we show that the first three classes admit the same  minimax regret value, which is decreasing with the number of competitors, while the last two have the same  minimax regret equal to that of the case $n = 1$. Furthermore, we show that the minimax optimal mechanisms have a simple form across all settings: a  \textit{second-price auction with random reserve prices}, which shows its robustness in prior-independent mechanism design. En route to our results, we also develop a principled methodology to determine the form of the optimal mechanism and worst-case distribution via first-order conditions that should be of independent interest in other minimax problems.    \\

\noindent
\textbf{Keywords}: prior-independent, robust mechanism design, minimax regret, second-price auction with random reserve.
\end{abstract}
\newpage

\setstretch{1.35}

\section{Introduction}




One of the most fundamental questions in all markets is how to optimally sell an item. Selling mechanisms have been  prevalent throughout history, and they have been systematically studied in the literature on \textit{optimal mechanism design} dating back to the seminal work of \cite{Myerson81}. The theory is elegant and has led to an extensive literature. The classical theory, however, has a major conceptual limitation encapsulated through the common prior assumption. It assumes that both the seller and the buyers know the distribution of valuations of buyers and act optimally according to this prior. This leads to mechanisms that depend  on the detailed distributional knowledge and the strategic and informational sophistication of the buyers, which are often not available in practice. This fundamental need to develop mechanisms that are ``more robust'' and less sensitive to modeling assumptions is often referred to as the ``Wilson doctrine'' \citep{Wilson87}. While there have been  an emerging literature on the topic in the last decades, the general derivation of appropriately ``good" mechanisms has been an open question, with few exceptions noted below. In the present paper, we make  progress in understanding ``good" prior-independent mechanisms across a variety of environments. Quite strikingly, the mechanisms we obtain are minimax optimal across a broad variety of classes of distributions typically considered.



In particular, the present paper focuses on the classical mechanism design setting in which a seller wants to sell a good to $n$ buyers. The buyers' valuations are unknown to the seller and are assumed to be drawn from a joint distribution $\mathbf{F}$. In the classical mechanism design literature, $\mathbf{F}$ is assumed to be known to the seller and the buyers. Here, we consider a prior-independent setting: while we still assume that the buyers' valuations are drawn from a distribution, the buyers and the seller do not know this distribution. We will only assume that the seller knows that the distribution belongs to a class $\mathcal{F}$ (we will analyze many such classes). Similarly, we will assume that the buyers do not know the underlying joint distribution. We will therefore focus only on mechanisms that are \textit{dominant strategy incentive compatible} (DSIC): each buyer finds it optimal to report the truth independent of all the other buyers' valuations and strategies. 

To quantify the performance of mechanisms, we will focus on the worst-case \textit{regret}, or the difference between the ideal expected revenue the seller could have collected with knowledge of the buyers' valuations and the expected revenue garnered by the actual mechanism. 
Because the seller does not know the distribution, we assume the seller selects the mechanism that performs well against any element in the class of distribution. In more detail,  we take a minimax approach: the seller selects a mechanism that minimizes the \textit{worst-case} regret over all admissible distributions. We refer to Section~\ref{subsec:intro-model-discuss} for discussions on the choices of the benchmark, objective, as well as our informational assumptions.

Our problem can be understood as a zero-sum game between the seller, who selects a mechanism, and Nature, who best-responds by choosing a distribution. We focus on bounded valuations and in this class, we will also consider a broad set of natural classes of joint distributions of valuations: independent and identically distributed (i.i.d.) distributions, mixtures of i.i.d.~distributions, affiliated and exchangeable distributions, exchangeable distributions, and all joint distributions.  We aim to characterize  the minimax optimal mechanism and its associated performance.

\subsection{Summary of Main Contributions}\label{subsec:summary-main-contributions}

 Our main contributions can be summarized at a high level in  Figure~\ref{fig:dag-results}. The unified framework and techniques we develop yield the minimax optimal mechanisms for regret for any number of buyers $n$ and across the classes of bounded i.i.d.~distributions,\footnote{Note that in the i.i.d.~case, we do not impose shape restrictions on the distribution such as regularity. However, as we will see later, the worst-case distribution is regular, so the minimax optimal mechanism is exactly the same under the class of regular i.i.d.~distributions (cf. Remark after Theorem~\ref{thm:minimax-lambda-regret-main}).} two central classes that capture dependence in valuations (exchangeable and affiliated distributions, and mixtures of i.i.d.~distributions), exchangeable distributions, and all joint distributions. Furthermore, we show that a central mechanism emerges: the same mechanism is minimax optimal across the classes of bounded i.i.d.~distributions, mixtures of i.i.d.~distributions, and affiliated and exchangeable distributions. Our approach also yields optimal mechanisms for the exchangeable distributions and general distributions. A striking feature of our results is that optimal mechanisms have a surprisingly simple form: across all our settings second-price auctions with random reserve prices attain the optimal minimax regret. Our results thus speak to robustness of second-price auctions in prior-independent settings.

\begin{figure}[h!]
\centering
\begin{tikzpicture}
    \node[state,rectangle] (iid) at (0,0) { \makecell[c]{i.i.d. \\distributions}};
    \node[state,rectangle] (aff) at (3,2) {\makecell[c]{exchangeable and\\affiliated distributions}};
    \node[state,rectangle] (mix) at (3,-2) {\makecell[c]{mixture of \\i.i.d. distributions}};
    \node[state,rectangle] (exc) at (7.5,0) {\makecell[c]{exchangeable \\distributions}};
    \node[state,rectangle] (all) at (10.5,0) {\makecell[c]{all \\distributions}};
    \path (iid) edge node[above, el] {$\subseteq$} (aff);
    \path (iid) edge node[above, el] {$\subseteq$} (mix);
    \path (aff) edge node[above, el] {$\subseteq$} (exc);
    \path (mix) edge node[above, el] {$\subseteq$} (exc);
    \path (exc) edge node[above, el] {$\subseteq$} (all);
    \node[draw=blue,thick,dotted,fit=(iid) (aff) (mix), inner sep=0.5cm,label={below:{\textcolor{blue}{\makecell[c]{same minimax optimal \\ mechanism and performance\\(depends on $n$)}}}}] (machine) {};
    \node[draw=blue,thick,dotted,fit=(exc) (all), inner sep=0.5cm,label={below:{\textcolor{blue}{\makecell[c]{same minimax optimal \\ mechanism and performance\\(reduces to $n=1$)}}}}] (machine) {};
\end{tikzpicture}
\caption{Summary of minimax optimal mechanisms and performances derived for different distribution classes.}
\label{fig:dag-results}
\end{figure}
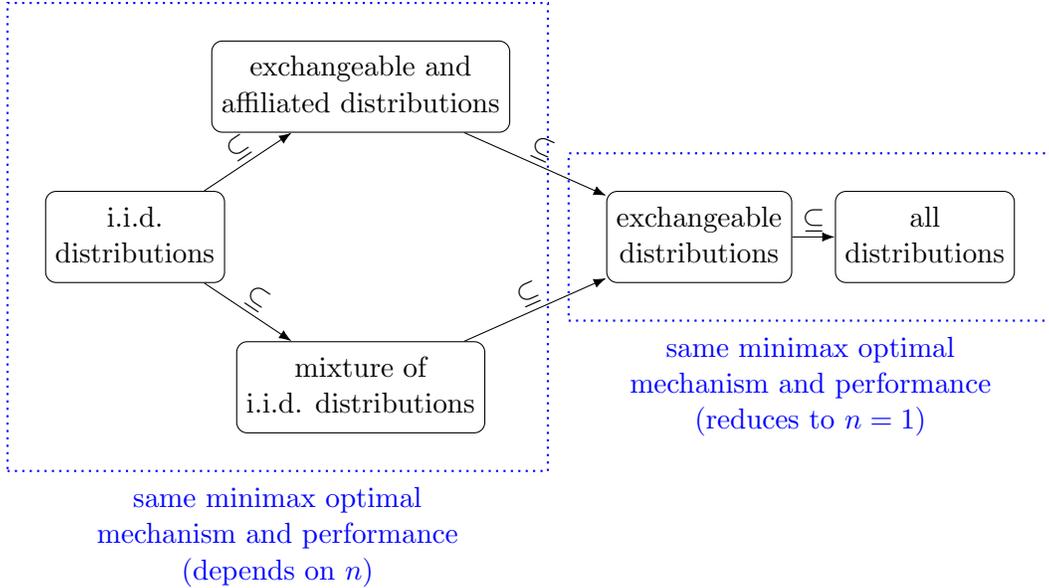


\paragraph{The case of i.i.d.~valuations.} 
We start by  studying the case of i.i.d.~valuations.  Our main innovation is a principled methodology to come up with the candidate optimal mechanism and the worst-case distribution via a saddle point argument. More specifically, assuming that a saddle point exists and that the optimal mechanism is a second-price auction with random reserve (alongside a few regularity conditions), we derive necessary conditions for the structure of a worst-case distribution (cf. Proposition~\ref{prop:opt-F-ode}) and the distribution of reserve prices (Proposition~\ref{prop:opt-phi-ode}). 
We can then establish the optimality of this mechanism, without any additional assumptions, via saddle inequalities. The main result is presented in Theorem \ref{thm:minimax-lambda-regret-main}. In particular, this result yields the exact structure, in quasi-closed form, of the minimax optimal mechanism. 
The distribution of the random reserve price has to be constructed carefully, however, and is determined by a solution to a particular ordinary differential equation. 
 
A corollary of our analysis specialized to the case of a single buyer recovers the results of \cite{BergemannSchlag08}. 

\paragraph{The case of positively dependent valuations.} 
In addition to the i.i.d.~case, we study two central classes that capture the possibility of positive dependence. 
 The first is the notion of affiliation, which was studied under the name \textit{multivariate total positivity of order 2} ($\text{MTP}_2$) in statistics \citep{Karlin68,KarlinRinott80-MTP}. 
 Affiliation was introduced to the mechanism design literature by \cite{MilgromWeber82} and has since become standard in Bayesian mechanism design. To the best of our knowledge, this paper is the first to analyze affiliated distributions in a robust (minimax regret) setting. We establish, quite notably, that the optimal mechanism against i.i.d.~distributions is also minimax optimal against this broader class (Proposition~\ref{prop:minimax-regret-aff}). We establish this result through another saddle point argument, where the key is to control for the impact of allowable dependencies across valuations. We also establish a similar result for the class of mixtures of i.i.d.~distributions, which are used extensively in the statistics literature due to their remarkable modeling power~\citep{hastie2009elements}. 

In other words, as  shown in Figure~\ref{fig:dag-results}, a  second-price auction with carefully constructed distribution of  reserve price is minimax optimal against a variety of classes of distributions. 



\paragraph{General Insights.}
We then use our results to study  a variety of related questions. First, we explore in more detail the structure of optimal mechanisms and the associated performance. In particular, we can quantify, for any number of buyers, the best performance achievable. In Table~\ref{table:minimax-regret-intro}, we provide an example of the minimax regret with valuations in $[0,1]$.

\begin{table}[h!]
\centering

\begin{tabular}{|c || c | c | c | c | c | c | c | c |} 
 \hline
 number of buyers & 1 & 2 & 3 & 4 & 5 & ... & 10 & $\infty$ \\ [0.5ex] 
 \hline
 minimax regret & 0.3679 & 0.3238 & 0.3093 & 0.3021 & 0.2979 & ... & 0.2896 & 0.2815 \\ 
 \hline
\end{tabular}

\caption{Minimax regret as a function of the number of buyers $n$ for valuations in $[0,1]$.}
\label{table:minimax-regret-intro}
\end{table}

In order to obtain a baseline performance and to quantify the value of mechanism design in a prior-independent environment, we also characterize the worst-case performance of two natural benchmark mechanisms: a Vickrey auction, i.e., a second-price auction with no reserve, and also a second-price auction with a deterministic reserve price (cf. Proposition~\ref{prop:sup-regret-spa-r}). Quite notably, the worst-case performance of these mechanisms are similar across the the classes of i.i.d., exchangeable and affiliated, and mixtures of i.i.d.. These results are of independent interest, as they characterize the worst-case performance of commonly used mechanisms. 

We establish that the optimal minimax mechanism yields significant improvements over both mechanisms. In particular, a reserve price is very valuable in prior-independent environments and the use of a randomized reserve price is also critical to minimize the worst-case regret. 

The results also allow us to characterize the impact of increased competition in prior-independent environments. We characterize the minimax regret as the number of agents grows large (cf. Corollary~\ref{cor:lambda-n-infty}).    In particular, we establish that even with large number of buyers, our optimal mechanism (second-price auction with random reserve) still performs better than the standard second-price auction. Our results also highlight that as the number of buyers grow large, the minimax regret  does not shrink to zero as the number of buyers grow large, and the limiting minimax regret can be fully characterized. This apparently counter-intuitive result is driven by the fact that we allow the worst-case distribution to depend on $n$; Nature can select ($n$-dependent) distributions that limit (but not eliminate) the value of increased competition for the seller. Furthermore, this phenomenon is robust to the choice of the benchmark. In fact, when taking  the alternative benchmark associated with  optimal revenues achievable with knowledge of the distribution of values, we can show (Proposition~\ref{prop:minimax-regret-opt-limit}) that for any $n$, the minimax regret also does not shrink to zero as $n$ increases, and is at least $e^{-2} \approx 0.1353$.


\paragraph{The case of arbitrary joint distributions.} 
Finally, we also show that our framework readily yields a minimax optimal mechanism in the case of general exchangeable distributions or general distributions (cf. Proposition~\ref{prop:minimax-regret-arb}). In these two cases, the problem in essence  reduces to the case with a single agent ($n=1$) in terms of achievable performance. Intuitively, with arbitrary joint distributions, nature may simply select only one ``effective" buyer and not allow the seller to capitalize on competition across buyers. Our framework recovers the result for minimax regret against general distributions obtained in \cite{KocyigitKR20-new}. 

We also note that our minimax problems are in general non-convex because of the restriction on the classes of distributions Nature can optimize over.  This is in sharp contrast to other robust mechanism design problems studied in the literature, such as those in \citet{BergemannSchlag08} and \citet{KocyigitRK21-old}, where the minimax problems are bilinear and, thus, efficiently solvable.

More broadly, there has been significant open questions on how to operate in prior-independent environments and on the type of achievable performance in mechanism design in the absence of priors (see, e.g., discussions in \cite{HartlineJohnsen21-blends}).  The present work can be seen as providing a broad spectrum (across distribution classes and number of buyers) of exact/tight characterization of achievability results in mechanism design without priors when the benchmark is the maximal value achievable. The results and techniques developed here may open up the possibility of characterizing minimax performance against alternative benchmarks or with alternative information structures.

\subsection{Related Work}\label{subsec:related-work}

Our work is related to the streams of work on auction design and mechanism design, pioneered by \cite{Vickrey61,Myerson81,RileySamuelson81}. 
We refer the reader to the monographs by \cite{Krishna10-auction-theory-2nd-ed} and \cite{Borgers15-md-book} for a review of classical materials on auction design and mechanism design. These works set a stage for classical \textit{Bayesian} mechanism design for subsequent works, but they share an assumption that the distribution of valuations is common knowledge to both the buyers and the seller. 

The present  work contributes to a stream of work that aims at weakening the knowledge of the seller, in the spirit of Wilson's early call for detail-free mechanisms \citep{Wilson87}. 

In the context of one agent, the problem reduces to that of pricing, and the early works most closely related to our work are \cite{BergemannSchlag08} and \cite{ErenMaglaras10}.
\cite{BergemannSchlag08} considers a minimax regret problem, while \cite{ErenMaglaras10} considers a maximin ratio problem. Both papers consider the \textit{one-bidder} case only. In the original formulations of the two papers, \cite{BergemannSchlag08} considers the first-best benchmark, while \cite{ErenMaglaras10} considers the second-best benchmark. An argument analogous to the proof of our Proposition~\ref{prop:fb-sb-equiv} in Appendix~\ref{app:sec:model-discuss} immediately implies that in the one-bidder case, the first-best and the second-best benchmarks are equivalent.

Both \cite{BergemannSchlag08} and \cite{ErenMaglaras10} specifically consider the random-posted-price mechanism class. In the one-bidder setting, any DSIC mechanism is equivalent to posting a random price, so their settings directly correspond to ours, and \cite{BergemannSchlag08} is an immediate corollary of our results for $n = 1$.

However, there are unique problem features (and associated technical and conceptual complications) that arise in the presence of competing buyers ($n \geq 2$) in our paper that do not arise in the $n = 1$ case. In the latter case, any DSIC mechanism is a random posted price, so the only question involves determining this optimal distribution of posted prices, but in the $n \geq 2$ case, there is no such characterization. In a sense, the class of DSIC mechanisms is ``large'' and ``unstructured.'' In particular, it is not obvious a priori that the optimal mechanism class should be SPA. We consider identifying SPA as a focal mechanism class to consider as one of our main contributions. 

Furthermore, the different distribution classes that we consider in our paper (cf. Figure~\ref{fig:dag-results}) become distinct only for $n \geq 2$; for $n = 1$ they all collapse to the same distribution class. The $n \geq 2$ case is also structurally different from the $n = 1$ case because the $n = 1$ case is \textit{bilinear}, that is, the objective is linear in both the seller's mechanism and Nature's distribution and the problem can be efficiently solved via linear programming duality as in \citep{KocyigitKR20-new,KocyigitRK21-old}. 
In contrast, for $n \geq 2$ the objective is still linear, in the seller's mechanism but is no longer linear in Nature's distribution, and we need to introduce new techniques to solve this problem. 

\cite{KocyigitKR20-new} considers a minimax regret problem with multiple bidders and multiple items. They also consider the first-best benchmark and the class of DSIC mechanisms. Each bidder's utility is assumed to be additive in each item, 
 and the resulting mechanism separates across items. They also let the distributions be arbitrary, and only bounds on the support are known; it is straightforward to adapt our argument from Section 4.3 to imply that arbitrary distributions for any $n$ reduces to $n = 1$. The optimal mechanism exhibited in \cite{KocyigitKR20-new} is the same as our optimal mechanism for $\mathcal{F}_{\textnormal{all}}$ given in Proposition~\ref{prop:minimax-regret-arb}, Section~\ref{subsec:exc-all-dists} in our paper.


The concurrent work of \cite{BachrachTalgamCohen22} considers \textit{maximin revenue} rather than minimax regret, so there is no benchmark. They consider the mechanism with the highest worst-case over the class of i.i.d.~distributions with known upper bound on the support \textit{and known mean}. In the setting of our paper (in which only the upper bound is known) the maximin revenue is trivially zero, achieved when Nature chooses the distribution that is always zero. 
Furthermore, they restrict the mechanism class to second-price-auctions (SPA) with random reserve, and characterize the robustly optimal SPA for every number of bidders $n \ge 1$. 
They prove that a SPA is also optimal in the class of all DSIC mechanisms when $n \le 2$. It remains open to characterize the optimal mechanism for $n \ge 3.$ In contrast, we show that for the minimax regret objective, a SPA is optimal over all DSIC mechanisms for all $n \geq 1$ and over different distribution classes.

In the context of prior-independent mechanism design, very few optimality results are available. In an approximation ratio context, and with the benchmark of optimal revenue with knowledge of the distribution,  initial guarantees against  regular distributions are obtained in \cite{DhangwatnotaiRoughgardenYan15} through a reinterpretation of the classical results of \cite{BulowKlemperer96}, and these are later improved in \cite{fu2015randomization} and \cite{AllouahBesbes20} for two buyers, together with impossibility results. Optimality results are obtained for two buyers with  monotone hazard rate distributions in  \cite{AllouahBesbes20}  and in  \cite{HartlineJohnsenLi20-benchmark-design} for regular distributions. We also refer the reader to \cite{Hartline-approx-md-book} for a survey.

Our work also relates to the design of optimal pricing and mechanisms with limited data. Some works assume access to samples drawn from the distribution
\citep{ColeRoughgarden14,DhangwatnotaiRoughgardenYan15,making-the-most-of-your-samples,bubeck-et-al-multi-scale-online-learning,AllouahBahamouBesbes22-pricing-with-samples,FengHartlineLi21-revelation-gap-pricing-samples,FuHHK21}. Other works assume knowledge of summary statistics of distributions; see, e.g., \cite{AzarDMW13,AllouahBahamouBesbes-pricing-single-point,Suzdaltsev20-dr-auction,Suzdaltsev-dr-pricing}.

Here, we focus on dominant strategy incentive compatible mechanisms. This is also motivated by the fact that buyers often have limited aggregate information about the market. We refer the reader to \cite{ChungEly07} for a formal justification for this assumption.
  Other works make different assumptions about the buyers. For example, \cite{BoseOzdenorenPape06,ChiesaMicaliZhu15-knightian-vickrey} assume that the buyers have ambiguity sets about the valuations or distributions of other buyers.  


Robustness has  received significant attention in the mechanism design and contracting literatures and we refer the reader to the recent survey  \citep{Carroll19-robust-survey}. We highlight here that a different form of robustness is often also studied: robustness to the information structure such as higher order beliefs and type spaces; see, e.g., \cite{BergemannMorris05-robust-md-ecta}. The concurrent work of \cite{Che22} studies optimal auction design with mean constraints that is robust in this sense.  In particular, he considers the class of \textit{arbitrary} (not necessarily i.i.d.) distributions with bounded support and known marginal means, and the objective is maximin revenue (no benchmark).  Most importantly, the seller is allowed to use \textit{any} mechanism, that is not necessarily DSIC, and the solution concept is Bayes-Nash equilibrium rather than dominant strategy equilibrium. The calculation of Bayes-Nash equilibrium involves not just the valuation distribution itself, but also the \textit{information structure}, i.e., who knows what, and who knows about who knows what, and so on (higher order beliefs). \cite{Che22} assumes that Nature, apart from adversarially selecting a distribution from a distribution class, also adversarially selects an information structure as well. He first restricts the mechanism class to SPA and characterizes an optimal distribution of reserve prices. He then shows that the SPA is optimal among a broader class of mechanisms he calls ``competitive'' but it remains open to show whether a SPA is optimal among all mechanisms. 



Lastly, there are also works on optimal robust mechanisms  for selling multiple goods under various objectives and knowledge assumptions; see, e.g.,  \cite{Carroll17-robust-screening,CheZhong21}.

\section{Problem Formulation}\label{sec:problem-formulation}



The seller wants to sell an indivisible object to one of $n$ buyers. The $n$ buyers have valuations drawn from a joint cumulative distribution $\mathbf{F}$.\footnote{Throughout, we use  boldface to denote joint cumulative distributions or vectors.}  We assume that the seller does not know the distribution of valuations of buyers $\mathbf{F}$, and only knows the upper bound of the valuation of each agent, normalized to $1$. That is, the seller only knows that the  support of the buyers' valuations belongs to $[0,1]^{n}$. The goal of the seller is to design a mechanism that minimizes her worst-case \textit{regret} given the limited information about the underlying distribution of valuations of buyers, where the regret is defined as the difference between  the revenue  the seller could achieve with knowledge of the valuations and the ones actually garnered. With knowledge of the valuations, the maximum revenue achievable is given by the maximum valuation of all the buyers $\max(\mathbf{v}) = \max(v_1,\dots,v_n)$. In other words, the \textit{difference} between the benchmark and the revenue quantifies the \textit{value} of knowing the agents' valuations. 

\paragraph{Seller's Problem.} We model our problem as a game between the seller and Nature, in which the seller first selects a selling mechanism and then Nature may counter such a mechanism with any distribution of buyers’ valuations with the admissible support. Buyers' valuations are then drawn from the distribution chosen by Nature and they participate in the seller's mechanism.

A selling mechanism $m = (\mathbf{x}, \mathbf{p})$ is characterized by an allocation rule $\mathbf{x}$ and a payment rule $\mathbf{p}$, where $\mathbf{x}: \mathbb{R}^n \to [0,1]^n$ and $\mathbf{p}: \mathbb{R}^n \to \mathbb{R}$. Given buyers' valuations $\mathbf{v} \in [0,1]^n$, $x_i(\mathbf{v})$ gives the probability that the item is allocated to buyer $i$, and $p_i(\mathbf{v})$ his expected payment to the seller.
We will restrict attention to dominant strategy incentive compatible (DSIC) direct mechanisms. For such mechanisms,
buyers need not make any assumptions about the underlying distribution of valuations and will find it optimal to report their true valuation, independently of the realization of valuations of the other buyers.
More formally, we require that the mechanism $m = (\mathbf{x}, \mathbf{p})$ satisfies the following constraints:
\begin{align*}
    v_i x_i(v_i, \mathbf{v}_{-i}) - p_i(v_i,\mathbf{v}_{-i}) &\geq 0, \quad \forall i, v_i, \mathbf{v}_{-i} &\text{(IR)}\\
    v_i x_i(v_i, \mathbf{v}_{-i}) - p_i(v_i,\mathbf{v}_{-i}) &\geq v_i x_i(\hat{v}_i, \mathbf{v}_{-i}) - p_i(\hat{v}_i,\mathbf{v}_{-i}) \quad \forall i,v_i,\mathbf{v}_{-i},\hat{v}_i &\text{(IC)} \\
    \sum_{i=1}^{n} x_i(v_i,\mathbf{v}_{-i}) &\leq 1 \quad \forall \mathbf{v} &\text{(AC)}
\end{align*}
The ex-post individual rationality constraint (IR) states that each buyer $i$ is willing to participate because his payoff is at least his outside option, normalized to zero. The ex-post incentive compatibility constraint (IC) states that each buyer $i$ always finds it optimal to report his true valuation, regardless of the valuation of other buyers.   Lastly, the allocation constraint (AC) states that the seller can allocate at most one item. Note that we allow the seller's mechanism to be randomized. 



Throughout this paper, we will focus on the class of all mechanisms satisfying these constraints, which we will call DSIC mechanisms: 
\begin{align}\label{mech-ir-ic-ac}
    \mathcal{M} = \left\{ (\mathbf{x},\mathbf{p}): \text{(IR), (IC), (AC)} \right\}.
\end{align}


Informally, the seller seeks to maximize the expected revenue $\bE_{\mathbf{v} \sim \mathbf{F} } \left[\sum_{i=1}^{n} p_i(\mathbf{v}) \right]$ relative to the benchmark associated with the revenues that could be collected when the valuations of the buyers are known $\bE_{\mathbf{v} \sim \mathbf{F}} \left[ \max(\mathbf{v}) \right]$. We consider the regret objective defined by
\begin{align}
    \text{Regret}(m,\mathbf{F}) &= \bE_{\mathbf{v} \sim \mathbf{F}} \left[ \max(\mathbf{v}) - \sum_{i=1}^{n} p_i(\mathbf{v}) \right] \label{eq:regret} 
\end{align}

After the seller chooses a mechanism $m$, Nature then chooses a distribution $\mathbf{F}$ from a given class of distributions $\mathcal{F}$ such that the valuation of the $n$ agents $\mathbf{v} \in \mathbb{R}_{+}^n$ are drawn from $\mathbf{F}$. The seller aims to select the mechanism $m$ to minimize the \textit{worst-case} regret. Our goal, therefore, is to characterize the minimax regret
\begin{align}
  \mathcal{R}_{n}^*(\mathcal{F}) :=  \inf_{m \in \mathcal{M}} \sup_{\mathbf{F} \in \mathcal{F}} \text{Regret}(m, \mathbf{F}).
\end{align}

The problem is specified by the choice of the class of mechanisms $\mathcal{M}$, and the choice of the class of distributions $\mathcal{F}$. As stated before, we will  consider the class of all DSIC direct mechanisms $\mathcal{M}$ satisfying (IR), (IC), (AC) as defined in (\ref{mech-ir-ic-ac}).




\paragraph{Admissible distributions.} Lastly, we consider the choice of the class of admissible distributions  $\mathcal{F}$. This class can be seen as capturing the ``power'' of Nature: the larger the class, the more powerful/adversarial Nature becomes. To date, when imposing no shape constraints in the distributions, the only available results for this class of problems are for the regret objective and $n=1$ \citep{BergemannSchlag08}, and for arbitrary $n$ when the class $\mathcal{F}$ is the class of arbitrary joint distributions \citep{KocyigitKR20-new}. Interestingly, the minimax regret in the latter case is equal to that when $n=1$. In other words, without further restrictions on its power, Nature  can simply eliminate competition and more buyers do not improve minimax regret against arbitrary distributions. 

In the present work, we  consider a broad set of natural candidates for $\mathcal{F}$ that represent a spectrum of possible levels of power for Nature. In particular, we will consider typical classes considered in the mechanism design literature when the distribution is assumed to be known, ranging from the independently and identically distributed (i.i.d.) valuations case, to various notions of common positive dependence structures (affiliated and exchangeable valuations and mixtures of i.i.d.~valuations), to general exchangeable distributions and general distributions.

\subsection{Problem Formulation Discussion}\label{subsec:intro-model-discuss}

\obedit{Having laid out the problem formulation, we study, we now discuss different features of the formulation}. \obdelete{We give a more extensive discussion in Appendix~\ref{app:sec:model-discuss}.}

\paragraph{Benchmark.} We consider the \textit{first-best} benchmark, the maximum revenue achievable when the \textit{valuations} are known. Another reasonable choice of the benchmark is the \textit{second-best}, the maximum revenue achievable when the \textit{distribution} is known. Both the first-best benchmark and the second-best benchmark are extensively used in the literature, and there is not necessarily a single universal benchmark. The first-best benchmark has been used in the economics, operations, and computer science literatures such as \citep{BergemannSchlag08,CaldenteyLiuLobel17,KocyigitKR20-new,robust-monopoly-regulation,kleinberg-yuan}. The first-best benchmark is also reminiscent of the \textit{offline optimum} benchmark, which is extensively used in the analysis of algorithms \citep{BorodinElYaniv}. An advantage of the first-best benchmark is that it is easily computable and can be evaluated counterfactually (using the reported values when the mechanism is DSIC).

We can show (cf. Proposition~\ref{prop:fb-sb-equiv}) that when the distribution class consists of all distributions, minimax regret against the first-best benchmark and the second-best benchmark are equal, so the choice of the benchmark is immaterial in that case. 

When the distribution class is i.i.d., the choice of the benchmark matters. Extending our work to the characterization of the minimax regret against the second-best benchmark when the distribution class is i.i.d.~is an important direction for future work. In general, it is not clear a priori if it is possible to solve for the optimal minimax mechanism. In this paper, we were able to show that it is possible to do so against the first-best benchmark through a pure saddle point approach. Characterizing a minimax mechanism against the second-best benchmark would require a different approach because a pure saddle point cannot exist;  
we provide a more extensive discussion in Appendix~\ref{app:subsec:model-discuss-benchmark}. Instead, one would need to identify a \textit{mixed} saddle point, which consists of an optimal mechanism and the corresponding \textit{distribution over distributions}, which is, at least at this stage, a much harder object to deal with.

\paragraph{Upper Bound Information.}  \obedit{Throughout the main text, we assume that the values have support included in $[0,1]$.  This is purely a normalization for expositional convenience. If the support is $[0,b]$ for some known upper bound $b$, then the mechanism, distribution, and regret are all scaled by $b$ (we formalize this in Proposition~\ref{prop:upper-bound-b}).}

The motivation for the knowledge of an upper bound on the support  is that we operate in a world with minimal or no data, and significant ``Knightian'' uncertainty. In particular, we think of the upper bound as some ``minimal amount of information'' that can be collected even in the absence of data. Upper bounds can be obtained by consulting experts, using domain knowledge, or common sense. Some examples might be launching a new product, or auctioning rarely traded goods (fine art, collectibles, jewelry). In these contexts, it might be  easier or more intuitive for experts to come up with a range of values than distributional parameters such as the mean of the distribution of values, derived quantities such as the optimal monopoly price, or the exact shape of the distribution. 

We emphasize that the upper bound simply defines an absolute upper bound on the feasible region of the distributions and need not be the exact actual support of the distribution of values. For example, the seller can err on the side of caution and choose a conservative upper bound. Even when the upper bound is ``misspecified,'' the performance degrades gracefully with the error; we provide an extensive discussion and analysis in Appendix~\ref{app:subsec:model-discuss-upper-bound}.

\paragraph{DSIC Mechanism Class.} 
Our work focuses on dominant strategy as a notion of robustness because this notion of robustness protects against uncertainty in both the \textit{valuation distribution} and \textit{bidder behavior}. Under dominant strategy incentive compatibility, each bidder is incentivized to report her value truthfully regardless of all other bidder's behavior. This solution concept is very compelling because the optimal bidder strategy is simple and, therefore, predictable to the seller and bidders alike, and it levels the playing field between bidders with potentially different knowledge and ability to strategize~\citep{level-playing-field}. In contrast, if we further extend our mechanism class to non-DSIC mechanisms, some bidders may behave strategically, and how they behave will depend intricately on the amount of their knowledge and sophistication, which are not necessarily known to the seller. The choice of DSIC mechanisms allows us to avoid having to make strong assumptions about bidder behavior. 

\section{Analysis of Minimax Regret for i.i.d.~
Valuations}\label{sec:minimax-regret-main}











In this section, we focus on the class $\mathcal{F}_{\text{iid}}$ of i.i.d.~valuations with support in $[0,1]^n$, which we can define formally as follows. 

\begin{definition}
The class  $\mathcal{F}_{\text{iid}}$ consists of all distributions 
such that there exists a distribution $F$ with support on $[0,1]$, referred to as the marginal, such that $\mathbf{F}(\mathbf{v}) = \prod_{i=1}^{n} F(v_i)$ for every $\mathbf{v} \in [0,1]^n$.
\end{definition}

Before introducing the main result, we recall the definition of a  second-price auction (SPA) with reserve as such mechanisms will appear in the result.  Let $\mathbf{v} = (v_1,\dots,v_n) \in \mathbb{R}_+^n$ be the valuations of the $n$ agents and $v^{(1)} \geq v^{(2)} \geq \cdots \geq v^{(n)}$ be the order statistics.  A second-price auction with reserve $p$ allocates the item to the highest bid (breaking ties arbitrarily) and charges payment $\max(v^{(2)},p)$ if the highest bid is at least $p$, and otherwise, does not allocate or charge anything. When $p$ is drawn from a distribution, the mechanism is a second-price auction with random reserve price. It is straightforward that such mechanisms are DSIC.

We are now ready to give the statement of our main result that fully characterizes the optimal mechanism and optimal performance of the minimax regret problem.

\begin{theorem}[Optimal mechanism and optimal performance]\label{thm:minimax-lambda-regret-main}
The minimax regret admits as an optimal mechanism a second-price auction with random reserve price with cumulative distribution $\Phi_{n}^*$ on $[r_{n}^*,1]$ given by
\begin{align*}
    \Phi_{n}^*(v) =  \left( \frac{v}{v-r_{n}^*} \right)^{n-1} \log \left( \frac{v}{r_{n}^*} \right) - \sum_{k=1}^{n-1} \frac{1}{k} \left( \frac{v}{v-r_{n}^*} \right)^{n-1-k},
\end{align*}
where $r_{n}^* \in (0,1/n)$ is the unique solution to 
\begin{align*}
    (1-r^*)^{n-1} + \log(r^*) + \sum_{k=1}^{n-1} \frac{(1-r^*)^k}{k} = 0.
\end{align*}
Furthermore, the minimax regret is given by 
\begin{align*}
 \mathcal{R}_{n}^*(\mathcal{F}_{\text{iid}})  =  (1-r_{n}^*)^{n-1} -  \int_{v=r_{n}^*}^{v=1} \left( 1 - \frac{r_{n}^*}{v} \right)^{n-1} dv.
\end{align*}
\end{theorem}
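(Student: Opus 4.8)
The plan is to guess and verify a saddle point $(m^*,\mathbf F^*)$. Take $m^*$ to be the second-price auction with random reserve drawn from the $\Phi_n^*$ in the statement, and $\mathbf F^*$ the i.i.d.\ distribution whose marginal is $F_n^*(v)=0$ for $v<r_n^*$, $F_n^*(v)=1-r_n^*/v$ for $v\in[r_n^*,1)$, with an atom of mass $r_n^*$ at $v=1$. These are the candidates that Propositions~\ref{prop:opt-F-ode} and~\ref{prop:opt-phi-ode} produce from the first-order conditions; solving the ODE of Proposition~\ref{prop:opt-phi-ode} with $\Phi_n^*(r_n^*)=0$ and $\Phi_n^*(1)=1$ yields both the displayed formula for $\Phi_n^*$ and the equation defining $r_n^*$. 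It then suffices to prove the two saddle inequalities $\mathrm{Regret}(m^*,\mathbf F)\le\mathrm{Regret}(m^*,\mathbf F^*)$ for all $\mathbf F\in\mathcal{F}_{\mathrm{iid}}$ and $\mathrm{Regret}(m,\mathbf F^*)\ge\mathrm{Regret}(m^*,\mathbf F^*)$ for all $m\in\mathcal{M}$, since these squeeze the minimax value to $\mathcal{R}_n^*(\mathcal{F}_{\mathrm{iid}})=\mathrm{Regret}(m^*,\mathbf F^*)$. The main working tool is the identity, obtained by splitting the payment of a second-price auction with random reserve $\Phi$ into its second-price and reserve components and integrating by parts,
\begin{equation*}
\mathrm{Regret}(m_\Phi,\mathbf F)=\E_{\mathbf v\sim F^{\otimes n}}\!\left[v^{(1)}\bigl(1-\Phi(v^{(1)})\bigr)+\int_{v^{(2)}}^{v^{(1)}}\Phi(p)\,\diff p\right]=\E[v^{(1)}]-\int\mathrm{Rev}(p;F)\,\diff\Phi(p),
\end{equation*}
where $\mathrm{Rev}(p;F)=p\Pr[v^{(1)}\ge p]+\E[(v^{(2)}-p)^+]$ is the revenue of the fixed-reserve-$p$ auction.

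\textbf{Lower inequality: $m^*$ is optimal against $\mathbf F^*$.} Fixing $\mathbf F^*$, minimizing regret is equivalent to maximizing revenue — a Myerson problem. On $(r_n^*,1)$ we have $f_n^*(v)=r_n^*/v^2$, so the inverse hazard rate $(1-F_n^*(v))/f_n^*(v)$ equals $v$ and the virtual value $\psi(v)=v-(1-F_n^*(v))/f_n^*(v)$ vanishes identically; at the atom $\psi=1$. Hence $F_n^*$ is regular, Myerson's optimal auction always allocates to the highest bidder with the appropriate threshold payment, and its revenue is $\E_{\mathbf F^*}[\psi(v^{(1)})]=1-(1-r_n^*)^n=:c_n$. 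It remains to check that $m^*$ attains $c_n$: since $p(1-F_n^*(p))\equiv r_n^*$ on $(r_n^*,1)$, one gets $\tfrac{\diff}{\diff p}\mathrm{Rev}(p;F_n^*)\propto\tfrac{\diff}{\diff p}\bigl(p(1-F_n^*(p))\bigr)=0$ there, so $\mathrm{Rev}(\cdot;F_n^*)\equiv c_n$ on $[r_n^*,1]\supseteq\operatorname{supp}\Phi_n^*$, and therefore $m^*$ earns exactly $c_n$ against $\mathbf F^*$; this is the lower inequality.

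\textbf{Upper inequality: $m^*$ is robust.} Substituting $\Phi=\Phi_n^*$ into the identity above and interchanging the order of integration reduces $\mathrm{Regret}(m^*,\mathbf F)$, viewed as a function of the survival function $G=1-F$, to an objective of the form $\int_0^1\mathcal L(v,G(v))\,\diff v$ whose integrand is local in $G(v)$, so the maximization can be carried out slice by slice. The stationarity condition $\partial_G\mathcal L(v,G)=0$, specialized to $G(v)=r_n^*/v$, reproduces the ODE that Proposition~\ref{prop:opt-phi-ode} imposes on $\Phi_n^*$; hence $G_n^*(v)=r_n^*/v$ is a stationary point of each slice on $(r_n^*,1)$, while on $[0,r_n^*)$ the slice optimum is the boundary point $G\equiv1$ (i.e.\ $F\equiv0$). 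I expect this to be the main obstacle: $G\mapsto\mathcal L(v,G)$ is not globally concave — a term of the form $1-(1-G)^{n-1}(1+(n-1)G)$ (the second-order-statistic contribution, present only for $n\ge2$) switches from convex to concave at $G=1/(n-1)$ — so one must rule out that some slice is instead maximized at an endpoint. The hypothesis $r_n^*<1/n$ together with $\Phi_n^*(r_n^*)=0$ (so that the potentially convex regime $v<(n-1)r_n^*$ carries vanishing weight near $r_n^*$) is what should make the interior stationary point the global slice maximizer. A more routine side check, needed for the statement, is that the displayed $\Phi_n^*$ really is a c.d.f.\ on $[r_n^*,1]$ (nonnegative, nondecreasing, with the correct endpoint values) and that the displayed equation for $r_n^*$ has a unique root in $(0,1/n)$, which follows from monotonicity of its left-hand side.

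\textbf{Evaluating the minimax value.} The two inequalities give $\mathcal{R}_n^*(\mathcal{F}_{\mathrm{iid}})=\mathrm{Regret}(m^*,\mathbf F^*)=\E_{\mathbf F^*}[v^{(1)}]-c_n$. Elementary integration of $\Pr[v^{(1)}>t]=1-F_n^*(t)^n$ gives $\E_{\mathbf F^*}[v^{(1)}]=1-\int_{r_n^*}^1(1-r_n^*/v)^n\,\diff v$, so $\mathcal{R}_n^*(\mathcal{F}_{\mathrm{iid}})=(1-r_n^*)^n-\int_{r_n^*}^1(1-r_n^*/v)^n\,\diff v$; equivalently, since $\Pr[\text{exactly one }v_i\ge t]=n(1-F_n^*(t))F_n^*(t)^{n-1}$, it also equals $\int_{r_n^*}^1 n\frac{r_n^*}{v}(1-r_n^*/v)^{n-1}\,\diff v$. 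Finally, the substitution $u=1-r_n^*/v$ together with the partial-fraction identity $u^{n-1}/(1-u)=1/(1-u)-\sum_{j=0}^{n-2}u^j$ shows that the equation defining $r_n^*$ is equivalent to $\int_{r_n^*}^1\frac1v(1-r_n^*/v)^{n-1}\,\diff v=(1-r_n^*)^{n-1}$; using this identity and elementary manipulations, the value collapses to the closed form $(1-r_n^*)^{n-1}-\int_{r_n^*}^1(1-r_n^*/v)^{n-1}\,\diff v$ stated in the theorem (which, one may note, also equals the simpler $n\,r_n^*(1-r_n^*)^{n-1}$). The case $n=1$ gives $r_1^*=1/e$ and minimax regret $1/e$, recovering \citet{BergemannSchlag08}.
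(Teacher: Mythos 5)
Your overall architecture matches the paper's: guess the saddle point $(m^*,\mathbf F^*)$ with $m^*=\mathrm{SPA}(\Phi_n^*)$ and $\mathbf F^*$ the i.i.d.\ isorevenue distribution with an atom at $1$, verify the two saddle inequalities, and evaluate the value. Your endgame algebra is correct (the substitution $u=1-r_n^*/v$ does show the defining equation of $r_n^*$ is exactly $\int_{r_n^*}^1 v^{-1}(1-r_n^*/v)^{n-1}\,dv=(1-r_n^*)^{n-1}$, which reconciles $\mathbb{E}[v^{(1)}]-c_n$ with the stated closed form), and your seller-side argument via vanishing virtual values plus constancy of $\mathrm{Rev}(\cdot;F_n^*)$ on $[r_n^*,1]$ is a legitimate variant of the paper's first-principles Myerson computation — though you should either justify the virtual-value revenue formula in the presence of the atom at $v=1$ or derive it directly from the payment identity, as the paper does precisely because the standard regular-density statements do not apply verbatim.

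The genuine gap is in the Nature-side inequality $\mathrm{Regret}(m^*,\mathbf F)\le \mathrm{Regret}(m^*,\mathbf F^*)$, which you explicitly leave unresolved: you flag the non-concavity of the slice objective as "the main obstacle" and conjecture that $r_n^*<1/n$ plus the vanishing weight of $\Phi_n^*$ near $r_n^*$ rescues the interior stationary point. That is not how this resolves, and a "vanishing weight" argument would at best control an integral, not deliver the pointwise slice optimality you set up. The correct and complete resolution is a short computation: after dropping the nonpositive term $-\int_0^{r_n^*}F(v)^n\,dv$ and substituting the ODE $(\Phi_n^*)'(v)+\tfrac{(n-1)r_n^*}{v(v-r_n^*)}\Phi_n^*(v)=\tfrac1v$ into the (Regret-$F$) integrand, each slice becomes $\Phi_n^*(v)\,g_v(F(v))$ with
\begin{equation*}
g_v(z)=\frac{nr_n^*-v}{v-r_n^*}+nz^{n-1}-\frac{(n-1)v}{v-r_n^*}\,z^n,
\qquad
g_v'(z)=\frac{n(n-1)z^{n-2}\left(v-r_n^*-vz\right)}{v-r_n^*},
\end{equation*}
so $g_v$ increases on $[0,1-r_n^*/v]$ and decreases thereafter: it is unimodal on $[0,1]$ with global maximum exactly at $z=F_n^*(v)=1-r_n^*/v$, for every $v\in(r_n^*,1)$, with no endpoint analysis, no use of $r_n^*<1/n$, and no weighting argument needed (concavity is irrelevant; a single sign change of the derivative suffices). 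Since $\Phi_n^*(v)\ge 0$, integrating the pointwise bound gives $R_n(\Phi_n^*,\mathbf F)\le R_n(\Phi_n^*,\mathbf F_n^*)$. Without this (or an equivalent) verification, one of the two pillars of the theorem is missing, so the proposal as written is incomplete even though the plan and the candidates are the right ones.
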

This result provides, in quasi-closed form, a minimax regret optimal mechanism for any number of buyers when the valuations are i.i.d., and the corresponding optimal performance. Quite remarkably, the minimax optimal mechanism admits a simple structural form: a second-price auction with reserve, but the minimax optimal reserve price is random. We present and discuss these in detail, together with more intuition on the structure of the optimal mechanism, in Section \ref{sec:char-regret-ratio}.


\paragraph{Remark.} As we will see in the proof, the worst-case distribution against the optimal mechanism  above happens to have non-decreasing virtual values on the interior of its support, with a mass at the maximum of its support. In turn, the result above implies that we have also identified the minimax regret when restricting attention to regular distributions with finite support. 

\paragraph{Remark on the case $n=1$.} An important corollary is the $n = 1$ agent case, where we get the results of \cite{BergemannSchlag08} associated with support $[0,1]$ as a special case. 
\begin{corollary}\label{cor:lambda-n-1}
Suppose $n=1$. We have $r_{1}^* = 1/e$ and $\Phi_{1}^*(v) =  \log(v) + 1$ for $v \in \left[1/e, 1 \right] $. The minimax regret is $1/e$.
\end{corollary}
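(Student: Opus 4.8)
The plan is simply to specialize Theorem~\ref{thm:minimax-lambda-regret-main} to $n=1$, since the corollary is exactly that instantiation; the only work is to verify that each quasi-closed-form expression collapses to the claimed elementary form under the standard conventions that an empty sum equals $0$ and a $0$-th power equals $1$.

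First I would pin down $r_1^*$. Setting $n=1$ in the defining equation $(1-r_n^*)^{n-1} + \log(r_n^*) + \sum_{k=1}^{n-1} (1-r_n^*)^k/k = 0$, the power $(1-r_1^*)^{0}$ equals $1$ and the sum $\sum_{k=1}^{0}(\cdot)$ is empty, so the equation reduces to $1 + \log(r_1^*) = 0$, i.e.\ $r_1^* = 1/e$. Uniqueness of the solution is immediate from strict monotonicity of $\log$, and $1/e \in (0,1) = (0,1/n)$, consistent with the range asserted in Theorem~\ref{thm:minimax-lambda-regret-main}.

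Next I would substitute $n=1$ and $r_1^* = 1/e$ into the reserve-price c.d.f.\ $\Phi_n^*$. Again the leading factor $\left(v/(v-r_1^*)\right)^{0}$ equals $1$ and the subtracted sum is empty, leaving $\Phi_1^*(v) = \log(v/r_1^*) = \log v - \log(1/e) = \log v + 1$ on $[r_1^*,1] = [1/e,1]$; one checks $\Phi_1^*(1/e)=0$ and $\Phi_1^*(1)=1$, so this is a valid distribution supported on $[1/e,1]$. For the minimax value I would plug $n=1$ into $\mathcal{R}_n^*(\mathcal{F}_{\textnormal{iid}}) = (1-r_n^*)^{n-1} - \int_{r_n^*}^{1} (1 - r_n^*/v)^{n-1}\,dv$: the first term is $1$ and the integrand is identically $1$, so $\mathcal{R}_1^*(\mathcal{F}_{\textnormal{iid}}) = 1 - (1 - 1/e) = 1/e$, recovering \cite{BergemannSchlag08}.

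There is no genuine obstacle here — the statement is a pure corollary of Theorem~\ref{thm:minimax-lambda-regret-main}; the only thing demanding (minimal) care is the handling of the degenerate sum and $0$-th power terms when $n=1$, and the bookkeeping check that $r_1^* = 1/e$ lies in the interval $(0,1/n)$ on which the theorem is stated. If a self-contained derivation were desired instead, one could rerun the saddle-point argument behind Propositions~\ref{prop:opt-F-ode}--\ref{prop:opt-phi-ode} in the single-buyer (pricing) case, but this merely reproduces a special case already subsumed by the theorem.
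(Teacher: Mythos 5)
Your proposal is correct and matches the paper's treatment: the corollary is stated as an immediate specialization of Theorem~\ref{thm:minimax-lambda-regret-main} to $n=1$, with exactly the bookkeeping you carry out (empty sums, zeroth powers, and the check that the regret formula collapses to $1-(1-1/e)=1/e$). No further argument is needed.
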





\subsection{Key Ideas for the Proof of Theorem \ref{thm:minimax-lambda-regret-main}} 

Below, we detail the key ideas underlying the derivation of a candidate minimax regret optimal mechanism. The proof is presented in Section \ref{subsec:proof-main-theorem}. 


The proof relies on a saddle point argument. Formally, a saddle point is defined as follows. 
\begin{definition}[Saddle Point]
$(m^*,\mathbf{F}^*)$ is the saddle point of $R_{n}(m,\mathbf{F})$ if 
\begin{align*}
    R_{n}(m^*,\mathbf{F}) \leq R_{n}(m^*,\mathbf{F}^*) \leq R_{n}(m,\mathbf{F}^*) \qquad \mbox{for all $m \in \mathcal{M},\mathbf{F} \in \mathcal{F}$}
\end{align*}
\end{definition}
Note that if $(m^*,\mathbf{F}^*)$ is a saddle point then $\inf_{m} \sup_{\mathbf{F}} R_{n}(m,\mathbf{F}) \leq \sup_{\mathbf{F}} R_{n}(m^*,\mathbf{F}) \leq R_{n}(m^*,\mathbf{F}^*) \leq \inf_{m} R_{n}(m,\mathbf{F}^*) \leq \inf_{m} \sup_{\mathbf{F}} R_{n}(m,\mathbf{F})$, which allows us to conclude that $R_{n}(m^*,\mathbf{F}^*) = \inf_{m} \sup_{\mathbf{F}} R_{n}(m,\mathbf{F})$. So to solve the minimax problem, it is sufficient to exhibit a saddle point and prove the corresponding saddle inequalities. Existence of a saddle point implies that the simultaneous-move game between the seller and Nature in which the seller chooses a mechanism and Nature a distribution admits a Nash equilibrium.

Our approach relies on two main parts: i.) a principled method to pin down (under additional assumptions) what the structure of $m^*$ and $\mathbf{F}^*$ should be, and then ii.) verify formally that the pair $(m^*,\mathbf{F}^*)$ derived earlier is indeed a saddle point. 
 
 Our approach for i.) relies on two main steps. We first conjecture that $m^*$ is a second-price auction (SPA) with random reserve drawn from a distribution $\Phi^*$ which admits a density. Then, we provide an explicit formula for the  regret  $R_{n}(\Phi,\mathbf{F})$. In particular, we show that this regret admits two possible representations:
\begin{itemize}
    \item An expression that involves only the marginal $F(v)$ and not its derivative. This expression will appear in \eqref{eq:regret-F} in Proposition \ref{prop:regret-expressions-general}.
    \item  An expression that only depends on $\Phi$ and not its derivative, and is explicitly linear in $\Phi$. This expression will appear in \eqref{eq:regret-linear-phi} in Proposition \ref{prop:regret-expressions-general}. 
\end{itemize}


Based on these two alternative representations,  we heuristically derive a candidate saddle point using an ``inverse'' optimization approach. 

To do so,  first, we use the seller's first-order conditions for an optimal distribution of reserve prices $\Phi^*$ to derive a worst-case distribution $\mathbf{F}^*$. In particular, because the regret can be seen as being linear in $\Phi$ (cf. \eqref{eq:regret-linear-phi}), if we plug-in the (unknown) worst-case distribution $\mathbf{F}^*$, the first-order conditions for the seller (under some conditions) dictate that the coefficient of each $\Phi^*(v)$ should be zero whenever the optimal distribution of reserve prices $\Phi^*$ is interior. Because this coefficient only depends on the distribution, this gives an equation that $F^*$ should satisfy, pinning the value of a candidate $\mathbf{F}^*$. 

Second, we use Nature's optimality conditions for the worst-case distribution $\mathbf{F}^*$ to pin down the candidate optimal distribution of reserves $\Phi^*$. The crucial ansatz we make is that both distributions have the same support $[r^*,1]$. We require that the distribution $\mathbf{F}^*$ we derived is actually worst-case optimal for Nature. Because the regret expression in \eqref{eq:regret-F} only involves the marginal distribution, fixing $\Phi^*$, we show that pointwise optimization leads  to simple optimality conditions for the worst-case distribution $F^*$. By requiring that the worst-case distribution is $F^*$, we obtain an ODE equation that pins down the  distribution of reserve prices $\Phi^*$. Lastly, we show that a unique $r^*$ is possible. In Appendix~\ref{app:sec:derive-candidates}, we give more details on our derivation of candidate mechanisms and distributions in the saddle point.

Finally, once we have the candidate $m^* = \text{SPA}(\Phi^*)$ and $\mathbf{F}^*$, we formally verify the saddle inequalities in Section~\ref{subsec:proof-main-theorem}. Importantly, the saddle verification does not require any of the additional assumptions that we made to obtain the candidate saddle point and so the result yields the actual minimax optimal mechanism (in the space of all DSIC mechanisms) and its performance.

\subsection{Regret of a Second-Price Auction with Random Reserve}\label{subsec:regret-spa-random-reserve}


\obcomment{Why don't we combine 3.1 and 3.2 with new changes?}

Consider a  second-price auction with random reserve price distribution $\Phi$. In such an auction,  first, all $n$ agents submit their bids $\mathbf{v} \in \mathbb{R}_+^n$, then we draw the random reserve price $p$ from the distribution $\Phi$ and finally the seller runs the second-price auction with reserve $p$ to determine the allocation and payment. We denote such a mechanism as $\text{SPA}(\Phi)$. The expected revenue of $\text{SPA}(\Phi)$ is $\bE_{p \sim \Phi} \left[ \max(v^{(2)},p) \1(v^{(1)} \geq p) \right]$. As noted earlier,  for any $\Phi$, $\text{SPA}(\Phi) \in \mathcal{M}$: the second-price auction with random reserve mechanism is DSIC. 

We will now derive the regret of $\text{SPA}(\Phi)$ against the distribution $\mathbf{F}$, denoted $R_{n}(\Phi,\mathbf{F})$. We will further assume that $\Phi$ has a density and is supported on $[r^*,1]$ because that is all we need for the proof. We will first derive the (Regret-$\mathbf{F}$) expression that is valid for any joint distribution $\mathbf{F}$. Then we will specialize to the case when $\mathbf{F}$ is i.i.d., and we get the \eqref{eq:regret-F} expression and an alternative \eqref{eq:regret-linear-phi} expression.

\begin{proposition}\label{prop:regret-expressions-general}

Assume that $\Phi$ has a density and is supported on $[r^*,1]$. Then the regret of $\textnormal{SPA}(\Phi)$ under distribution $F$, denoted $R_{n}(\Phi,F)$, is given by
\begin{align*}
    R_{n}(\Phi,F) &= r^* -  \int_{v=0}^{v=r^*} \mathbf{F}_n^{(1)}(v) dv \nonumber\\
    &+ \int_{v=r^*}^{v=1} (1 -v \Phi'(v) - \Phi(v) ) (1-\mathbf{F}_n^{(1)}(v) ) dv +  \Phi(v) ( \mathbf{F}_n^{(2)}(v) - \mathbf{F}_n^{(1)}(v) ) dv \tag{Regret-$\mathbf{F}$}
\end{align*}
where $\mathbf{F}_n^{(1)}(p) = \Pr_{\mathbf{F}} \left( v^{(1)} \leq p \right)$ and $\mathbf{F}_n^{(2)}(p) = \Pr_{\mathbf{F}} \left( v^{(2)} \leq p \right)$ are the CDFs of the first- and second-highest order statistics, respectively, and the subscript $n$ is to explicitly note that we compute the probability over $n$ agents. 

If we assume that $\mathbf{F}$ is i.i.d.~and write $F(\cdot)$ for each agent's marginal CDF, then
\begin{align} 
    R_{n}(\Phi,\mathbf{F}) &= r^*   - \int_{v=0}^{v=r^*} F(v)^n dv \nonumber\\
    & + \int_{v=r^*}^{v=1} (1 -v \Phi'(v) - \Phi(v) ) (1-F(v)^n ) +  \Phi(v) n F(v)^{n-1}(1-F(v)) dv \tag{Regret-$F$} \label{eq:regret-F}
\end{align}

Furthermore, if we assume that $F$ is supported on $[r^*,1]$ and $F'(v)$ exists on $(r^*,1)$, then
\begin{align*} 
    R_{n}(\Phi, \mathbf{F}) &= r^* -  \int_{v=0}^{v=r^*} F(v)^n dv \nonumber \\
    & + \int_{v=r^*}^{v=1} (1-F(v)^n)  + nF(v)^{n-1}(1-F(v)-vF'(v))\Phi(v) dv \tag{\text{Regret}-$\Phi$} \label{eq:regret-linear-phi}
\end{align*}

\end{proposition}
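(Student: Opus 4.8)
The plan is to compute $R_n(\Phi,\mathbf{F})$ straight from the definition $R_n(\Phi,\mathbf{F})=\bE_{\mathbf{v}\sim\mathbf{F}}[v^{(1)}]-\bE_{\mathbf{v}\sim\mathbf{F}}\bE_{p\sim\Phi}\big[\max(v^{(2)},p)\,\1(v^{(1)}\ge p)\big]$, handling the benchmark term and the revenue term separately. For the benchmark I would use the tail formula for a bounded nonnegative random variable, $\bE[v^{(1)}]=\int_0^1\big(1-\mathbf{F}_n^{(1)}(v)\big)\,dv$, and split it at $r^*$; since $\int_0^{r^*}\big(1-\mathbf{F}_n^{(1)}(v)\big)\,dv=r^*-\int_0^{r^*}\mathbf{F}_n^{(1)}(v)\,dv$, this already produces the leading line of (Regret-$\mathbf{F}$). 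For the revenue the workhorse is the pointwise identity, valid for every fixed $\mathbf{v}$ and reserve $p$, $\max(v^{(2)},p)\,\1(v^{(1)}\ge p)=v^{(2)}\,\1(p\le v^{(2)})+p\,\1(v^{(2)}<p\le v^{(1)})$, which splits the high bidder's payment into the regime where the reserve does not bind and the one where it does. Integrating first over $p\sim\Phi$ (density $\Phi'$ on $[r^*,1]$, with $\Phi(r^*)=0$) gives $v^{(2)}\Phi(v^{(2)})+\int_{v^{(2)}}^{v^{(1)}}v\,\Phi'(v)\,dv$; then taking expectation over $\mathbf{v}$ and using Fubini (everything is bounded), I would rewrite $\bE[v^{(2)}\Phi(v^{(2)})]=\int_{r^*}^1\big(\Phi(v)+v\Phi'(v)\big)\big(1-\mathbf{F}_n^{(2)}(v)\big)\,dv$ by a tail argument applied to the nondecreasing map $s\mapsto s\Phi(s)$, and the second piece as $\int_{r^*}^1 v\Phi'(v)\big(\mathbf{F}_n^{(2)}(v)-\mathbf{F}_n^{(1)}(v)\big)\,dv$, using $\Pr(v^{(2)}<v\le v^{(1)})=\mathbf{F}_n^{(2)}(v)-\mathbf{F}_n^{(1)}(v)$ for a.e.\ $v$. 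Subtracting the revenue from the benchmark and regrouping the integrands on $[r^*,1]$ yields (Regret-$\mathbf{F}$).

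The i.i.d.\ formula (Regret-$F$) is then immediate: substitute the order-statistic CDFs $\mathbf{F}_n^{(1)}(v)=F(v)^n$ and $\mathbf{F}_n^{(2)}(v)=F(v)^n+nF(v)^{n-1}(1-F(v))$, so that $\mathbf{F}_n^{(2)}(v)-\mathbf{F}_n^{(1)}(v)=nF(v)^{n-1}(1-F(v))$, into (Regret-$\mathbf{F}$). For the linear-in-$\Phi$ formula (Regret-$\Phi$) I would start from (Regret-$F$), use the support hypothesis on $F$ to drop $\int_0^{r^*}F(v)^n\,dv=0$ and collect the $\Phi$-free constants into $1-\int_{r^*}^1 F(v)^n\,dv$, and then integrate the only surviving $\Phi'$ term, $\int_{r^*}^1 v\Phi'(v)\big(1-F(v)^n\big)\,dv$, by parts against $\Phi$, differentiating $u(v)=v\big(1-F(v)^n\big)$ (whose derivative on $(r^*,1)$ is $(1-F(v)^n)-nvF(v)^{n-1}F'(v)$). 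In the outcome, the $+\int_{r^*}^1\Phi(v)(1-F(v)^n)\,dv$ produced by the integration by parts cancels the $-\int_{r^*}^1\Phi(v)(1-F(v)^n)\,dv$ already present, the remaining $-\int_{r^*}^1\Phi(v)\,nvF(v)^{n-1}F'(v)\,dv$ combines with the existing $+\int_{r^*}^1\Phi(v)\,nF(v)^{n-1}(1-F(v))\,dv$ to form the coefficient $nF(v)^{n-1}(1-F(v)-vF'(v))$ on $\Phi(v)$, the boundary term at $v=r^*$ vanishes because $\Phi(r^*)=0$, and the boundary term at $v=1$ (entering with a minus sign) equals $-(1-(1-f_1)^n)\Phi(1)$ because the atom $f_1$ forces $F(1^-)=1-f_1$. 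Collecting everything gives exactly (Regret-$\Phi$).

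The routine parts are these tail formulas and the algebraic regrouping of integrands; the step that needs genuine care, and the one I expect to be the main obstacle to getting the constants exactly right, is the bookkeeping of atoms and one-sided limits. Specifically: (a) $\Pr(v^{(2)}<v\le v^{(1)})=\mathbf{F}_n^{(2)}(v)-\mathbf{F}_n^{(1)}(v)$ (and the replacement of $\mathbf{F}_n^{(j)}(v^-)$ by $\mathbf{F}_n^{(j)}(v)$) holds only off the countable, hence Lebesgue-null, set of atoms, which is harmless since these appear only under an integral against the absolutely continuous measure $\Phi'(v)\,dv$; (b) in the last step $u(v)=v(1-F(v)^n)$ inherits the jump of $F$ at $v=1$, so the integration by parts must be read on $[r^*,1)$ with the left limit $u(1^-)=1-(1-f_1)^n$, and this jump is precisely why the boundary factor is $1-(1-f_1)^n$ rather than the $0$ one would get naively from $F(1)=1$; and (c) one should observe that ``$\Phi$ has a density and is supported on $[r^*,1]$'' forces $\Phi(r^*)=0$, which is what kills the lower boundary term. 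None of these is deep, but handling them correctly is what turns the three identities from formal manipulations into exact statements.
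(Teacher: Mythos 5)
Your proposal is correct and follows essentially the same route as the paper: a direct computation of benchmark minus revenue via Fubini/tail-sum formulas plus a single integration by parts, with the same careful bookkeeping of the atom at $v=1$ (giving the factor $1-(1-f_1)^n$) and of $\Phi(r^*)=0$ killing the lower boundary term. The only cosmetic difference is that the paper first integrates by parts in the reserve variable $p$ to write the pointwise regret as $v^{(1)}-v^{(1)}\Phi(v^{(1)})+\int_{v^{(2)}}^{v^{(1)}}\Phi(p)\,dp$ and then takes expectations through an auxiliary lemma on $\int h\,dG$, whereas you keep $\Phi'$ in the revenue and apply the tail formula to $s\mapsto s\Phi(s)$; these are rearrangements of the same calculation and yield identical integrands.
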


\subsection{Proof of Theorem \ref{thm:minimax-lambda-regret-main} }\label{subsec:proof-main-theorem}

We are now ready to prove our main result, Theorem~\ref{thm:minimax-lambda-regret-main}.


\begin{proof}[Proof of Theorem~\ref{thm:minimax-lambda-regret-main}]

    We define $\mathbf{F}_{n}^* = (F_{n}^*)^n$ to be the  i.i.d.~distribution over $n$ valuations with marginal  $F_{n}^*$, and $F_{n}^*$ is the isorevenue distribution starting at $r_{n}^*$ as given by $F_{n}^*(v) = 1-r_{n}^*/v$ for $v \in [r_{n}^*,1)$ and $F_{n}^*(1) = 1$. We also define the optimal mechanism $m_{n}^* = \text{SPA}(\Phi_{n}^*)$. 

\paragraph{Step 1 (Nature's optimality).} We will first prove that $R_{n}(m_{n}^*,\mathbf{F}) \leq R_{n}(m_{n}^*, \mathbf{F}_{n}^*)$, or equivalently, $R_{n}(\Phi_{n}^*,\mathbf{F}) \leq R_{n}(\Phi_{n}^*,\mathbf{F}_{n}^*)$. Note that $\Phi_{n}^*$ has a density, and is supported on $[r_{n}^*,1]$, so $R(\Phi_{n}^*,\mathbf{F})$ is given by the (Regret-$F$) expression in Proposition~\ref{prop:regret-expressions-general}. Because $\int_{v=0}^{v=r_{n}^*} F(v)^n \geq 0$, we have
\begin{align*}
    R_{n}(\Phi_{n}^*,\mathbf{F}) &\leq  r_{n}^*  +  \int_{v=r_{n}^*}^{v=1} (1 -\Phi_{n}^*(v) - v (\Phi_{n}^*)'(v))(1-F(v)^n) + n F(v)^{n-1} (1-F(v)) \Phi_{n}^*(v) dv\\
    &= r_{n}^*  +  \int_{v=r_{n}^*}^{v=1} \left\{  \frac{n r_{n}^* - v}{v - r_{n}^*} + n F(v)^{n-1} - \frac{(n-1)v}{v-r_{n}^*} F(v)^{n} \right\} \Phi_{n}^*(v) dv \\
    &\leq r_{n}^*  +  \int_{v=r_{n}^*}^{v=1} \sup_{z \in [0,1]} \left\{ \frac{n r_{n}^* - v}{v - r_{n}^*} + n z^{n-1} - \frac{(n-1)v}{v-r_{n}^*} z^{n} \right\} \Phi_{n}^*(v) dv
\end{align*}
where the equality in the second line follows from \obedit{from the fact that satisfies the ordinary differential equation \eqref{eq:ODE-Phi}: $(\Phi_{n}^*)'(v) + \frac{(n-1) r_{n}^*}{v(v-r_{n}^*)} \Phi_{n}^*(v) = \frac{1}{v}$; this fact is established in Proposition \ref{prop:ode-phi-solution}. Finally,  the last inequality is a pointwise bound. }
\obcomment{SHOULD WE STATE THIS PROPOSITION HERE? }

The derivative of the expression in inside the supremum with respect to $z$ is $n(n-1) z^{n-2} (v-r_{n}^*-vz )/(v-r_{n}^*)$ which is less than or equal to zero for $z \leq 1-r_{n}^*/v$ and is greater than or equal to zero for $z \geq 1-r_{n}^*/v$. Therefore, it is maximized at $z = F_{n}^*(v) = 1 - r_{n}^*/v$. As a result,
\begin{align*}
    R_{n}(\Phi_{n}^*,\mathbf{F})
    &\leq r_{n}^*  +  \int_{v=r_{n}^*}^{v=1}  \left\{ \frac{n r_{n}^* - v}{v - r_{n}^*} + n F_{n}^*(v)^{n-1} - \frac{(n-1)v}{v-r_{n}^*} F_{n}^*(v)^{n} \right\} \Phi_{n}^*(v) dv = R_{n}(\Phi_{n}^*,\mathbf{F}_{n}^*).
\end{align*}

\paragraph{Step 2 (Seller's optimality).}    Now we want to show that $R_{n}(m_{n}^*,\mathbf{F}_{n}^*) \leq R_{n}(m,\mathbf{F}_{n}^*)$. That is, if we fix the distribution to be $\mathbf{F}_{n}^*$, then $m^*$ achieves the lowest regret among all mechanisms. We note that $R_{n}(m,\mathbf{F}_{n}^*) = \bE_{\mathbf{v} \in \mathbf{F}_{n}^*} \left[  \max(\mathbf{v}) \right] - \bE_{\mathbf{v} \in \mathbf{F}_{n}^*} \left[ \sum_{i=1}^{n} p_i(\mathbf{v}) \right]$. The first term does not depend on $m$ and the second term is the expected revenue under $\mathbf{F}_{n}^*$. Therefore, we want to show that $m$ maximizes the expected revenue under $\mathbf{F}_{n}^*$.   Our calculation is analogous to \cite{Myerson81} but we cannot use the standard results directly because $\mathbf{F}_{n}^*$ is a mixture of a point mass and an absolutely continuous distribution with density, so we will work out the expected revenue expression from first principles. The details are standard (because it is a Bayesian mechanism design problem) and are provided in Appendix~\ref{app:sec:minimax-regret-main} for completeness.
\end{proof}

\section{Minimax Regret under Various Families with Dependent Valuations}\label{sec:extend-other-F}

So far, we have characterized the minimax regret $ \inf_{m \in \mathcal{M}} \sup_{\mathbf{F} \in \mathcal{F}} R_{n}(m,\mathbf{F})$ only for the class $\mathcal{F}_{\text{iid}}$ of $n$ i.i.d.~agents. 
In this section, we extend our framework by considering other choices of the class of distribution $\mathcal{F}$ that capture different notions of dependence between agents' valuations.

\subsection{Exchangeable and Affiliated Distributions}\label{subsec:exc-aff-dists}

In this section, we analyze the class of exchangeable and affiliated distributions (referred to as multivariate total positivity of order 2, or $\text{MTP}_2$, in statistics). The notion of affiliation is a notion of positive dependence that was introduced to the mechanism design literature by \cite{MilgromWeber82} and has since become standard.  We restrict attention to distributions that are symmetric as, otherwise, Nature can eliminate competition by setting the valuations of all but one bidder to zero. 


\begin{definition}
A distribution $\mathbf{F}$ over $n$ real valuations is \emph{exchangeable} if it has the same joint distribution under any ordering of those valuations. Formally, a vector of random variables $\mathbf{v} = (v_1,\dots,v_n)$ is exchangeable if for any permutation $\sigma: \{1,2,\dots,n\} \to \{1,2,\dots,n\}$, the random variables $(v_{\sigma(1)},v_{\sigma(2)},\dots,v_{\sigma(n)})$ and $(v_1,\dots,v_n)$ have the same joint distributions. A joint distribution $\mathbf{F}$ is exchangeable if the corresponding random variable is exchangeable.
\end{definition} 

To give a formal definition of affiliation, we need a few auxillary definitions.\footnote{We follow the treatment in the appendix of \cite{MilgromWeber82}.} First, a subset $A$ of $\mathbb{R}^n$ is called \textit{increasing} if its indicator function $\bold{1}_{A}$ is nondecreasing. Second, a subset $S$ of $\mathbb{R}^n$ is a \textit{sublattice} if its indicator function $\mathbf{1}_S$ is affiliated, i.e., for any  $\mathbf{v},\mathbf{v}' \in S$ we have $\mathbf{v} \land \mathbf{v}' \in S$ and $\mathbf{v} \lor \mathbf{v}' \in S$ where $\mathbf{v} \land \mathbf{v}' = (\min(v_1,v_1'),\dots,\min(v_n,v_n'))$ and $\mathbf{v} \lor \mathbf{v}' = (\max(v_1,v_1'),\dots,\max(v_n,v_n'))$ denote the component-wise minimum and maximum of $\mathbf{v}$ and $\mathbf{v}'$, respectively. For a distribution $\mathbf{F}$ and sets $A$ and $S$, we write $\Pr_{\mathbf{F}}(A|S)$ as the probability that a random variable drawn from the conditional distribution $\mathbf{F}$ restricted to $S$ is in $A$.


We can now give the definition of affiliation.

\begin{definition}
A distribution $\mathbf{F}$ is \textit{affiliated} if for all increasing sets $A$ and $B$ and every sublattice $S$, $\Pr_{\mathbf{F}}(A \cap B | S) \geq \Pr_{\mathbf{F}}(A|S) \Pr_{\mathbf{F}}(B|S)$.
\end{definition}

The following property of affiliation is well-known (see e.g. Theorem 24 of \cite{MilgromWeber82}) and we will use it throughout. Assume that the distribution $\mathbf{F}$ has a density $f$. $\mathbf{F}$ is affiliated if and only if $f$ satisfies the affiliation inequality $f(\mathbf{v}) f(\mathbf{v}') \leq f(\mathbf{v} \land \mathbf{v}') f(\mathbf{v} \lor \mathbf{v}')$ for almost every $\mathbf{v}, \mathbf{v}'$. If $\mathbf{F}$ is a discrete distribution, the same relationship holds when we interpret $f$ as the probability mass function.



We denote by $\mathcal{F}_{\text{aff}}$ the set of all exchangeable and affiliated distributions with support $[0,1]^n$. Clearly $\mathcal{F}_{\text{iid}} \subset \mathcal{F}_{\text{aff}}$ and hence $\mathcal{R}_{n}^*(\mathcal{F}_{\text{aff}})  \ge \mathcal{R}_{n}^*(\mathcal{F}_{\text{iid}})$. We are now ready to state our main result of this section.






\begin{proposition}\label{prop:minimax-regret-aff}
Let $m_{n}^*$ be a SPA with reserve CDF $\Phi_{n}^*$ with the corresponding $r_{n}^*$ as given by Theorem~\ref{thm:minimax-lambda-regret-main}, and $F_{n}^*$ be the $n$ i.i.d.~isorevenue distributions with $r_{n}^*$ as given in \eqref{eq:isorev}.\obcomment{This is now in the appendix} Then for any exchangeable and affiliated distribution $\mathbf{F}$ over $n$ agents and any mechanism $m$, $R_{n}(m_{n}^*,\mathbf{F}) \leq R_{n}(m_{n}^*,\mathbf{F}_{n}^*) \leq R_{n}(m, \mathbf{F}_{n}^*)$. Therefore, $\mathcal{R}_{n}^*(\mathcal{F}_{\emph{aff}})  = \mathcal{R}_{n}^*(\mathcal{F}_{\emph{iid}})$.
\end{proposition}

Before we proceed to the formal proof, we give some high-level idea on the forces behind the result. Fixing the marginal distributions, increasing dependence should increase revenue of a SPA because the second highest valuation would tend to be closer to the highest valuation. One also needs to understand the impact of increasing dependence on the benchmark.  For affiliated distributions, we show in Lemma~\ref{lem:F-n-key} below that  the distribution of the first-order statistic is first-order dominated by the maximum of $n$ i.i.d.~copies.  Our result uses this to establish that the worst-case distribution of Nature is i.i.d.

\begin{proof}[Proof of Proposition~\ref{prop:minimax-regret-aff}]
We have already shown in the second part of the proof of Theorem~\ref{thm:minimax-lambda-regret-main} that $R_n(m_n^*,\mathbf{F}_n^*) \leq R_n(m, \mathbf{F}_n^*)$, so we only need to show the first part, which is $R_n(\Phi_{n}^*,\mathbf{F}) \leq R_{n}(\Phi_{n}^*,\mathbf{F}_n^*)$.  In the proof, we will work with order statistics of the distribution over a subset of agents, defined as follows. 
\begin{definition}
Let $\mathbf{F}$ be an exchangeable distribution and $1 \leq k \leq n$. Any restriction of $\mathbf{F}$ to $k$ valuations leads to the same distribution regardless of which $k$ specific valuations we choose. We can then define $\mathbf{F}_{k}^{(j)}$ to be the distribution of the $j$-th highest order statistics under such a restricted distribution. For example, $\mathbf{F}_{k}^{(1)}(p) = \Pr_{\mathbf{v} \sim \mathbf{F}} \left( \max(v_1,v_2,\dots,v_k) \leq p \right)$.
\end{definition}

We first prove the following lemma.
\begin{lemma}\label{lem:F-n-key}
    If $\mathbf{F}$ is exchangeable and affiliated, then $F_{n}^{(1)}(p)^{1/n} \geq F_{n-1}^{(1)}(p)^{1/(n-1)}$ for every $p$.
\end{lemma}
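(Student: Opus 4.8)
\textbf{Proof plan for Lemma~\ref{lem:F-n-key}.}

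The plan is to prove the equivalent statement $(n-1)\log F_n^{(1)}(p) \ge n \log F_{n-1}^{(1)}(p)$, i.e.\ that the map $k \mapsto \tfrac1k \log F_k^{(1)}(p)$ is nondecreasing in $k$ at $k=n-1$, by comparing consecutive terms. The natural route is to show the one-step inequality $F_k^{(1)}(p)\, F_{k-2}^{(1)}(p) \le F_{k-1}^{(1)}(p)^2$ for all $2 \le k \le n$ (a log-concavity, or ``FKG-type submultiplicativity'', statement in the number of agents), and then telescope. Concretely, first I would fix the threshold $p$ and condition on the sublattice $S = \{\mathbf v : v_i \le p \text{ for all } i\}$ using exchangeability to pass freely between restrictions of $\mathbf F$ to $k$, $k-1$, and $k-2$ coordinates; this is where affiliation enters. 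The event ``$\max(v_1,\dots,v_k)\le p$'' is the intersection of the event ``$\max(v_1,\dots,v_{k-1})\le p$'' (increasing in the wrong direction — it is actually a \emph{decreasing} event) with ``$v_k \le p$'', so the cleaner formulation is in terms of the survival events ``$\max > p$'': write $G_k(p) := 1 - F_k^{(1)}(p) = \Pr(\exists i\le k : v_i > p)$. These are unions of increasing sets, hence increasing events, and affiliation (in the FKG form: positively associated random variables, which exchangeable affiliated variables are) gives the needed correlation inequality. So I would translate everything into the survival function and prove the corresponding inequality there, then translate back.

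The key steps, in order, are: (1) reduce to the two-agent-increment inequality $F_n^{(1)}(p)^{1/n} \ge F_{n-1}^{(1)}(p)^{1/(n-1)}$ via the telescoping/concavity-in-$k$ argument, so that it suffices to compare $n$ and $n-1$ directly rather than prove a full chain — actually, since the lemma only asserts the single comparison between $n$ and $n-1$, I can skip telescoping and work directly with these two. (2) Introduce the conditional measure: let $\mathbf v = (v_1,\dots,v_n)\sim \mathbf F$, and let $A_i = \{v_i \le p\}$. Then $F_n^{(1)}(p) = \Pr(A_1\cap\cdots\cap A_n)$ and, by exchangeability, $F_{n-1}^{(1)}(p) = \Pr(A_1\cap\cdots\cap A_{n-1})$. (3) Each $A_i$ is a decreasing event (its indicator is nonincreasing in $\mathbf v$); equivalently the complements $A_i^c = \{v_i > p\}$ are increasing. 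Affiliated random variables are associated (this is Theorem~24 / standard FKG consequences of the affiliation inequality $f(\mathbf v)f(\mathbf v') \le f(\mathbf v\wedge\mathbf v')f(\mathbf v\vee\mathbf v')$), so any collection of increasing (or any collection of decreasing) events is positively correlated. (4) Apply association to get, for the decreasing events, the ``Harris''-type bound $\Pr(A_1\cap\cdots\cap A_n) \ge \Pr(A_1\cap\cdots\cap A_{n-1})\Pr(A_n)$, and more generally $\Pr(\bigcap_{i\in S} A_i) \ge \prod_{i\in S}\Pr(A_i) = \Pr(A_1)^{|S|}$ by exchangeability. (5) Combine: with $q := \Pr(A_1) = F(p)$ (marginal CDF, same for all $i$ by exchangeability), association of the decreasing events actually needs to be used in the sharper ``conditional'' form to get a clean power inequality. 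The cleanest packaging: show $\Pr(A_1\cap\cdots\cap A_k)$ is log-supermodular in $k$, i.e.\ $\Pr(A_1\cdots A_{k+1})\Pr(A_1\cdots A_{k-1}) \ge \Pr(A_1\cdots A_k)^2$; this follows because conditioning on the decreasing event $A_1\cdots A_{k-1}$ preserves affiliation (restricting an affiliated density to a sublattice keeps it affiliated — the sublattice here being $\{v_1\le p,\dots,v_{k-1}\le p\}$), and then association under the conditional law gives $\Pr(A_k A_{k+1}\mid A_1\cdots A_{k-1}) \ge \Pr(A_k\mid A_1\cdots A_{k-1})\Pr(A_{k+1}\mid A_1\cdots A_{k-1})$, which rearranges to the log-supermodularity. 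Telescoping log-supermodularity from $k=1$ (base: $\Pr(A_1)\ge\Pr(A_1)^2$ trivially, or rather starting the chain appropriately) yields $\Pr(A_1\cdots A_n)^{1/n}\ge \Pr(A_1\cdots A_{n-1})^{1/(n-1)}$, which is exactly the claim.

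The main obstacle, I expect, is handling the measure-theoretic subtleties around \emph{conditioning} and \emph{restriction to a sublattice}: one must verify that the conditional distribution of $\mathbf F$ given $\{v_1\le p,\dots,v_{k-1}\le p\}$ is again exchangeable (in the remaining coordinates) and affiliated, so that association can be reapplied at each step of the telescope. Affiliation is preserved under restriction to any sublattice (this is standard, since the affiliation inequality for densities is inherited by the restricted density up to the normalizing constant, which cancels), and $\{v_i \le p\ \forall i\le k-1\}$ is indeed a sublattice of $\mathbb R^n$; but the bookkeeping of which coordinates remain exchangeable after conditioning on a \emph{symmetric} event requires care — fortunately the conditioning event $\{v_1\le p,\dots,v_{k-1}\le p\}$ treats coordinates $1,\dots,k-1$ symmetrically and doesn't touch $k,\dots,n$, so exchangeability within $\{k,\dots,n\}$ (and hence the two applications of association we need) survives. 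A secondary technical point is the discrete-versus-continuous dichotomy flagged in the excerpt (densities vs.\ pmf's, point masses at $1$): I would state the association/FKG input in the distribution-free form (positive association of the random vector, valid for exchangeable affiliated $\mathbf F$ whether discrete or continuous, possibly via approximation) so the argument does not branch on this. Everything else — the telescoping, the translation between $\Pr(\bigcap A_i)$ and $F_k^{(1)}(p)$, the final extraction of $n$-th and $(n-1)$-th roots — is routine.
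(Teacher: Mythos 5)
Your proposal is correct, but it takes a genuinely different route from the paper's proof. The paper passes to the indicator vector $Y_k=\1(v_k>p)$, which inherits affiliation and exchangeability, writes its pmf as $u_j$ (the probability of any pattern with $j$ ones), extracts $u_j u_0^{j-1}\ge u_1^{j}$ by iterating the affiliation inequality, and finishes with the binomial expansion $u_0^{n-1}\bigl(u_0+\sum_{j}\binom{n}{j}u_j\bigr)\ge(u_0+u_1)^{n}$, which is exactly $F_n^{(1)}(p)^{n-1}\ge F_{n-1}^{(1)}(p)^{n}$. You instead establish log-convexity of $k\mapsto a_k:=\Pr(v_1\le p,\dots,v_k\le p)$ by applying the FKG/association inequality conditionally on the nested sublattices $\{v_1\le p,\dots,v_{k-1}\le p\}$, and then telescope $a_{k+1}a_{k-1}\ge a_k^2$ with $a_0=1$ to obtain $a_n^{1/n}\ge a_{n-1}^{1/(n-1)}$; this is a valid and clean alternative. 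Note that with the paper's definition of affiliation --- which is already stated conditionally on arbitrary sublattices and is distribution-free --- you do not need a separate ``affiliation is preserved under restriction'' lemma, which removes your main anticipated obstacle. Your route buys the stronger statement that $k\mapsto a_k^{1/k}$ is monotone along the whole chain and avoids the binomial computation; the paper's route is more self-contained, needing only unconditional applications of the affiliation inequality to a finite discrete vector. Two small points to tie down in a write-up: the identity $\Pr(A_1\cdots A_{k-1}A_{k+1})=\Pr(A_1\cdots A_k)$, which is where exchangeability enters your log-supermodularity step, and the degenerate case $a_{k-1}=0$ (then $a_{n-1}=a_n=0$ and the claim is trivial). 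You correctly flag that your step (4) alone (products of marginals) is insufficient and that the conditional form in step (5) is the one that closes the argument.
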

\begin{proof}[Proof of Lemma~\ref{lem:F-n-key}]
Theorem 23 in the Appendix of \cite{MilgromWeber82} implies that if $\mathbf{v} \sim \mathbf{F}$ such that $\mathbf{F}$ is affiliated, and $h_1,\dots,h_n$ are nondecreasing functions of $\mathbf{v}$, then $h_1(\mathbf{v}),\dots,h_n(\mathbf{v})$ are affiliated. Therefore, if we define the random variables $Y_k = \1(v_k > p)$ for each $k$, then $\mathbf{Y} = (Y_1,\dots,Y_n)$ are affiliated. Because $\mathbf{v}$ is exchangeable, so is $\mathbf{Y}$. By exchangeability, we can define for each $0 \leq k \leq n$ and $\mathbf{y} \in \{0,1\}^n$, $\Pr(\mathbf{Y} = \mathbf{y}) = u_k$ if $\mathbf{y} = (y_1,\dots,y_n)$ has $k$ 1's and $n-k$ 0's. We use the notation $1_k$ for $1$ repeated $k$ times, and likewise for $0_k$. The affiliation inequality for $\mathbf{Y}$ implies, for $0 \leq k \leq n-1$, 
\begin{align*}
    \Pr(\mathbf{Y} = (1_k,0_{n-k})) \Pr(\mathbf{Y} = (0_k,1,0_{n-k-1}) ) \leq \Pr(\mathbf{Y} = (1_{k+1},0_{n-k-1})) \Pr(\mathbf{Y} = 0_{n}),
\end{align*}
or $u_k u_1 \leq u_{k+1} u_0$. We can apply this inequality inductively to get $u_k u_0^{k-1} \geq u_1^k$ for every $1 \leq k \leq n$. From $\sum_{\mathbf{y}} \Pr(\mathbf{Y} = \mathbf{y}) = 1$ we have $u_0 + \sum_{k=1}^{n} \binom{n}{k} u_k = 1$ because there are $\binom{n}{k}$ possible $\mathbf{y}$'s that have $k$ 1's. Therefore,
\begin{align*}
    F_n^{(1)}(p)^{n-1} = u_0^{n-1} = u_0^{n-1} \left( u_0 + \sum_{k=1}^{n} \binom{n}{k} u_k \right) \geq u_0^n + \sum_{k=1}^{n} \binom{n}{k} u_0^{n-k} u_1^k = (u_0+u_1)^{n} = F_{n-1}^{(1)}(p)^{n}.\quad \qedhere
\end{align*}
\end{proof}

Back to the main proof. Using the regret expression\footnote{To prevent confusion, we remind the reader that $\mathbf{F}_n^{(1)}$ is the CDF of the first order statistics of $\mathbf{F}$ over $n$ agents, whereas $\mathbf{F}_{n}^*$ is the i.i.d.~ isorevenue distribution starting at $r_{n}^*$ and is not related to the generic $\mathbf{F}$. } from Proposition~\ref{prop:regret-expressions-general}, we have
\begin{align*}
    R(\Phi_{n}^*,\mathbf{F}) \leq r_{n}^* + \int_{p=r_{n}^*}^{p=1} \left(1-p (\Phi_{n}^*)'(p) - \Phi_{n}^*(p) \right) \left(1-\mathbf{F}_{n}^{(1)}(p) \right) dp +  \Phi_n^*(p) \left( \mathbf{F}_n^{(2)}(p) - \mathbf{F}_n^{(1)}(p) \right) dp.
\end{align*}
\cite{DavidNagaraja03-order-statistics}, equation 5.3.2, states that the distribution of the second order statistics satisfies (in our notation) 
\begin{align*}
    \mathbf{F}_n^{(2)}(p) = n \mathbf{F}_{n-1}^{(1)}(p) - (n-1) \mathbf{F}_n^{(1)}(p) .
\end{align*} The defining ODE \obcomment{For now, this is in the APPENDIX...} of $\Phi_{n}^*$ is $$(\Phi_{n}^*)'(p) = \frac{1}{p} - \frac{(n-1)r_{n}^*}{p(p-r_{n}^*)} \Phi_{n}^*(p) .$$ Substituting the expressions for $\mathbf{F}_n^{(2)}(p)$ and $(\Phi_{n}^*)'(p)$ in the inequality above gives
\begin{align*}
    R(\Phi_{n}^*,\mathbf{F}) &\leq r_{n}^* + \int_{p=r_{n}^*}^{p=1} \left( \frac{(n-1)r_{n}^*}{p-r_{n}^*} - 1 \right) \Phi_{n}^*(p) (1-\mathbf{F}_n^{(1)}(p)) + n \Phi_{n}^*(p) (\mathbf{F}_{n-1}^{(1)}(p) - \mathbf{F}_n^{(1)}(p)) \\
    &= r_{n}^* + \int_{p=r_{n}^*}^{1} \left( \frac{(n-1)r_{n}^*}{p-r_{n}^*} -1 - \frac{(n-1)p}{p-r_{n}^*} \mathbf{F}_{n}^{(1)}(p) + n \mathbf{F}_{n-1}^{(1)}(p) \right) \Phi_{n}^*(p) dp
\end{align*}

Now we apply Lemma~\ref{lem:F-n-key} which gives $\mathbf{F}_{n-1}^{(1)}(p) \leq \mathbf{F}_{n}^{(1)}(p)^{(n-1)/n}$, so
\begin{align*}
    R(\Phi_{n}^*,\mathbf{F}) &\leq r_{n}^* + \int_{p=r_{n}^*}^{1} \left( \frac{(n-1)r_{n}^*}{p-r_{n}^*} -1 - \frac{(n-1)p}{p-r_{n}^*} \mathbf{F}_{n}^{(1)}(p) + n \mathbf{F}_{n}^{(1)}(p)^{(n-1)/n} \right) \Phi_n^*(p) dp \\
    &\leq r_{n}^* + \int_{p=r_{n}^*}^{1}   \sup_{z \in [0,1]} \left\{ \frac{(n-1)r_{n}^*}{p-r_{n}^*} -1 - \frac{(n-1)p}{p-r_{n}^*} z + n z^{(n-1)/n} \right\} \Phi_n^*(p) dp.
\end{align*}
The derivative of the expression inside the supremum with respect to $z$ is $-\frac{(n-1)p}{p-r_{n}^*} + (n-1) z^{-1/n}$ which is less than or equal to zero for $z \leq \left( \frac{p-r_n^*}{p} \right)^n = F_{n}^*(p)^n$ and greater than or equal to zero for $z \geq F_{n}^*(p)^n$. Therefore, the expression is maximized when $z = F_{n}^*(p)^n$, giving us
\begin{align*}
    R_{n}(\Phi_{n}^*,\mathbf{F}) \leq r_{n}^* + \int_{p=r_{n}^*}^{1} \left( \frac{(n-1)r_{n}^*}{p-r_{n}^*} -1 - \frac{(n-1)p}{p-r_{n}^*} F_{n}^*(p)^n + n F_{n}^*(p)^{n-1} \right) \Phi_n^*(p) dp = R_{n}(\Phi_{n}^*, \mathbf{F}_{n}^*)
\end{align*}
as desired.
\end{proof}

\subsection{Mixture of I.I.D.~Distributions}

In this section, we consider another common class of distributions that captures positive dependence. We consider the class $\mathcal{F}_{\text{mix}}$ of all mixtures of i.i.d.~distributions over $n$ agents. The mixture of i.i.d.~class is very common in statistics as a way to flexibly model dependences through latent variables \citep{McLachlanLeeRathnayake19}, which is interpretable in our mechanism design context as unknown random states of the world. 

\begin{definition}
$\mathcal{F}_{\text{mix}}$ is the class of all distributions $\mathbf{F}$ such that there exists a probability measure $\mu$ on the set of distributions satisfying $\mathbf{F}(\mathbf{v}) = \int \prod_{i=1}^{n} G(v_i) d\mu(G)$ for every $\mathbf{v}$. 
\end{definition}


In the previous definition, $G$ is the i.i.d.~distribution that $n$ agent valuations are drawn i.i.d.~from, and $\mu(G)$ is the weight of $G$ in the mixture that comprises $\mathbf{F}$.  We  note that mixtures of i.i.d.~have pairwise non-negative covariance, i.e., if $\mathbf{v} \sim \mathbf{F} \in \mathcal{F}_{\text{mix}}$, then $\Cov(v_i,v_j) \geq 0$ for every $i,j$.\footnote{Proof: denote by $m(G)$ the mean for marginal $G$, then by the law of total covariance, $\Cov(v_i,v_j) = \Var_{G}(m(G)) \geq 0$.} Therefore, the class $\mathcal{F}_{\text{mix}}$ also captures positive dependence.

We note that the classes $\mathcal{F}_{\text{aff}}$ and $\mathcal{F}_{\text{mix}}$, while both capturing positive dependence,  are not ``comparable'' in the sense that one is not contained in the other, as formalized below. 
\begin{lemma}\label{lem:aff-mix}
We have that $\mathcal{F}_{\text{aff}} \subsetneq \mathcal{F}_{\text{mix}}$ and $\mathcal{F}_{\text{mix}} \subsetneq \mathcal{F}_{\text{aff}}$.
\end{lemma}
In the proof of Lemma \ref{lem:aff-mix}, 
 we construct an explicit example of a distribution that is affiliated but not a mixture of i.i.d., and another example of a  distribution that is a mixture of i.i.d.~but not affiliated.

Clearly $\mathcal{F}_{\text{iid}} \subset \mathcal{F}_{\text{mix}}$ and hence $\mathcal{R}_{n}^*(\mathcal{F}_{\text{mix}})  \ge \mathcal{R}_{n}^*(\mathcal{F}_{\text{iid}})$. The main result of this section is:

\begin{proposition}\label{prop:minimax-mix-eq-iid}
$\mathcal{R}_{n}^*(\mathcal{F}_{\emph{mix}})  = \mathcal{R}_{n}^*(\mathcal{F}_{\emph{iid}})$.
\end{proposition}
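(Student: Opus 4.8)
The plan is to prove that $\mathcal{R}_{n}^*(\mathcal{F}_{\textnormal{mix}}) = \mathcal{R}_{n}^*(\mathcal{F}_{\textnormal{iid}})$ by the same two-sided saddle argument used for $\mathcal{F}_{\textnormal{aff}}$. Since $\mathcal{F}_{\textnormal{iid}} \subset \mathcal{F}_{\textnormal{mix}}$ gives the inequality $\mathcal{R}_{n}^*(\mathcal{F}_{\textnormal{mix}}) \geq \mathcal{R}_{n}^*(\mathcal{F}_{\textnormal{iid}})$ for free, it suffices to exhibit the candidate saddle point $(m_n^*, \mathbf{F}_n^*)$ — the \emph{same} mechanism $m_n^* = \text{SPA}(\Phi_n^*)$ and the \emph{same} i.i.d.~isorevenue distribution $\mathbf{F}_n^*$ as in Theorem~\ref{thm:minimax-lambda-regret-main} — and verify both saddle inequalities over the larger class $\mathcal{F}_{\textnormal{mix}}$. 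The seller-side inequality $R_n(m_n^*, \mathbf{F}_n^*) \leq R_n(m, \mathbf{F}_n^*)$ for all $m \in \mathcal{M}$ was already established in Step 2 of the proof of Theorem~\ref{thm:minimax-lambda-regret-main} (it only uses that $\mathbf{F}_n^*$ is a fixed i.i.d.~distribution), so the only new content is Nature's side: $R_n(\Phi_n^*, \mathbf{F}) \leq R_n(\Phi_n^*, \mathbf{F}_n^*)$ for every $\mathbf{F} \in \mathcal{F}_{\textnormal{mix}}$.

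To handle Nature's side, the natural move is to exploit linearity of the regret in the mixing measure. Write $\mathbf{F} = \int \mathbf{G}^{\otimes n}\, d\mu(G)$ where $\mathbf{G}^{\otimes n}$ denotes the i.i.d.~product. The benchmark term $\bE_{\mathbf{v}\sim\mathbf{F}}[\max(\mathbf{v})]$ and the revenue term $\bE_{\mathbf{v}\sim\mathbf{F}}[\text{rev of SPA}(\Phi_n^*)]$ are both linear in $\mathbf{F}$, hence $R_n(\Phi_n^*, \mathbf{F}) = \int R_n(\Phi_n^*, \mathbf{G}^{\otimes n})\, d\mu(G)$. But $\mathbf{G}^{\otimes n} \in \mathcal{F}_{\textnormal{iid}}$ for each $G$, so by Step 1 of the proof of Theorem~\ref{thm:minimax-lambda-regret-main} we have $R_n(\Phi_n^*, \mathbf{G}^{\otimes n}) \leq R_n(\Phi_n^*, \mathbf{F}_n^*)$ for every $G$. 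Integrating against $\mu$ gives $R_n(\Phi_n^*, \mathbf{F}) \leq R_n(\Phi_n^*, \mathbf{F}_n^*)$, which is exactly Nature's saddle inequality. Combining the two inequalities yields $R_n(m_n^*, \mathbf{F}_n^*) = \inf_m \sup_{\mathbf{F}\in\mathcal{F}_{\textnormal{mix}}} R_n(m, \mathbf{F})$, and since this common value equals $R_n(m_n^*, \mathbf{F}_n^*) = \mathcal{R}_n^*(\mathcal{F}_{\textnormal{iid}})$ by Theorem~\ref{thm:minimax-lambda-regret-main}, the claim follows.

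I expect the main (minor) obstacle to be purely technical: carefully justifying the interchange of expectation and the mixing integral — i.e., that $R_n(\Phi_n^*, \cdot)$ is genuinely affine in the distribution and that Fubini/Tonelli applies so that $R_n(\Phi_n^*, \int \mathbf{G}^{\otimes n} d\mu(G)) = \int R_n(\Phi_n^*, \mathbf{G}^{\otimes n}) d\mu(G)$. Since all valuations lie in $[0,1]$, everything in sight is bounded, so this is routine; one should just note that the regret, written as an expectation of a bounded measurable functional of $\mathbf{v}$, pushes through the mixture by definition of $\mathbf{F}$ as $\int \mathbf{G}^{\otimes n} d\mu(G)$. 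A secondary point worth a sentence: the conclusion $\sup_{\mathbf{F} \in \mathcal{F}_{\textnormal{mix}}} R_n(\Phi_n^*, \mathbf{F}) = R_n(\Phi_n^*, \mathbf{F}_n^*)$ in fact shows the worst-case distribution for the optimal mechanism is again degenerate (a genuine i.i.d.), so Nature gains nothing from the extra modeling power of mixtures — the same qualitative message as in the affiliated case. No new lemma analogous to Lemma~\ref{lem:F-n-key} is needed here, because linearity already reduces everything to the i.i.d.~case pointwise in $G$.
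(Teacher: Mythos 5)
Your proposal is correct. The key idea is the same as the paper's --- the regret is linear in the mixing measure $\mu$, so Nature gains nothing from mixing --- but the two arguments are packaged differently. You verify that the specific saddle point $(m_n^*,\mathbf{F}_n^*)$ from Theorem~\ref{thm:minimax-lambda-regret-main} remains a saddle point over the larger class: the seller-side inequality carries over verbatim, and Nature's side follows by writing $R_n(\Phi_n^*,\mathbf{F})=\int R_n(\Phi_n^*,\mathbf{G}^{\otimes n})\,d\mu(G)$ and applying Step~1 of that proof to each i.i.d.\ component. The paper instead proves the mechanism-agnostic statement that $\sup_{\mu}\bE_{G\sim\mu}\tilde R(m,G)=\sup_F \tilde R(m,F)$ for \emph{every} mechanism $m$ (using an $\epsilon$-argument with a near-optimal $\hat F_m$, since the inner supremum need not be attained for an arbitrary $m$), and then takes the infimum over $m$; this stronger form is reused later to transfer the $\mathrm{SPA}(r)$ worst-case results of Proposition~\ref{prop:sup-regret-spa-r} from $\mathcal{F}_{\textnormal{iid}}$ to $\mathcal{F}_{\textnormal{mix}}$. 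Your route avoids the $\epsilon$-argument because Step~1 already shows the supremum for $m_n^*$ is attained at $\mathbf{F}_n^*$, and it yields the additional observation that the worst case over mixtures is still the i.i.d.\ isorevenue distribution; the paper's route is more general but does not rely on Theorem~\ref{thm:minimax-lambda-regret-main} at all. Both are complete proofs, and your handling of the Fubini/boundedness point is adequate.
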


This equality holds because the class of mixtures $\mathcal{F}_{\text{mix}}$ is the convex hull of $\mathcal{F}_{\text{iid}}$. Therefore, if we interpret the minimax problem as a game between Nature and the seller, under $\mathcal{F}_{\text{mix}}$, Nature can choose a strategy randomizing over $\mathcal{F}_{\text{iid}}$ whereas under $\mathcal{F}_{\text{iid}}$ Nature must choose a pure strategy. However, because Nature goes second in the sequential game, there is an optimal pure strategy. This follows because Nature's payoff is linear in its action, i.e., the distribution. The formal proof is given in Appendix~\ref{app:sec:extend-other-F}.


\subsection{Exchangeable Distributions and All Distributions}\label{subsec:exc-all-dists} 

In this section, we consider the class $\mathcal{F}_{\text{exc}}$ of all exchangeable distributions and the class $\mathcal{F}_{\text{all}}$ of all distributions. Clearly $\mathcal{F}_{\text{exc}} \subset \mathcal{F}_{\text{all}}$ and hence $\mathcal{R}_{n}^*(\mathcal{F}_{\text{all}})  \ge \mathcal{R}_{n}^*(\mathcal{F}_{\text{exc}})$. The main result of this section is that $\mathcal{R}_{n}^*(\mathcal{F}_{\text{exc}}) = \mathcal{R}_{n}^*(\mathcal{F}_{\text{all}})= 1/e$, which is independent of $n$. This result is a slight generalization (in the case of one item) of \cite{KocyigitKR20-new}, who proved a similar result on $\mathcal{F}_{\text{all}}$. In the present case, the result emerges from the same framework as the rest of the paper. The proof is given in Appendix~\ref{app:sec:extend-other-F}.\footnote{We also omit the subscripts of $\mathbf{F}^*, m^*, \Phi^*, r^*$ in the proposition and proof to reduce conflation with the analogous quantities from the i.i.d.~case and to reduce clutter.} 


\begin{proposition}\label{prop:minimax-regret-arb}
For each $k = 1,\dots,n$, let $\mathbf{F}^{*,k}$ be a distribution on $[0,1]^n$ such that for $\mathbf{v} \sim \mathbf{F}^{*,k}$, $v_i = 0$ for any $i \neq k$, and $v_k$ follows the isorevenue distribution on $[1/e,1]$, that is,
\begin{align*}
    F^{*,k}(v_k) = \begin{cases}
    0 &\text{ if } v_k \in [0, 1/e] \\
    1 - \frac{1}{e v_k} &\text{ if } v_k \in [1/e,1) \\
    1 &\text{ if } v_k = 1.
    \end{cases}
\end{align*}
Let $\mathbf{F}^{*} = \frac{1}{n} \sum_{k=1}^{n} \mathbf{F}^{*,k}$ be the uniform mixture of all the $\mathbf{F}^{*,k}$'s.
Let $m^* = \emph{SPA}(\Phi^*)$ such that $\Phi^*$ is supported on $[1/e,1]$ and is given by
\begin{align*}
    \Phi^*(p) =  \log(v)  + 1 &\text{ for } p \in [1/e,1]
\end{align*}
Then, for any mechanism $m$ and distribution $\mathbf{F} \in \mathcal{F}_{\text{all}}$, $R_{n}(m^*,\mathbf{F}) \leq R_{n}(m^*,\mathbf{F}^{*}) \leq R_{n}(m,\mathbf{F}^{*})$. Moreover, $\mathcal{R}_{n}^*(\mathcal{F}_{\emph{exc}}) = \mathcal{R}_{n}^*(\mathcal{F}_{\emph{all}}) = R_{n}(m^*,\mathbf{F}^{*}) = 1/e$ because $\mathbf{F}^*$ is exchangeable.
\end{proposition}



This proposition shows that for any $n$, the minimax regret with respect to either $\mathcal{F}_{\text{all}}$ or $\mathcal{F}_{\text{exc}}$ is $1/e$. This is the same value as the minimax regret with respect to $\mathcal{F}_{\text{iid}}$ with $n=1$ in Corollary~\ref{cor:lambda-n-1}. Furthermore, the optimal mechanism in this case is the same as the optimal mechanism for $\mathcal{F}_{\text{iid}}$ with $n = 1$, and the worst-case distribution is of the form: select an agent uniformly at random such that this agent has the same isorevenue distribution as the $n=1$ case, and all the other agents have zero valuation. These are the formal correspondences that we previously alluded to when we say that under $\mathcal{F}_{\text{all}}$ and $\mathcal{F}_{\text{exc}}$, Nature eliminates all competition and any $n$ reduces to $n = 1$. 


\section{Minimax Optimal Mechanism and Performance: General Insights}\label{sec:char-regret-ratio}

Now that we have derived the optimal mechanism, the optimal performance, and the worst-case distribution for the minimax regret, in this section, we explore the structure of the optimal mechanism, compare the performance with ``natural'' alternative mechanisms, and evaluate the value of competition. We do so for the cases of i.i.d., exchangeable and affiliated, and mixtures of i.i.d.~distributions. 



\subsection{Structure of the Optimal mechanism}

Through our analysis, we obtain the first minimax regret results for an arbitrary number of buyers across a spectrum of distribution classes.  In  Figure \ref{fig:phi-and-F-regret-fn-n}, we illustrate the saddle point. On the left part of the figure, we depict the distribution of the reserve price $\Phi_n^*$ associated with the saddle point, and on the right part, we depict the corresponding worst-case distribution $F_n^*$.   We observe that as $n$ increases, the distribution of reserves decreases in the first-order stochastic dominance sense. This is expected because, as competition increases, the role of reserve prices diminish. 
  We also note that the seller sets reserve prices with positive probability even in the limit as competition increases. As we will discuss later, Nature can counter the seller by limiting the effective competition, even if the number of buyers is large.  In turn, the seller continues to use positive reserve prices.

\begin{figure}[h!]
	\begin{subfigure}{0.48\linewidth}
    	\centering
    	\includegraphics[height=0.75\linewidth]{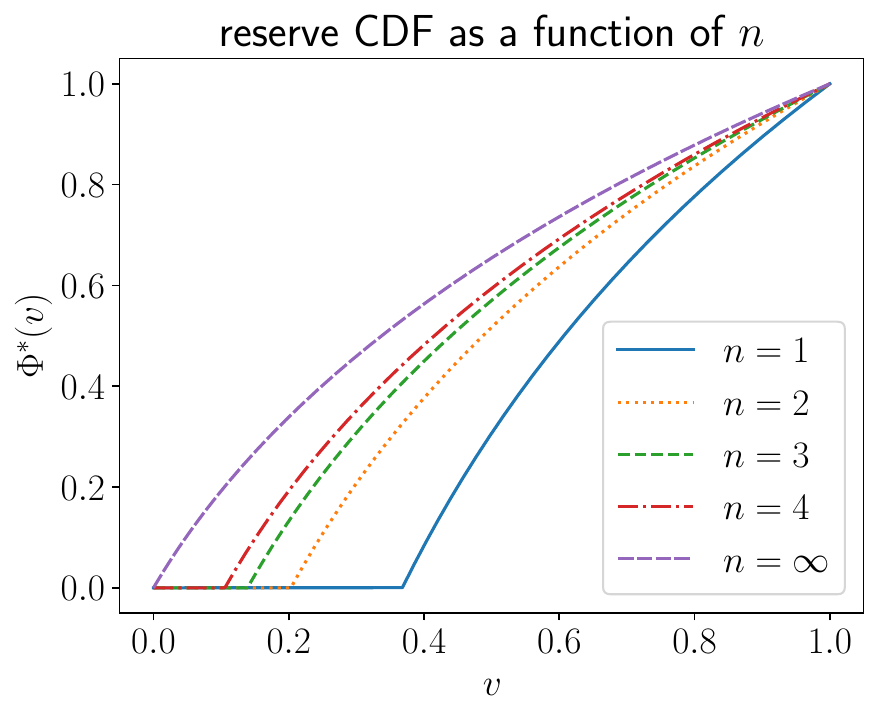}
    \end{subfigure}%
	\begin{subfigure}{0.48\linewidth}
		\centering
		\includegraphics[height=0.75\linewidth]{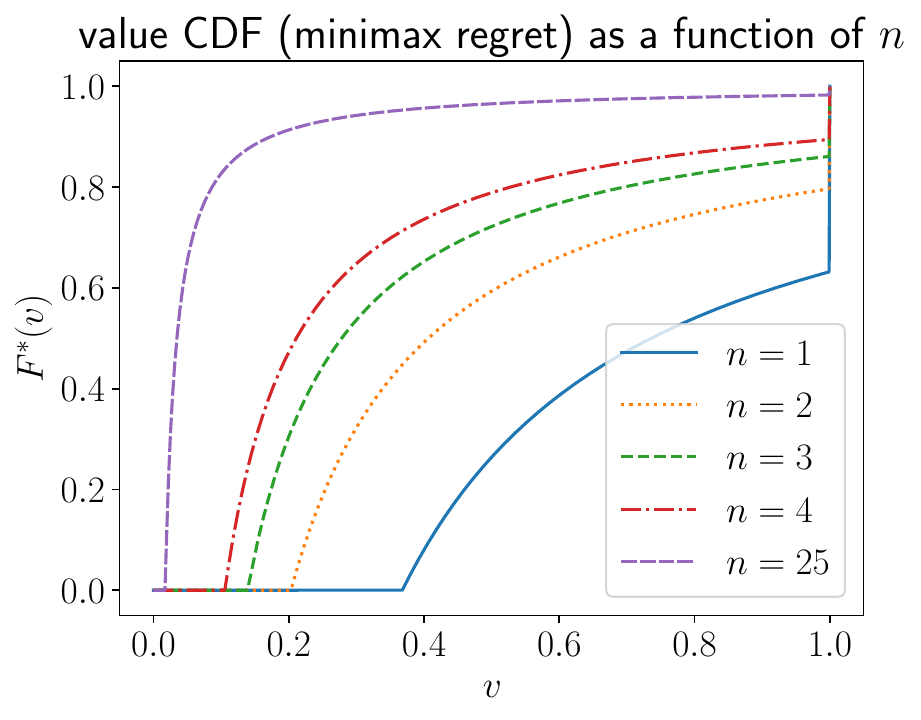}
	\end{subfigure}
	\caption{Optimal reserve CDF $\Phi_n^*$ and worst-case distribution $F_n^*$ for minimax regret as a function of $n$.}
\label{fig:phi-and-F-regret-fn-n}
\end{figure}


\subsection{Optimal Worst-Case Performance versus Alternative Mechanisms}

\subsubsection{Performance of Alternative Mechanisms}

It is also instructive to compare the minimax regret to the worst-case regret under ``natural'' mechanisms. We will consider two such mechanisms: SPA without reserve, and SPA with optimal deterministic reserve. \obedit{We derive} the worst-case regret under these mechanisms in the following proposition, whose proof is given in Appendix~\ref{app:sec:char-regret-ratio}.

\begin{proposition}\label{prop:sup-regret-spa-r}
The following results hold for all of $\mathcal{F}_{\emph{iid}}, \mathcal{F}_{\emph{aff}}, \mathcal{F}_{\emph{mix}}$. For $n \geq 2$, the worst-case regret of $\emph{SPA}(r)$, the second-price auction with fixed deterministic reserve $r$, is $\frac{(1-r)^n (n-1)^{n-1}}{((1-r)n-r)^{n-1}}$ for $r \leq \frac{1}{2}$ and $r$ for $r \geq \frac{1}{2}$. For $n = 1$, the worst-case regret is $\max(1-r, r)$. Therefore, the worst-case regret of the SPA with no reserve $\emph{SPA}(0)$ is $1$ for $n=1$ and $\left(\frac{n-1}{n}\right)^{n-1}$ for $n \geq 2$. If we can choose the reserve $r$ to minimize the worst-case regret, the optimal reserve is $r^* = \frac{1}{n+1}$ with regret $\left( \frac{n}{n+1} \right)^{n}$. 
\end{proposition}

As expected, the worst-case regret of the optimal SPA with deterministic reserve with $n$ agents is greater than the worst-case regret for SPA with no reserve with $n$ agents. Moreover, note that as $n \to \infty$, the optimal reserve price converges to zero, and the limiting worst case regret with and without reserve converge to the same value of $1/e$, which is the regret over all distributions.

In contrast to the optimal mechanism, there is no single optimal worst-case distribution for the optimal second-price auction with reserve but, instead, a sequence of worst-case distributions whose regret converges to the worst-case value. This follows because the presence of a fixed reserve price introduces a discontinuity in the revenue of a second-price auction as a function of the distribution. To wit, in the $\mathcal{F}_{\text{iid}}$ case, for the optimal reserve price $r^* = 1/(n+1)$ and $\epsilon>0$ small enough, consider the two-point distribution that puts mass on 1 with probability $1/(n+1)$ and on $r^*-\epsilon$ with probability $n/(n+1)$. The regret of this sequence of distributions converges to the optimal regret as $\epsilon \downarrow 0$. Intuitively, Nature counters the second-price auction by placing an atom immediately below the reserve price to maximize the revenue loss when the reserve price is not cleared and another at 1 to maximize the difference between the first- and second-highest valuations.

Interestingly, the worst-case regret of a SPA with optimal deterministic reserve with $n$ agents is equal to the worst-case regret for SPA with no reserve with $n+1$ agents. This result has a similar flavor with the classic result on auctions versus negotiations of \cite{BulowKlemperer96}.\footnote{An important difference is that in our setting, the worst-case distribution changes (Nature is allowed to react to the mechanism) whereas in \cite{BulowKlemperer96} the distribution is fixed.} 

\subsubsection{Comparison} \label{sec:comparison}


Next, we compare, for each value of $n$,  the optimal worst-case regret to that of $\text{SPA}(0)$ and a SPA with optimal deterministic reserve price. 
The results are presented in  Table~\ref{table:minimax-regret-with-spa}.
\begin{table}[h!]
\centering
\begin{tabular}{ c | c | c | c }
 $n$ & OPT & $\text{SPA}(0)$ & $\text{SPA}(r^*)$ \\ \hline
 1 & 0.3679 & 1.0000 & 0.5000 \\
 2 & 0.3238 & 0.5000 & 0.4444 \\
 3 & 0.3093 & 0.4444 & 0.4219 \\
 4 & 0.3021 & 0.4219 & 0.4096 \\
 5 & 0.2979 & 0.4096 & 0.4019 \\
10 & 0.2896 & 0.3874 & 0.3855 \\
25 & 0.2847 & 0.3754 & 0.3751 \\
$\infty$ & 0.2815 & 0.3679  & 0.3679
\end{tabular}
\caption{Worst-case regret under the optimal mechanism in comparison with SPA for each $n$. The columns are: OPT is the optimal mechanism, $\text{SPA}(0)$ is the SPA with no reserve, and $\text{SPA}(r^*)$ is the SPA with optimal deterministic reserve.}
\label{table:minimax-regret-with-spa}
\end{table}


We observe that despite the fact that Nature counters competition, as discussed earlier, there is significant scope for optimization  and the worst-case performance  can be significantly improved when moving from $\text{SPA}(0)$ or  $\text{SPA}(r^*)$ to the optimal minimax mechanism. For example, with three buyers, the optimal mechanism leads to  an improvement of more than 25\% compared to the alternative mechanisms. 

\subsection{On the Value of Competition in a Minimax Regret Framework}\label{subsec:value-of-competition-minimax-regret}


The previous section shows the minimax regret for different values of $n$. As expected, the regret decreases as $n$ increases, which highlights the  the value of competition for the seller. However, we observe that the value of competition has diminishing returns. For example, in Table~\ref{table:minimax-regret-with-spa} we can see that the minimax regret from $n = 1$ to $n= 2$ decreases by 12\%; from $n = 2$ to $n = 3$, 4.5\%; from $n = 3$ to $n  = 4$, 2.3\%; from $n = 4$ to $n = 5$, 1.4\%. 
We also note that the limiting minimax regret as $n \to \infty$ is strictly positive. This is formally captured in the following corollary that characterizes minimax regret performance  as $n \to \infty$.

\begin{corollary}\label{cor:lambda-n-infty}
We have $\lim_{n \to \infty} n r_{n}^* = c$ where $c \approx 0.434818$ is a solution to $e^{-c} -  \int_{1}^{\infty} \frac{e^{-cx}}{x} dx = 0$. The regret   $R_{n}(\Phi_{n}^*,F_{n}^*)$ converges to a constant $e^{-c} -  \int_{0}^{1} e^{-c/v} dv \approx 0.281494$. Furthermore, the optimal mechanism
$\Phi_{n}^*$ converges in distribution to $\Phi_{\infty}^*$ given by 
\begin{align*}
    \Phi_{\infty}^*(v) = \int_{1}^{\infty} \frac{1}{x} \exp \left( - \frac{c}{v}(x-1) \right) dx \text{ for } v \in [0,1].
\end{align*}
\end{corollary}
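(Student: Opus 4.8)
The plan is to carry out an asymptotic analysis of the three defining relations from Proposition~\ref{prop:ode-phi-solution} as $n \to \infty$, under the scaling $v$ fixed and $r_n^* \sim c/n$. First I would establish the limit $n r_n^* \to c$. The defining equation for $r_n^*$ is $(1-r_n^*)^{n-1} + \log(r_n^*) + \sum_{k=1}^{n-1}(1-r_n^*)^k/k = 0$. Writing $r_n^* = a_n/n$ and using $(1-a_n/n)^{n-1} \to e^{-a_n}$ and $\sum_{k=1}^{n-1}(1-r_n^*)^k/k = -\log(r_n^*) - \int_0^{r_n^*}\frac{1-(1-t)^{n-1}}{t}\,dt + o(1)$ — or more cleanly, recognizing $\sum_{k=1}^{n-1} x^k/k = -\log(1-x) - \sum_{k\ge n} x^k/k$ and tracking both pieces with $x = 1-r_n^*$ — the $\log(r_n^*)$ terms cancel and the remaining sum converges to $\int_1^\infty \frac{e^{-cx}}{x}dx$ after the substitution $k = n y$. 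This yields the transcendental equation $e^{-c} - \int_1^\infty \frac{e^{-cx}}{x}dx = 0$; I would note monotonicity of the left side in $c$ to get uniqueness. The technical care needed here is to justify interchanging the limit with the sum/integral, which follows from dominated convergence since all terms are positive and bounded by a convergent series/integral uniformly in $n$.

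Second, I would obtain the limiting mechanism $\Phi_\infty^*$. Starting from the series representation $\Phi_n^*(v) = \sum_{k=n}^\infty \frac{1}{k}\left(\frac{v-r_n^*}{v}\right)^{k-(n-1)}$ given in Proposition~\ref{prop:ode-phi-solution}, I substitute $r_n^* \approx c/n$, so $\frac{v-r_n^*}{v} = 1 - \frac{c}{nv} + o(1/n)$, and reindex with $k = n x$, $j = k - n$, turning the sum into a Riemann sum: $\frac{1}{k} \approx \frac{1}{nx}$, and $\left(1 - \frac{c}{nv}\right)^{k-(n-1)} \approx \left(1-\frac{c}{nv}\right)^{n(x-1)} \to \exp\!\left(-\frac{c}{v}(x-1)\right)$. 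The spacing $\Delta k = 1$ corresponds to $\Delta x = 1/n$, so the sum converges to $\int_1^\infty \frac{1}{x}\exp\!\left(-\frac{c}{v}(x-1)\right)dx = \Phi_\infty^*(v)$. Again dominated convergence (uniform geometric-type tail bound) legitimizes the passage to the limit, and I would check $\Phi_\infty^*$ is a valid CDF on $[0,1]$: $\Phi_\infty^*(0^+)$ should handle the atom structure, and $\Phi_\infty^*(1) = \int_1^\infty \frac{1}{x}e^{-c(x-1)}dx = e^c \int_1^\infty \frac{e^{-cx}}{x}dx = e^c \cdot e^{-c} = 1$ using the equation for $c$, which is a reassuring consistency check.

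Third, I would compute the limiting regret. From Theorem~\ref{thm:minimax-lambda-regret-main}, $R_n(\Phi_n^*,F_n^*) = (1-r_n^*)^{n-1} - \int_{r_n^*}^1 \left(1 - \frac{r_n^*}{v}\right)^{n-1}dv$. The first term converges to $e^{-c}$. For the integral, substitute $r_n^* \approx c/n$: on the range $v \in [r_n^*, 1]$, which converges to $[0,1]$, the integrand $\left(1 - \frac{c}{nv}\right)^{n-1} \to e^{-c/v}$ pointwise for $v > 0$, and is bounded by $1$, so by dominated convergence $\int_{r_n^*}^1 \left(1-\frac{r_n^*}{v}\right)^{n-1}dv \to \int_0^1 e^{-c/v}dv$. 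Hence $R_n \to e^{-c} - \int_0^1 e^{-c/v}dv \approx 0.281494$. I expect the main obstacle to be the asymptotic analysis in the first step: the cancellation of the $\log(r_n^*)$ terms is delicate, and one must carefully track the tail $\sum_{k\ge n}(1-r_n^*)^k/k$ — showing it vanishes while the partial sum up to $n-1$ produces exactly $\int_1^\infty \frac{e^{-cx}}{x}dx$ requires splitting the sum at $k = n$ and handling each piece with a different approximation (Euler–Maclaurin or direct Riemann-sum estimates), all while controlling the $o(1/n)$ error in $r_n^* = c/n + o(1/n)$ propagated through an $n$-th power. Everything after that is a routine (if slightly fussy) dominated-convergence argument.
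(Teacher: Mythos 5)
Your proposal is correct and follows the same route as the paper's asymptotic analysis: the exact identity $\sum_{k=1}^{n-1} x^k/k = -\log(1-x) - \sum_{k\ge n} x^k/k$ with $x = 1-r_n^*$ reduces the defining equation to $(1-r_n^*)^{n-1} = \sum_{k\ge n}(1-r_n^*)^k/k$, whose two sides converge under the scaling $nr_n^*\to c$ (boundedness comes from $r_n^*\in(0,1/n)$, and the divergence of the right side as $nr_n^*\to 0$ rules out a zero limit point) to the stated transcendental equation with a unique root by the monotonicity you note, while the Riemann-sum and dominated-convergence passages for $\Phi_n^*$ and the regret are exactly as you describe. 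One small caveat: your first proposed expansion $\sum_{k=1}^{n-1}(1-r_n^*)^k/k = -\log(r_n^*) - \int_0^{r_n^*}\frac{1-(1-t)^{n-1}}{t}\,dt + o(1)$ is not right as written (the exact identity is $\sum_{k=1}^{n-1}(1-r)^k/k = -\log r - \int_r^1 \frac{(1-t)^{n-1}}{t}\,dt$), but since you discard it in favor of the tail-of-the-logarithm identity, the argument is unaffected.
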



It might seem surprising at first that increasing competition in a setting with bounded i.i.d.~valuations does not lead to a vanishing regret as competition increases. However, it is important to note the following: $i.)$ Nature can select a distribution that indirectly limits the ``effective" competition when $n$ is large and $ii.)$ the benchmark in the regret is the best revenues one could obtain without information asymmetry. Indeed, in the present framework, Nature selects a distribution that depends on $n$.  It is possible to show that asymptotically the distribution associated with the saddle point puts mass of order $\Theta(1/n)$ at 1, and significant mass around zero. For large $n$, under  the optimal mechanism, the fundamental contribution to regret under the worst-case distribution is as follows: with high probability, there is at most one agent that has a valuation significantly different from zero, so the mechanism cannot use other agents as a competitive benchmark. In other words, the choice of the worst-case distribution negates a large part of the effect of competition. A related phenomenon explains the  behavior of the worst-case performance under the alternative mechanisms. Finally, while the choice of the benchmark impacts the limiting value of regret, the observation that regret does not vanish as the number of buyers increases is robust to the choice of the benchmark. For example, in Proposition~\ref{prop:minimax-regret-opt-limit} we show the minimax regret against the second-best benchmark (optimal revenues with knowledge of the distribution of values) is at least $e^{-2} \approx 0.1353$, regardless of the number of buyers.













\section{Conclusion}\label{sec:conclude}

In this paper, we analyze the fundamental problem of optimally selling an item to an arbitrary number $n$ of buyers without the common prior assumption. Knowing only \obedit{an upper bound on} \obdelete{the support of} the buyers' valuations, the seller selects a DSIC mechanism that minimizes the worst-case regret between the mechanism revenue and the benchmark revenue when the buyers' valuations are known. We consider a broad range of classes of admissible distributions: i.i.d, mixtures of i.i.d., exchangeable and affiliated, exchangeable, and all distributions. We show that the first three have minimax regret values that depend on $n$ and exactly characterize the optimal mechanism and performance for each $n$, while the last two reduce to the case $n = 1$, i.e., the no-competition case. Taken together, these results show the strength (or lack thereof) of different distributional class assumptions and quantify the value of competition in the minimax framework. 

There are many natural directions for future work. One potential direction is to consider other benchmarks, especially the second-best benchmark defined by the maximum revenue achievable by the optimal mechanism that knows the distribution $\mathbf{F}$ but not the buyers' valuations. This benchmark captures a different layer of information asymmetry: it captures the value of knowing the buyers' valuation distribution, whereas this paper's benchmark captures the value of knowing the buyers' valuations. We have discussed the first-best versus second-best benchmarks in Section~\ref{subsec:intro-model-discuss} and Appendix~\ref{app:subsec:model-discuss-benchmark}, and  we expect that novel methodological approaches are required to deal with the second-best benchmark.

Another direction is to consider the impact of side information. In this paper, we consider the setting with minimal knowledge (only the upper bound on the valuations is known). In practice, while we do not have complete knowledge about the environment, we often do have \textit{partial knowledge}. This partial knowledge about the distribution might be the mean or other moments (as in \citep{BachrachTalgamCohen22,Che22,shixin-ratio,kleer-optimal-stopping-robust}), the median or other quantiles (as in \citep{AllouahBahamouBesbes-pricing-single-point}), or samples from the i.i.d.~distribution (as in \citep{making-the-most-of-your-samples,AllouahBahamouBesbes22-pricing-with-samples}), or the shape of the distribution (such as regularity or monotone-hazard-rate). 
How to incorporate additional side information into the robust framework is mostly an open question, but it can be a bridge to bring the robust theory closer to practice. \sbdelete{It can justify why certain mechanism formats might be prevalent in practice (as in this work, showing that second-price auctions are robustly optimal across environments, or robustness of linear contracts in \cite{Carroll15-robust-linear-contracts}), suggest new forms of mechanisms (such as the inflated second price auctions of \cite{fu2015randomization,AllouahBesbes20, HartlineJohnsenLi20-benchmark-design}), tell us how to optimally use different forms of information and quantify its value. The partial information paradigm can also interact with the choice of objectives in interesting ways.} 

 While in the setting studied in this work, only minimax regret is nontrivial, with other forms of partial information mentioned earlier all of minimax regret, maximin ratio, and maximin revenue are nontrivial. Developing new methodologies to unify these objectives and better understand the tradeoffs between them should be a fruitful agenda. 
 
  Another related avenue is to develop mechanisms that optimally tradeoff robust guarantees when the additional information is correct and when it is not, as in the emerging area of algorithms with predictions/advice \citep{algorithms-with-predictions-survey}.

The other natural direction is to consider different notions of robustness and equilibrium concepts for buyers. While the class of DSIC mechanisms has its advantages in robustness and simplicity (as discussed in Section~\ref{subsec:intro-model-discuss}), there are non-truthful mechanisms, such as first-price auctions, that are used in practice, and it will be useful to have a way to robustly design them as well. 



 {
 \small
 \setstretch{1.20}
 \bibliographystyle{plainnat}
 \bibliography{references}


\newpage
\setstretch{1.35}
\appendix

\pagenumbering{arabic}
\renewcommand{\thepage}{App-\arabic{page}}
\renewcommand{\theequation}{\thesection-\arabic{equation}}
\renewcommand{\thelemma}{\thesection-\arabic{lemma}}
\renewcommand{\theproposition}{\thesection-\arabic{proposition}}
\setcounter{page}{1}
\setcounter{section}{0}
\setcounter{proposition}{0}
\setcounter{lemma}{0}
\setcounter{equation}{0}

\setcounter{footnote}{0}

%

\begin{center}
 {\Large \textbf{Electronic Companion: 
\\ On the Robustness of Second-Price Auctions\\ in Prior-Independent Mechanism Design\\}
\medskip
Jerry Anunrojwong\footnote{Columbia University, Graduate School of Business. Email: {\tt janunrojwong25@gsb.columbia.edu}}, 
Santiago R. Balseiro\footnote{Columbia University, Graduate School of Business. Email: {\tt srb2155@columbia.edu}.}, ~ and Omar Besbes\footnote{Columbia University, Graduate School of Business. Email: {\tt ob2105@columbia.edu}.}.}
\end{center}

\section{Deriving Candidate Mechanisms and Distributions}\label{app:sec:derive-candidates}

In this section, we derive candidate mechanisms and distributions. To do so, we will make additional assumptions. We highlight here that this section does not directly imply that we have a saddle point, but provides a ``guided search" for a saddle point. In Section \ref{subsec:proof-main-theorem}, we provide a formal verification  that the candidates obtained here are indeed a saddle point for the original problem. 

\subsection{Deriving Candidate Saddle Points}\label{sec:deriving-candidate}

In the next result, we establish that under some conditions, the worst-case distribution in a saddle point has to satisfy a particular ordinary differential equation (ODE).

\begin{proposition}\label{prop:opt-F-ode}
Suppose that the problem $\inf_{\Phi} \left[ \sup_{\mathbf{F}} R_{n}(\Phi,\mathbf{F}) \right]$ has an interior solution $\Phi^*$ which has a density and is supported on $[r^*,1]$, and is such that the problem $\sup_{\mathbf{F}} R_{n}(\Phi^*,\mathbf{F})$ over i.i.d.~$\mathbf{F}$ has a unique solution $\mathbf{F}^*$ with marginal $F^*$. Then $F^*$ satisfies the following ODE for $v \in (r^*,1)$:
$$1-F^*(v)-v(F^*)'(v) = 0.$$
\end{proposition}

Proposition~\ref{prop:opt-F-ode} suggests that if a SPA is minimax optimal, then the worst-case $\mathbf{F}^*$ in the saddle should satisfy $1-F^*(v)-v(F^*)'(v) = 0$ for $v \in (r^*,1)$. Together with the initial condition $F^*(r^*)$ = 0, the solution to this ODE is 
\begin{align*}\label{eq:isorev}
    F^*(v) = \begin{cases}
    0 &\text{ if } v \in [0,r^*] \\
    1-r^*/v &\text{ if } v \in [r^*,1) \tag{$F$-IsoRevenue} \\
    1 &\text{ if } v = 1
    \end{cases}
\end{align*}

Note that $F^*(1^-) = \lim_{v \uparrow 1} F^*(v) = 1-r^*$, so there is a point mass of size $r^*$ at $v = 1$. This distribution corresponds to an  \textit{isorevenue} distribution because if the willingness to pay distribution follows this distribution $F^*$, then under any posted price $p \in [r^*,1)$, the revenue is the same and is given by $p(1-F^*(p)) = r^*$.

Next, we fix the distribution to be of the form above, and ask what candidate mechanisms could lead to a saddle point in the class of SPA with random reserve. In other words, is there a SPA with random reserve such that $\mathbf{F}^*$ maximizes its regret? We establish that if there is such a reserve price distribution, then it must satisfy   a particular ODE.

\begin{proposition}\label{prop:opt-phi-ode}
Fix $r^* \in [0,1]$. Let $F^*$ be given by \emph{($F$-IsoRevenue)} and $\Phi^*$ be a CDF of a distribution that has a density and is supported on $[r^*,1]$. Suppose that  $\mathbf{F}^*$ is the unique solution to $\sup_{\mathbf{F}} R_{n}(\Phi^*,\mathbf{F})$. Then $\Phi^*$ satisfies the following ODE for $v \in (r^*,1)$:
\begin{align}\label{eq:ODE-Phi}
    (\Phi^*)'(v) + \frac{(n-1)r^*}{v(v-r^*)} \Phi^*(v) = \frac{1}{v}. \tag{ODE-$\Phi$}
\end{align}
\end{proposition}

This ODE uniquely describes $\Phi^*$ as a function of $r^*$.  The final question left is the candidate value of $r^*$. The initial condition $\Phi^*(r^*) = 0$ uniquely determines the only possible value of $r^*$. 


The next result characterizes, in quasi-closed form, a solution to the  ODE to determine the functional form of $\Phi^*$, the value of $r^*$, as well as the value of the regret under the candidate saddle point. 

\begin{proposition}\label{prop:ode-phi-solution}Let $r_{n}^* \in (0,1/n)$ be the unique solution to 
\begin{align*}
    (1-r^*)^{n-1} + \log(r^*) + \sum_{k=1}^{n-1} \frac{(1-r^*)^k}{k} = 0.
\end{align*}
Then the function $\Phi_{n}^*(v)$ given by 
\begin{align*}
    \Phi_{n}^*(v) = \left( \frac{v}{v-r_{n}^*} \right)^{n-1} \log \left( \frac{v}{r_{n}^*} \right) - \sum_{k=1}^{n-1} \frac{1}{k} \left( \frac{v}{v-r_{n}^*} \right)^{n-1-k} = \sum_{k=n}^{\infty} \frac{1}{k} \left( \frac{v-r_{n}^*}{v} \right)^{k-(n-1)}
\end{align*}
satisfies  the ODE \eqref{eq:ODE-Phi} as well as  $\Phi_n^*(r_n^*) = 0$ and $\Phi_n^*(1) = 1$.

Furthermore, under the isorevenue $\mathbf{F}_{n}^*$ given by $F_{n}^*(v) = 1 - r_{n}^*/v$ for $v \in [r_{n}^*,1)$ and $F_{n}^*(1) = 1$, the regret is given by
\begin{align*}
    R_{n}(\Phi^*_{n}, \mathbf{F}_{n}^*) = (1-r_{n}^*)^{n-1} - \int_{v=r_{n}^*}^{v=1} \left( 1 - \frac{r_{n}^*}{v} \right)^{n-1} dv.
\end{align*}
\end{proposition}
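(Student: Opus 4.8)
The statement bundles three claims: existence and uniqueness of $r_n^*\in(0,1/n)$; the fact that the two displayed formulas for $\Phi_n^*$ coincide, solve \eqref{eq:ODE-Phi}, and satisfy $\Phi_n^*(r_n^*)=0$, $\Phi_n^*(1)=1$; and the evaluation of $R_n(\Phi_n^*,\mathbf{F}_n^*)$. I would prove them in that order. For the first, set $g(r)=(1-r)^{n-1}+\log r+\sum_{k=1}^{n-1}(1-r)^k/k$ on $(0,1)$; differentiating and telescoping $\sum_{k=1}^{n-1}(1-r)^{k-1}=(1-(1-r)^{n-1})/r$ gives the clean form $g'(r)=(1-r)^{n-2}(1-nr)/r$, so $g$ is strictly increasing on $(0,1/n)$ and strictly decreasing on $(1/n,1)$. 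Since $g(r)\to-\infty$ as $r\downarrow 0$ while $g(1)=0$ for $n\ge 2$ (so $g(1/n)>0$ by the strict decrease past $1/n$), the intermediate value theorem plus strict monotonicity on $(0,1/n)$ produce a unique root $r_n^*$ there. (For $n=1$ the same argument applies verbatim with $g(r)=1+\log r$, giving $r_1^*=1/e<1$.)

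For the second claim, the efficient route is to integrate \eqref{eq:ODE-Phi} directly. Its integrating factor is $\mu(v)=(1-r_n^*/v)^{n-1}$, and one checks $\mu'(v)/\mu(v)=(n-1)r_n^*/\big(v(v-r_n^*)\big)$, so \eqref{eq:ODE-Phi} is equivalent to $(\mu\Phi^*)'(v)=\mu(v)/v$; imposing $\Phi_n^*(r_n^*)=0$ yields $\mu(v)\Phi_n^*(v)=\int_{r_n^*}^v t^{-1}(1-r_n^*/t)^{n-1}\,dt$. The substitution $s=1-r_n^*/t$ (so $dt/t=ds/(1-s)$) turns the right side into $\int_0^w s^{n-1}/(1-s)\,ds$ with $w=(v-r_n^*)/v$. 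Expanding $s^{n-1}/(1-s)=\sum_{k\ge n-1}s^k$ and integrating termwise gives $\Phi_n^*(v)=w^{-(n-1)}\sum_{m\ge n}w^m/m=\sum_{m\ge n}w^{m-(n-1)}/m$, the series form (which incidentally makes $\Phi_n^*\ge 0$, nondecreasing, and $\Phi_n^*(r_n^*)=0$ transparent); writing instead $s^{n-1}/(1-s)=(1-s)^{-1}-\sum_{k=0}^{n-2}s^k$ and using $1-w=r_n^*/v$ reproduces the closed form. This simultaneously establishes that the two formulas agree and that $\Phi_n^*$ solves \eqref{eq:ODE-Phi}. Finally, evaluating the closed form at $v=1$ and multiplying through by $(1-r_n^*)^{n-1}$ shows that $\Phi_n^*(1)=1$ is \emph{exactly} the defining equation of $r_n^*$ — this is where the choice of $r_n^*$ enters.

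For the regret, I would apply the $\Phi$-linear representation \eqref{eq:regret-linear-phi} of Proposition~\ref{prop:regret-expressions-general} with $\Phi=\Phi_n^*$ and $\mathbf{F}=\mathbf{F}_n^*$. Here the point mass at $1$ is $f_1=r_n^*$ and $F_n^*\equiv 0$ on $[0,r_n^*]$, so the $\int_0^{r_n^*}$ term vanishes. The crucial simplification is that on $(r_n^*,1)$ the isorevenue marginal satisfies $1-F_n^*(v)-v(F_n^*)'(v)=r_n^*/v-r_n^*/v=0$, which kills the entire $\Phi$-linear term inside the integrand of \eqref{eq:regret-linear-phi}; combined with $\Phi_n^*(1)=1$ this leaves $R_n(\Phi_n^*,\mathbf{F}_n^*)=(1-r_n^*)^n-\int_{r_n^*}^1(1-r_n^*/v)^n\,dv$. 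To match the stated $(n-1)$-form I would invoke the identity $\int_{r_n^*}^1 v^{-1}(1-r_n^*/v)^{n-1}\,dv=\mu(1)\Phi_n^*(1)=(1-r_n^*)^{n-1}$ from the integration above: multiplying it by $r_n^*$ and rearranging gives $\int_{r_n^*}^1\big[(1-r_n^*/v)^{n-1}-(1-r_n^*/v)^n\big]dv=(1-r_n^*)^{n-1}-(1-r_n^*)^n$, which converts the exponent-$n$ expression into $(1-r_n^*)^{n-1}-\int_{r_n^*}^1(1-r_n^*/v)^{n-1}\,dv$.

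I do not anticipate a genuine obstacle — everything reduces to elementary calculus — but two points need care. First, one must keep the two representations of $\Phi_n^*$ straight; the substitution $s=1-r_n^*/t$ (equivalently the expansion of $s^{n-1}/(1-s)$) is what keeps this painless and sidesteps a detour through binomial/harmonic-number identities. Second, the passage between the exponent-$n$ and exponent-$(n-1)$ forms of the regret is powered precisely by $\Phi_n^*(1)=1$, i.e.\ by the defining equation of $r_n^*$, and the point mass $f_1=r_n^*$ must be accounted for correctly when plugging $\mathbf{F}_n^*$ into \eqref{eq:regret-linear-phi}.
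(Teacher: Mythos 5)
Your proposal is correct and follows essentially the same route as the paper: the same integrating factor $\left(1-r_n^*/v\right)^{n-1}$, the same substitution $w=(v-r_n^*)/v$ to integrate $w^{n-1}/(1-w)$, the same monotonicity analysis of the defining equation to locate the unique root in $(0,1/n)$, and the identification of $\Phi_n^*(1)=1$ with that equation. The paper dismisses the regret evaluation as a "direct calculation," and your derivation via \eqref{eq:regret-linear-phi} — using that $1-F_n^*(v)-v(F_n^*)'(v)=0$ kills the $\Phi$-linear term and that $\int_{r_n^*}^{1}v^{-1}(1-r_n^*/v)^{n-1}\,dv=(1-r_n^*)^{n-1}$ converts the exponent-$n$ form to the stated exponent-$(n-1)$ form — is a correct and complete way to supply it.
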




The above yields a candidate saddle point for the original problem. We next provide a formal verification that the mechanism obtained minimizes the worst case regret among all mechanisms (and not only SPA mechanisms), and that the candidate distribution is indeed the worst-case distribution against this candidate mechanism.





\section{Proofs and Discussions from Section~\ref{sec:minimax-regret-main} and Appendix~\ref{app:sec:derive-candidates}}\label{app:sec:minimax-regret-main} 

\begin{proof}[Proof of Step 2 (Seller's Optimality) from Section~\ref{subsec:proof-main-theorem}]
    Because $(\mathbf{x},\mathbf{p})$ is dominant strategy incentive compatible, \cite{Myerson81} gives the following payment characterization
\begin{align*}
    p_i(v_i,\mathbf{v}_{-i}) = p_i(0,\mathbf{v}_{-i}) + v_i x_i(v_i,\mathbf{v}_{-i}) - \int_{\tilde{v}_i=0}^{\tilde{v}_i=v_i} x_i(\tilde{v}_i,\mathbf{v}_{-i}) d\tilde{v}_i
\end{align*}

Individual rationality requires that $p_i(0,\mathbf{v}_{-i}) \leq 0$. Because the seller aims to maximize revenue, it is optimal to set $p_i(0,\mathbf{v}_{-i}) = 0$. We can then write, for fixed $\mathbf{v}_{-i}$,
\begin{align*}
    \bE_{v_i \sim F_{n}^*}[ p_i(\mathbf{v}) ] &= \int_{v_i=r_{n}^*}^{v_i=1} \frac{r^*}{v_i^2} p_i(v_i,\mathbf{v}_{-i}) dv_i + r_{n}^* p_i(v_i=1,\mathbf{v}_{-i}) \\
    &= \int_{v_i=r_{n}^*}^{v_i=1} \frac{r_{n}^*}{v_i^2} \left( v_i x_i(v_i,\mathbf{v}_{-i}) - \int_{\tilde{v}_i=0}^{\tilde{v}_i=v_i} x_i(\tilde{v}_i,\mathbf{v}_{-i}) d \tilde{v}_i \right) dv_i \\
    &+ r_{n}^* \left(  x_i(v_i=1,\mathbf{v}_{-i}) - \int_{v_i=0}^{v_i=r_n^*} x_i(v_i,\mathbf{v}_{-i}) dv_i - \int_{v_i=r_{n}^*}^{v_i=1} x_i(v_i,\mathbf{v}_{-i}) dv_i \right) 
\end{align*}
We can compute the double integral as follows
\begin{align*}
    &\int_{v_i=r_{n}^*}^{v_i=1} \int_{\tilde{v}_i=0}^{\tilde{v}_i=v_i} \frac{r_{n}^*}{v_i^2} x_i(\tilde{v}_i,\mathbf{v}_{-i}) d \tilde{v}_i dv_i \\
    &= \int_{\tilde{v}_i=0}^{\tilde{v}_i=r_{n}^*} \int_{v_i=r_{n}^*}^{v_i=1} \frac{r_{n}^*}{v_i^2} x_i(\tilde{v}_i,\mathbf{v}_{-i}) dv_i d \tilde{v}_i + \int_{\tilde{v}_i=r_{n}^*}^{\tilde{v}_i=1} \int_{v_i=\tilde{v}_i}^{\tilde{v}_i=1} \frac{r_{n}^*}{v_i^2} x_i(\tilde{v}_i,\mathbf{v}_{-i}) dv_i d \tilde{v}_i \\
    &= \int_{v_i=0}^{v_i=r_{n}^*} \left( 1 - r_{n}^* \right) x_i(v_i,\mathbf{v}_{-i}) dv_i + \int_{v_i=r_{n}^*}^{v_i=1} \left( \frac{r_{n}^*}{v_i} - r_{n}^* \right) x_i(v_i,\mathbf{v}_{-i}) dv_i 
\end{align*}
Substituting this expression gives 
\begin{align*}
    \bE_{v_i \sim F_{n}^*}[ p_i(\mathbf{v})] 
    &= r_{n}^*   x_i(v_i=1,v_{-i}) - \int_{v_i=0}^{v_i=r_{n}^*} x_i(v_i,v_{-i}) dv_i
\end{align*}
Taking the expectation of the above over $\mathbf{v}_{-i} \sim (F_{n}^*)^{n-1}$ gives agent $i$'s revenue $\bE_{\mathbf{v} \sim (F_{n}^*)^{n}} [p_i(\mathbf{v})]$. Therefore, the objective is
\begin{align*}
    \sum_{i=1}^{n} \bE_{\mathbf{v} \sim (F_{n}^*)^{n}} [p_i(\mathbf{v})] = r^* \sum_{i=1}^{n} \bE_{\mathbf{v}_{-i} \sim (F_{n}^*)^{n-1}} [x_i(v_i=1,\mathbf{v}_{-i})] - \sum_{i=1}^{n} \int_{v_i=0}^{v_i=r_{n}^*} \bE_{\mathbf{v}_{-i} \sim (F_{n}^*)^{n-1}} [ x_i(v_i,\mathbf{v}_{-i}) ] dv_i .
\end{align*}

Given that $x_i$ has to satisfy the constraints $x_i(\mathbf{v}) \geq 0$, $ \forall i, \forall \mathbf{v} \in [0,1]^n$ and $\sum_{i=1}^{n} x_i(\mathbf{v}) \leq 1 \forall \mathbf{v} \in [0,1]^n$, the objective is maximized if and only if for any nonempty set $S \subseteq \{1,2,\dots,n\}$ the following holds:
\begin{enumerate}
    \item[(i)] $\text{ if } v_i = 1 \text{ for } i \in S \text{ and } v_i \in (r_{n}^*,1) \text{ for } i \not\in S \text{ then } \sum_{i \in S} x_i(\mathbf{v}) = 1 \text{ and } x_i(\mathbf{v}) = 0 \text{ for } i \not\in S$, and
    \item[(ii)] $\text{ for every } i \text{ if } v_i < r_{n}^* \text{ and } v_j > r_{n}^* \text{ for all } j \neq i \text{ then } x_i(\mathbf{v}) = 0$.
\end{enumerate}
In words, (i) if there is at least one buyer with valuation $1$ and every buyer's valuation is above $r_{n}^*$, then the mechanism allocates with probability one and can only allocate to agent(s) with valuation $1$; (ii) if a buyer's valuation is below $r_{n}^*$ and all other buyers' valuations are above $r_{n}^*$, then that buyer is never allocated. It is clear that a second-price auction with random reserve drawn from a CDF $\Phi_{n}^*$ with a density and supported on $[r_{n}^*,1]$ corresponds to an allocation rule that satisfies the above condition, so the mechanism is optimal, and the saddle inequality is proved.
\end{proof}

\begin{proof}[Proof of Proposition~\ref{prop:regret-expressions-general}]

Because $\Phi$ has a density $\phi$, the regret of $\text{SPA}(\Phi)$ at the valuation vector $\mathbf{v}$ is given by
\begin{align*}
    &  v^{(1)} -  \bE_{p \sim \Phi} \left[ \max(v^{(2)},p) \1(v^{(1)} \geq p) \right] \\
    &=  v^{(1)} -  \left(   \int_{p=0}^{p=v^{(2)}} v^{(2)} \phi(p) dp + \int_{p=v^{(2)}}^{p=v^{(1)}} p \phi(p) dp  \right) \\
    &= v^{(1)} - \left[ v^{(2)}  \Phi(v^{(2)})  + \left[ p \Phi(p) \right]_{p=v^{(2)}}^{p=v^{(1)}} - \int_{p=v^{(2)}}^{p=v^{(1)}} \Phi(p) dp \right] \\
    &= v^{(1)} - v^{(1)} \Phi(v^{(1)}) + \int_{p=v^{(2)}}^{p=v^{(1)}} \Phi(p) dp
\end{align*}

Therefore,
\begin{align*}
    R_{n}(\Phi,\mathbf{F}) = \bE_{\mathbf{v} \sim \mathbf{F}} \left[ v^{(1)} - v^{(1)} \Phi(v^{(1)}) \right] + \bE_{ \mathbf{v} \sim \mathbf{F} } \left[ \int_{p=v^{(2)}}^{p=v^{(1)}} \Phi(p) dp \right] 
\end{align*}

We will compute the two terms separately. We capture the calculation of the first term in the following lemma.
\begin{lemma}\label{lem:first-term-calc-gen}
    Let $h(v)$ be a continuous function that is differentiable everywhere except perhaps at $v=r^*$, and $G$ be a distribution on $[0,1]$, then we have
    \begin{align*}
        \int_{v \in [0,1]} h(v) dG(v) &= h(0) G(r^*) + \int_{v=0}^{v=r^*} h'(v) (G(r^*)-G(v)) dv 
        \\
        &+ h(r^*) (G(1) - G(r^*)) + \int_{v=r^*}^{v=1} h'(v) (G(1)-G(v)) dv 
    \end{align*}
\end{lemma}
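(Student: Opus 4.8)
The plan is to prove this by splitting the integral at $r^*$ and integrating by parts on each of the two pieces. Treat $G$ as its right-continuous CDF with $G(0^-)=0$ (supported on $[0,1]$), and write
$\int_{v\in[0,1]} h(v)\,dG(v) = \int_{[0,r^*]} h(v)\,dG(v) + \int_{(r^*,1]} h(v)\,dG(v)$, closing the first piece on the right and leaving the second open at its left endpoint; this is the right way to split so that any atom $G$ might place at $r^*$ (or at $0$) is counted exactly once.

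On the first piece, Riemann--Stieltjes integration by parts (valid since $h$ is continuous and so shares no discontinuity with $G$) gives $\int_{[0,r^*]} h\,dG = h(r^*)G(r^*) - h(0)G(0^-) - \int_0^{r^*} G(v)h'(v)\,dv = h(r^*)G(r^*) - \int_0^{r^*} G(v)h'(v)\,dv$. Then use the fundamental theorem of calculus, $h(r^*)=h(0)+\int_0^{r^*} h'(v)\,dv$ (which holds under the regularity implicit in the statement, e.g.\ $h$ piecewise $C^1$, so $h$ is absolutely continuous on $[0,r^*]$), to write $h(r^*)G(r^*) = h(0)G(r^*) + \int_0^{r^*} G(r^*)h'(v)\,dv$; combining yields the first line of the claimed identity, namely $h(0)G(r^*) + \int_0^{r^*} h'(v)\bigl(G(r^*)-G(v)\bigr)\,dv$. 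Repeating the same two steps on $(r^*,1]$ --- integration by parts, then $h(1)=h(r^*)+\int_{r^*}^1 h'(v)\,dv$ --- gives $h(r^*)\bigl(G(1)-G(r^*)\bigr) + \int_{r^*}^1 h'(v)\bigl(G(1)-G(v)\bigr)\,dv$, and adding the two pieces proves the lemma. (Alternatively, one avoids Stieltjes integration by parts entirely: substitute $h(v)=h(0)+\int_0^v h'$ on $[0,r^*]$ and $h(v)=h(r^*)+\int_{r^*}^v h'$ on $(r^*,1]$ directly into the integral and interchange the order of integration by Fubini--Tonelli, using $\int_{[t,r^*]} dG = G(r^*)-G(t^-)$ and the fact that $G(t^-)$ and $G(t)$ agree off a countable set, hence are interchangeable inside the Lebesgue integral over $t$.)

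The only delicate point is the endpoint/atom bookkeeping: one must fix the right-continuous convention with $G(0^-)=0$ and split the domain as $[0,r^*]\cup(r^*,1]$ so that a possible atom of $G$ at $r^*$ or $0$ is weighted correctly; with that convention the identity holds verbatim. In practice this never bites, since in every application the relevant $G$ is the CDF of an order statistic of a distribution on $[0,1]^n$, which in the cases we use has atoms only on the boundary. Everything else is routine and I do not anticipate a substantive obstacle.
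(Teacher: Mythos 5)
Your proof is correct and essentially matches the paper's: the paper splits the domain at $r^*$, substitutes $h(v)=h(0)+\int_0^v h'$ (resp.\ $h(v)=h(r^*)+\int_{r^*}^v h'$) and swaps the order of integration by Fubini, which is exactly the alternative route you describe in your closing parenthetical; your primary route via Stieltjes integration by parts is just the same manipulation packaged differently. Your endpoint/atom bookkeeping and the remark that the fundamental theorem of calculus needs the (implicitly assumed) absolute continuity of $h$ are both fine and consistent with how the lemma is used.
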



\begin{proof}[Proof of Lemma~\ref{lem:first-term-calc-gen}]
    We have
    \begin{align*}
        \int_{v \in [0,r^*]} h(v) dG(v) &= \int_{v \in [0,r^*]} \left( h(0) + \int_{\tilde{v}=0}^{\tilde{v}=v} h'(\tilde{v}) d\tilde{v} \right) d G(v) 
        \\
        &= h(0) G([0,r^*]) + \int_{\tilde{v}=0}^{\tilde{v}=r^*} \int_{v=\tilde{v}}^{v=r^*} h'(\tilde{v}) dG(v) d\tilde{v} 
        \\
        &= h(0) G(r^*) + \int_{\tilde{v}=0}^{\tilde{v}=r^*} h'(\tilde{v}) (G(r^*)-G(\tilde{v})) d\tilde{v} 
    \end{align*}
    where the first line holds by the fundamental theorem of calculus, and the second line holds by Fubini. Therefore, $\int_{v \in [0,r^*]} h(v) dG(v)$ is equal to the expression in the first line of the lemma. Similarly, $\int_{v \in (r^*,1]} h(v) dG(v)$ is equal to the expression in the second line of the lemma. Finally, $\int_{v \in [0,1]} h(v) dG(v)$ is the sum of those two terms.
\end{proof}

We let $h(v) = v - v \Phi(v)$ and $G(v) = \mathbf{F}_n^{(1)}(v)$ in Lemma~\ref{lem:first-term-calc-gen} to get that the first term of the regret is
\begin{align*}
    \bE_{\mathbf{v} \sim \mathbf{F}} \left[ v^{(1)} - v^{(1)} \Phi(v^{(1)}) \right] = r^* -  \int_{v=0}^{v=r^*} \mathbf{F}_n^{(1)}(v) dv + \int_{v=r^*}^{v=1} (1 -\Phi(v) - v\Phi'(v)) (1 - \mathbf{F}_n^{(1)}(v)) dp
\end{align*}
Now we compute the second term.
\begin{align*}
    \bE_{\mathbf{v} \sim \mathbf{F}} \left[  \int_{p=v^{(2)}}^{p=v^{(1)}} \Phi(p) dp \right] 
    = \int_{\mathbf{v} \in [0,1]^n} \int_{p \in [0,1]} \Phi(p) \1(v^{(2)} \leq p < v^{(1)}) dp d\mathbf{F}(\mathbf{v}) \\
    = \int_{p \in [0,1]} \int_{\mathbf{v} \in [0,1]^n} \Phi(p) \1(v^{(2)} \leq p < v^{(1)}) d\mathbf{F}(\mathbf{v}) dp = \int_{v=r^*}^{v=1} \Phi(v) (\mathbf{F}_n^{(2)}(v) - \mathbf{F}_n^{(1)}(v)) dv
\end{align*}

where the last equality holds because $\Phi$ has support on $[r^*,1]$. Adding up the two terms, we get the (Regret-$\mathbf{F}$) expression.

Now we assume that $\mathbf{F}$ is i.i.d., then $\mathbf{F}_n^{(1)}(v) = F(v)^n$ and $\Pr_{\mathbf{F}}(v^{(2)} \leq v < v^{(1)}) = n F(v)^{n-1}(1-F(v))$, so we get the (Regret-$F$) expression.

To derive the (Regret-$\Phi$) expression that is linear in $\Phi$, we can use integration by part to transform the term involving $\Phi'(v)$ into an expression linear in $\Phi$ as follows.
\begin{align*}
    & \int_{v=r^*}^{v=1} v(1-F(v)^n) \Phi'(v) dv \\
    &= \left[ v(1-F(v)^n) \Phi(v) \right]_{v=0}^{v=1} - \int_{v=r^*}^{v=1} \frac{d}{dv} (v (1-F(v)^n) ) \Phi(v) dv  \\
    &= -\int_{v=r^*}^{v=1} (1-F(v)^n - nvF(v)^{n-1} F'(v)) \Phi(v) dv
\end{align*}
Plugging this into the (Regret-$F$) expression, we get the (Regret-$\Phi$) expression as desired.

\end{proof}

\begin{proof}[Proof of Proposition~\ref{prop:opt-F-ode}]
Define $h(\Phi) = \sup_{\mathbf{F}} R_{n}(\Phi,\mathbf{F})$ to be the worst-case regret for a distribution of reserves $\Phi$ and $Z_0(\Phi) = \{\tilde{\mathbf{F}}: R_{n}(\Phi,\tilde{\mathbf{F}}) = \sup_{\mathbf{F}} R_{n}(\Phi,\mathbf{F}) \}$ the set of optimal worst-case distributions for $\Phi$. From the (Regret-$\Phi$) expression given in Proposition~\ref{prop:regret-expressions-general}, $R_{n}(\Phi,\mathbf{F})$ is linear. Because the pointwise supremum of linear functions is convex, we have that $h(\Phi)$ is convex in $\Phi$.  Since we assumed that $Z_0(\Phi^*) = \{\mathbf{F}^*\}$, Danskin's theorem then implies that $h$ is differentiable at $\Phi^*$, and the derivative of $h$ at $\Phi^*$ is given by $\frac{\partial h}{\partial \Phi}(\Phi^*) = \frac{\partial R_{n}}{\partial \Phi} (\Phi^*,\mathbf{F}^*).$ Therefore, we must have $\frac{\partial h}{\partial \Phi}(\Phi^*) = 0$ since $h$ is convex and has a minimum at $\Phi^*$, which implies that $\frac{\partial R_{n}}{\partial \Phi} (\Phi^*,\mathbf{F}^*)= 0$. Because $R_{n}(\Phi,F^*)$ is linear in $\Phi$, this means the coefficient of $\Phi(v)$ in $R_{n}(\Phi,\mathbf{F}^*)$ should be zero for $v \in (r^*,1)$, which implies $1-F^*(v)-v(F^*)'(v) = 0$ as desired.
\end{proof}


\begin{proof}[Proof of Proposition~\ref{prop:opt-phi-ode}]
We note that the (Regret-$F$) expression for $R_{n}(\Phi^*,\mathbf{F})$
\begin{align*}
    r^*     - \int_{v=0}^{v=r^*} F(v)^n dv \nonumber + \int_{v=r^*}^{v=1} (\lambda -v (\Phi^*)'(v) - \Phi(v) ) (1-F(v)^n ) +  \Phi^*(v) n F(v)^{n-1}(1-F(v)) dv
\end{align*}
involves $F(v)$ only. To maximize this expression we can set $F(v) = 0$ for $v \in [0,r^*]$. We can further maximize the integrand for $v \in [r^*,1]$ pointwise. We have the constraint that $F(v)$ must be nondecreasing, but because the unique optimal solution is $F^*$ which is strictly increasing, this monotonicity constraint must not bind, and the pointwise optimization procedure must give $F^*$ as a solution.  
Taking the derivative of the expression under the integral with respect to $F(v)$ gives
\begin{align*}
    (1-\Phi^*(v) - v (\Phi^*)'(v)) (-nF(v)^{n-1}) + n \Phi^*(v) ( (n-1) F(v)^{n-2} - n F(v)^{n-1} ) = 0  \\
    (1+(n-1) \Phi^*(v) - v (\Phi^*)'(v))(1-F(v)) -1 + v(\Phi^*)'(v) = 0
\end{align*}

From the previous subsection, we want $\Phi^*$ to be such that the optimal $F$ from this is $F^*(v) = 1 - \frac{r^*}{v}$ for $v \geq r^*$. Substituting this gives
\begin{align}
    (\Phi^*)'(v) + \frac{(n-1)r^*}{v(v-r^*)} \Phi^*(v) = \frac{\lambda}{v} \tag{ODE-$\Phi$}
\end{align}
as desired.\end{proof}

\begin{proof}[Proof of Proposition~\ref{prop:ode-phi-solution}]

We will sometimes omit the star and the subscripts for simplicity. We solve the ODE with the integrating factor method. With the ODE of the form $\Phi' + g \Phi = h$ where $g $ and $h$ do not depend on $\Phi$, we multiply both sides by the integrating factor $I := \exp \left( \int g dv \right)$ which has the property $I' = gI$, so $(I\Phi)' = I \Phi' + gI \Phi = I h$. In our case, the integrating factor is 
\begin{align*}
    \exp \left( \int \frac{(n-1)r^*}{v(v-r^*)} dv \right) = \exp \left( \int \frac{(n-1)}{v-r^*} - \frac{(n-1)}{v} dv \right) = \left( \frac{v-r^*}{v} \right)^{n-1}
\end{align*}
If we multiply by $\left( \frac{v-r^*}{v} \right)^{n-1}$ on both sides we get
\begin{align*}
    \frac{d}{dv} \left[ \left( \frac{v-r^*}{v} \right)^{n-1} \Phi(v) \right] = \frac{1}{v} \left( \frac{v-r^*}{v} \right)^{n-1}
\end{align*}

Now we integrate on both sides. Using the change of variable $w = (v-r^*)/v$ we can integrate the right hand side as follows
\begin{align*}
    \int \frac{1}{v} \left( \frac{v-r^*}{v} \right)^{n-1} dv = \int \frac{w^{n-1}}{1-w} dw = \text{const} -\log(1-w) - \sum_{k=1}^{n-1} \frac{w^k}{k}
\end{align*}


Therefore, there is a constant $c$ such that
\begin{align*}
    \left( \frac{v-r^*}{v} \right)^{n-1} \Phi(v) =  c +  \log \left( \frac{v}{r^*} \right) - \sum_{k=1}^{n-1} \frac{1}{k} \left( \frac{v-r^*}{v} \right)^{k}
\end{align*}

We will now determine the constant $c$. Using $\Phi(r^*) = 0$, letting $v \downarrow r^*$ we get that the left hand side goes to 0 and the right hand side goes to $c$, so $c = 0$, and we have
\begin{align*}
    \Phi(v) = \left( \frac{v}{v-r^*} \right)^{n-1} \log \left( \frac{v}{r^*} \right) - \sum_{k=1}^{n-1} \frac{1}{k} \left( \frac{v}{v-r^*} \right)^{n-1-k}
\end{align*}


We use the Taylor series for $\log(1+x)$ with $x = (v-r^*)/r^*$ to get the other expression for $\Phi(v)$:
\begin{align*}
    \Phi(v) =  \sum_{k=n}^{\infty} \frac{1}{k} \left( \frac{v-r^*}{v} \right)^{k-(n-1)}
\end{align*}

We pin down the value of $r^*$ by requiring that $\Phi(1) = 1$ which means that $r^*$ must solve the equation
\begin{align*}
    \left(1-r^*\right)^{n-1} +  \log\left(r^*\right) + \sum_{k=1}^{n-1} \frac{(1-r^*)^k}{k} = 0
\end{align*}


We can check that the left hand side as a function of $r^*$ is increasing on $r^* < 1/n$, is decreasing on $r^* > 1/n$, goes to $-\infty$ as $r^* \downarrow 0$, and goes to $0^+$ as $r^* \uparrow 1$. So this equation has a unique solution in $(0,1/n)$.

Lastly, we can derive the regret expression by direct calculation.\end{proof}


\section{Proofs and Discussions from Section~\ref{sec:extend-other-F}}\label{app:sec:extend-other-F} 

\begin{proof}[Proof of Proposition~\ref{prop:minimax-mix-eq-iid}]

If $\mathbf{F}$ is i.i.d.~with marginal $F$, then we will write $R_{n}(m,\mathbf{F}) = \tilde{R}(m,F)$. Then, if $\mathbf{F}$ is a mixture of i.i.d.~distributions $G \sim \mu$, we have $R_{n}(m,\mathbf{F}) = \bE_{G \sim \mu} \left[ R(m,G) \right]$. What we want to prove $\mathcal{R}_{n}(\mathcal{F}_{\text{mix}}) = \mathcal{R}_{n}(\mathcal{F}_{\text{iid}})$ is equivalent to 
\begin{align*}
    \inf_{m} \sup_{\mu}  \bE_{G \sim \mu} \tilde{R}(m,G) = \inf_{m} \sup_{F} \tilde{R}(m,F).
\end{align*}

For each $G$, $\tilde{R}(m,G) \leq \sup_{F} \tilde{R}(m,F)$, so $\bE_{G \sim \mu} \tilde{R}(m,G) \leq \sup_{F} \tilde{R}(m,F)$ for every $\mu$. Therefore, $\inf_{m} \sup_{\mu}  \bE_{G \sim \mu} \tilde{R}(m,G) \leq \inf_{m} \sup_{F} \tilde{R}(m,F)$.

Conversely, take any $\epsilon > 0$. By the definition of sup, for any $m$ there exists a $\hat{F}_m$ such that $\sup_{F} \tilde{R}(m,F) \leq \tilde{R}(m,\hat{F}_m) + \epsilon$. Therefore, $\inf_{m} \sup_{F} \tilde{R}(m,F) \leq \inf_{m} \tilde{R}(m,\hat{F}_m) + \epsilon \leq \inf_{m} \sup_{\mu} \bE_{G \sim \mu} \tilde{R}(m,G) + \epsilon$, where the last inequality holds because we can take $\mu$ to be the distribution that puts all the weight on $\hat{F}_m$. Since $\epsilon > 0$ is arbitrary, our desired equality is proved.\end{proof}

\begin{proof}[Proof of Proposition~\ref{prop:minimax-regret-arb}]
We will first show that $R_{n}(m^*,\mathbf{F}) \leq R_{n}(m^*,\mathbf{F}^{*})$. By symmetry, $R_{n}(m^*,\mathbf{F}^{*,k}) = R_{n}(m^*,\mathbf{F}^{*,1})$, so $R_{n}(m^*,\mathbf{F}^*) = \bE_{\mathbf{v} \sim \mathbf{F}^*}  R_{n}(m^*,\mathbf{v}) = \frac{1}{n} \sum_{k=1}^{n} R_{n}(m^*,\mathbf{F}^{*,k}) = R_{n}(m^*,\mathbf{F}^{*,1})$.
Therefore, equivalently we need to show that $R_{n}(\Phi^*,\mathbf{v}) \leq 1/e$ and that equality holds when $\mathbf{v}$ is on the support of $\mathbf{F}^{*,1}$. We have already derived the pointwise regret expression:
\begin{align*}
    R_{n}(\Phi^*,\mathbf{v}) =  v^{(1)} - v^{(1)} \Phi^*(v^{(1)}) + \int_{p = v^{(2)}}^{p=v^{(1)}} \Phi^*(p) dp
\end{align*}
We consider 3 cases.

Case 1. $v^{(1)} \geq 1/e \geq v^{(2)}$, then 
\begin{align*}
    R_{n}(\Phi^*,\mathbf{v}) =  v^{(1)} - v^{(1)} ( \log(v^{(1)}) + 1 ) + \int_{p=e^{-1/\lambda}}^{p=v^{(1)}} (\log(p) + 1) dp \\
    =  v^{(1)} - v^{(1)} (1 + \log(v^{(1)}) ) + \left[ ( p \log(p) - p ) + p \right]_{p=1/e}^{p=v^{(1)}} = 1/e
\end{align*}

Case 2. $v^{(1)} \geq v^{(2)} > 1/e$, then
\begin{align*}
    R_{n}(\Phi^*,\mathbf{v}) =  v^{(1)} - v^{(1)} ( \log(v^{(1)}) + 1 ) + \left[ ( p \log(p) - p ) + p \right]_{p=v^{(2)}}^{p=v^{(1)}} = -  v^{(2)} \log(v^{(2)})  < 1/e
\end{align*}
because the derivative of the last expression is $- \log(v^{(2)})-1 < 0$.

Case 3. $v^{(1)} < 1/e$, then
\begin{align*}
    R_{n}(\Phi^*,\mathbf{v}) =  v^{(1)} < 1/e
\end{align*}

Next, we show that $R_{n}(m^*,\mathbf{F}^*) \leq R_{n}(m,\mathbf{F}^*)$. We argue analogously to the proof of Theorem~\ref{thm:minimax-lambda-regret-main} that this is the same as showing that $m^*$ maximizes the expected revenue among all mechanisms. We also use the Myerson's payment characterization and do a similar calculation to get


\begin{align*}
    \bE_{\mathbf{v} \sim \mathbf{F}^*}\left[ \sum_{i=1}^{n} p_i(\mathbf{v}) \right] &= \frac{1}{n} \sum_{k=1}^{n} \bE_{v_k \sim F_k^*} \left[ p_k(v_k,\mathbf{v}_{-k}=\mathbf{0}) \right]
    \\&= \frac{1}{n} \sum_{k=1}^{n} \left( r^* x_k(v_k=1,\mathbf{v}_{-k}=\mathbf{0}) - \int_{\tilde{v}_k=0}^{\tilde{v}_k=r^*} x_k(v_k=\tilde{v}_k,\mathbf{v}_{-k}=\mathbf{0}) dv_k \right)
\end{align*}
Therefore, the expected revenue is maximized if and only if for every $k$, $x_k(v_k=1,\mathbf{v}_{-k}=\mathbf{0}) = 1$ and $x_k(v_k=\tilde{v}_k,\mathbf{v}_{-k}=\mathbf{0}) = 0$ for $\tilde{v}_k < r^*$, and $m^*= \text{SPA}(\Phi^*)$ satisfies this condition.\end{proof}


\begin{proof}[Proof of Proposition~\ref{lem:aff-mix}] 

To prove the result, we construct explicit distributions that fall in one class but not the other.
 
We will denote a distribution that takes one of the $m$ values $\{1,\ldots,m\}$ with an $m$-tuple, say, $\mathbf{p} = (p_1,\dots,p_m)$ is a distribution that takes value $k$ with probability $p_k$ for $1 \leq k \leq m$. We represent the joint distribution of two agents' values as a matrix, say $(p_{i,j})_{1 \leq i,j \leq m}$ takes the value $(i,j)$ with probability $p_{i,j}$ for each $i,j$. Then, the joint distribution for two agents when each agent has i.i.d.~values following $p$ is the outer product $\mathbf{p}\mathbf{p}^\top$ whose $(i,j)$ element is $p_i p_j$.

\begin{example}[mixture of i.i.d.~but not affiliated]
Let $\theta = 1/2, \mathbf{p} = (3/4,1/8,1/8), \mathbf{q} = (1/4,1/2,1/4)$. The distribution
\begin{align*}
    \theta \mathbf{p} \mathbf{p}^\top + (1-\theta) \mathbf{q} \mathbf{q}^\top =
    \begin{pmatrix}
    5/16 & 7/64 & 5/64 \\
    7/64 & 17/128 & 9/128 \\
    5/64 & 9/128 & 5/128
    \end{pmatrix}
\end{align*}
is a mixture of i.i.d.~but is not affiliated because $(7/64)(9/128) < (5/64)(17/128)$.
\end{example}
\begin{example}[affiliated but not mixture of i.i.d.]\!\!\!\footnote{We thank Vijay Kamble for pointing out that the example in the previous version of the paper was incorrect.}
Consider an exchangeable distribution over 3 values and $n=2$ agents given by
\begin{align*}
    P \equiv \begin{pmatrix}
    p_{11} & p_{12} & p_{13} \\
    p_{12} & p_{22} & p_{23} \\
    p_{13} & p_{23} & p_{33}
    \end{pmatrix}\,.
\end{align*}
The matrix is symmetric because the distribution is exchangeable. We can check that the distribution is affiliated if and only if 
\begin{align*}
    p_{11} p_{22} &\geq p_{12}^2 \\
    p_{11} p_{23} &\geq  p_{13}p_{12} \\
    p_{11} p_{33} &\geq p_{13}^2 \\
    p_{12} p_{23} &\geq p_{13} p_{22} \\
    p_{12} p_{33} &\geq p_{13} p_{23} \\
    p_{22} p_{33} &\geq p_{23}^2 ,
\end{align*}
that is, every $2 \times 2$ submatrix of $P$ has nonnegative determinant.

Moreover, each matrix representing an i.i.d. distribution is positive semi-definite, so if this distribution is a mixture of i.i.d., then $P$ is also positive semi-definite.



Consider $(p_{11},p_{12},p_{13},p_{22},p_{23},p_{33}) = (7/512, 41/1024, 1/64, 1/8, 1/8)$. We can check that every $2 \times 2$ submatrix of $P$ has nonnegative determinant, so it is affiliated, but the $3 \times 3$ matrix has determinant $\det(P) = -73/2097152$, which is negative, so it is not positive-semidefinite. Because it is not positive semi-definite, it is also not a mixture of i.i.d. \end{example} \end{proof}


\section{Proofs and Discussions from Section~\ref{sec:char-regret-ratio}}\label{app:sec:char-regret-ratio} 

\begin{proof}[Proof of Proposition~\ref{prop:sup-regret-spa-r}]

We will first prove the result for $\mathcal{F}_{\text{iid}}$ and will later show how to modify the proof for $\mathcal{F}_{\text{mix}}$ and $\mathcal{F}_{\text{aff}}$. 
We define $F(v) = \Pr(v_i \leq v)$ and $F_{-}(p) = \Pr(v_i < v)$. We have
\begin{align*}
    R_{n}(\text{SPA}(r), F) &=  \bE[v^{(1)}] - \bE[\max(v^{(2)},r) \1(v^{(1)} \geq r) ] \\
    &= \bE[v^{(1)}] - \bE[v^{(2)} \1(v^{(2)} > r)] - \bE[r \1(v^{(2)} \leq r \leq v^{(1)})] .
\end{align*}

The first term is 
\begin{align*}
    \bE[v^{(1)}] = \int_{v \in [0,1]} \Pr(v^{(1)} > v) dv = 1 - \int_{v \in [0,1]} F(v)^n dv .
\end{align*}

The second term's calculation is analogous to that of Lemma~\ref{lem:first-term-calc-gen}. We have
\begin{align*}
    \bE[v^{(2)} \1(v^{(2)} > r) ] 
    &= \int_{v' \in (r,1]} v' dF^{(2)}(v') 
    = \int_{v' \in (r,1]} \int_{v \in [0,v')} dv dF^{(2)}(v')\\
    &= \int_{v \in [0,r]} \int_{v' \in (r,1]} dF^{(2)}(v')dv +  \int_{v \in (r,1]} \int_{v' \in (v,1]} dF^{(2)}(v') dv \\
    &= \int_{v \in [0,r]} (1 - F^{(2)}(r) ) dv + \int_{v \in (r,1]} (1 - F^{(2)}(v) ) dv \\
    &= 1 - r F^{(2)}(r) - \int_{v \in (r,1]} F^{(2)}(v) dv .
\end{align*}

The third term is
\begin{align*}
    \bE[r \1(v^{(2)} \leq r \leq v^{(1)})] = r \Pr(v^{(2)} \leq r \leq v^{(1)}) = r( \Pr(v^{(2)} \leq r) - \Pr(v^{(1)} < r)) = r(F^{(2)}(r) - F_{-}(r)^n)
\end{align*}
Together, we have
\begin{align*}
    \sup_{F \in \mathcal{F}_{\emph{iid}}([0,1]) } R_{n}(\text{SPA}(r), F) =  r F_{-}(r)^n -  \int_{v \in [0,r]} F(v)^n dv  + \int_{v \in (r,1]} n F(v)^{n-1} - n F(v)^n dv .
\end{align*}

We first assume that $n \geq 2$. Note that the integrand $n F(v)^{n-1} - n F(v)^n$ is increasing for $F(v) \leq \frac{n-1}{n}$ and is decreasing for $F(v) \geq \frac{n-1}{n}$.  To minimize $\int_{v \in [0,p]^n} F(v)^n dv $ we must have $F(v) = 0$ for $v  \in [0,r-\epsilon]$ for arbitrarily small $\epsilon > 0$, and to maximize $\int_{v \in (p,1]} n F(v)^{n-1} - n F(v)^n dv$, the only constraint we have is $F(v) \geq c$ so for $v \in (r,1]$ we set $F(v) = \frac{n-1}{n}$ if $c \leq \frac{n-1}{n}$ and $F(v) = c$ otherwise. Note that the sup over first case of $c \leq \frac{n-1}{n}$ is simply the second case with $c = \frac{n-1}{n}$. Because we take the sup over $F$, we can let $\epsilon \downarrow 0$ and get that the worst-case regret is $\sup_{c \in [\frac{n-1}{n},1] } rc^n +  (1-r)( nc^{n-1} - n c^n) $. 
Now, the derivative of this expression of $c$ is $nc^{n-2} \left[ (1-r)(n-1) - ((1-r)n-r) c \right]$. The expression in $\left[ \cdots \right]$ is linear in $c$. At $c = \frac{n-1}{n}$, the expression is $r \frac{n-1}{n} \geq 0$. At $c = 1$, the expression is $2r-1$. So if $r \geq \frac{1}{2}$, the first derivative is always $\geq 0$, so the maximum is achieved at $c = 1$ and the value is $r$. If $r \leq \frac{1}{2}$, the maximum is achieved at $c^* = \frac{(1-r)(n-1)}{(1-r)n-r} \in \left[ \frac{n-1}{n}, 1 \right]$ and the value is $\frac{(1-r)^n (n-1)^{n-1}}{((1-r)n-r)^{n-1}}$.

Now we deal with the case $n=1$. This expression reduces to $r F_{-}(r) - \int_{v \in [0,r]} F(v) dv + \int_{v \in (r,1]} 1-  F(v) dv$ which has a sup of $\sup_{c \in [0,1]} \lambda - 1 + rc + (1-r)(1- c) = \max(1-r,r)$ because it is linear in $c$, so it achieves the extrema at the endpoints. 

Now we want to choose the optimal $r$ to minimize the worst-case regret. For $n \geq 2$, the worst-case regret $r$ is increasing on $[\frac{1}{2},1]$, so it is sufficient to consider $r$ in $[0,\frac{1}{2}]$ with regret \newline $\exp\left( n \log(1-r) + (n-1)\log(n-1) - (n-1) \log(n-(n+1)r)\right)$. The first derivative of the expression within exp is $-\frac{n}{1-r} + \frac{(n-1)(n+1)}{n-(n+1)r} = \frac{(n+1)r-1}{(1-r)(n-(n+1)r)}$. So it is maximized at $r^* = \frac{1}{n+1}$ and substituting this in gives the optimal regret $\left( \frac{n}{n+1}\right)^{n}$. For $n = 1$, $\max(1-r,r)$ is minimized at $r^* = \frac{1}{2}$ with regret $\frac{1}{2}$ so it has the same expression.

Note also that the proof implies that the worst case distribution (more precisely a sequence of distributions that achieves the minimax as $\epsilon \downarrow 0$) is a two-point distribution with weights $c^*$ and $1-c^*$ at $r-\epsilon$ and 1, respectively, with $c^* = \frac{(1-r)(n-1)}{(1-r)n-r}$. (If $r = 0$, then we put weights $c^*$ and $1-c^*$ at 0 and 1.) If we choose the optimal $r^* = \frac{1}{n+1}$, then $c^* = \frac{n}{n+1}$.

Therefore, the results for $\mathcal{F}_{\text{iid}}$ is proved. With exactly the same argument as that used for the proof of Proposition~\ref{prop:minimax-mix-eq-iid} (namely, because $\mathcal{F}_{\text{mix}}$ is the convex hull of $\mathcal{F}_{\text{iid}}$, there is an optimal pure strategy for Nature), we get that the minimax values for $\mathcal{F}_{\text{mix}}$ and $\mathcal{F}_{\text{iid}}$ are equal. For $\mathcal{F}_{\text{aff}}$, the proof is similar to that of Proposition~\ref{prop:minimax-regret-aff}. We note that the same calculation gives the regret
\begin{align*}
    & r \mathbf{F}_n^{(1)}(r^-) -  \int_{v \in [0,r]} \mathbf{F}_n^{(1)}(v) dv + \int_{v \in (r,1]} ( \mathbf{F}_n^{(2)}(v) - \mathbf{F}_n^{(1)}(v) ) dv \\
    &\leq  r \mathbf{F}_n^{(1)}(r) + \int_{v \in (r,1]} ( \mathbf{F}_n^{(2)}(v) -  \mathbf{F}_n^{(1)}(v) ) dv \\
    &= r \mathbf{F}_n^{(1)}(r) + \int_{v \in (r,1]} ( n \mathbf{F}_{n-1}^{(1)}(v) - n \mathbf{F}_n^{(1)}(v) ) dv \\
    &\leq r \mathbf{F}_n^{(1)}(r) + \int_{v \in (r,1]} ( n \mathbf{F}_{n}^{(1)}(v)^{(n-1)/n} - n \mathbf{F}_n^{(1)}(v) ) dv \\
    &\leq \sup_{c \in [0,1]} \left[ rc^n + \sup_{z \in [c,1]} \left\{ nz^{n-1} - n z^n \right\} \right]
\end{align*}
The third line uses the equation $\mathbf{F}_n^{(2)}(v) = n \mathbf{F}_{n-1}^{(1)}(v) - (n-1) \mathbf{F}_n^{(1)}(v) $ that we already used in the proof of Proposition~\ref{prop:minimax-regret-aff}. The fourth line uses the inequality $\mathbf{F}_{n-1}^{(1)}(v) \leq \mathbf{F}_{n}^{(1)}(v)^{(n-1)/n}$ that we proved as Lemma~\ref{lem:F-n-key} in the same proof. The fifth line is a pointwise maximization over $c = \mathbf{F}_n^{(1)}(r)^{1/n}$ and $z$ denoting $\mathbf{F}_n^{(1)}(v)^{1/n}$ for $v > r$ which is $\geq c$. The last expression is the same expression that we got for $\mathcal{F}_{\text{iid}}$, so it has the same minimax regret.\end{proof}

\section{Discussion on Upper Bound Information} \label{app:subsec:model-discuss-upper-bound}

We would first like to clarify that we do not assume that we know the actual upper bound of the support of the true underlying distribution $\bf{F}$ (for one buyer, that would be $\bar{v}_F = \inf \{x:  \int_{[x,\infty)} dF(x) = 0\}$, but simply that we know an upper bound $b$ of that value, i.e., the decision-maker only knows that $\bar{v}_F \le b$. This allows for example for the actual distribution to have $\bar{v}_F = 0.1b$, $\bar{v}_F = 0.5b$ or $\bar{v}_F = 0.99 b$ and the decision-maker needs to design a mechanism that is oblivious to $\bar{v}_F$ and more generally to $F$.

Throughout the paper, we assume that the known upper bound on the distribution is $1$, that is, the support is $[0,1]$. This is purely a normalization for expositional convenience. If the support is $[0,b]$ for some known upper bound $b$, then the resulting minimax optimal regret is $b$ times the original minimax regret, and the optimal mechanism and the corresponding worst-case distribution are multiplicatively scaled by $b$. We formally record this observation in the following proposition.

\begin{proposition}\label{prop:upper-bound-b}
Let the support of the buyers' valuations be $[0,b]^{n}$ for a known $b > 0$, and $r_n^*, \Phi_n^*$ be defined as in Theorem~\ref{thm:minimax-lambda-regret-main}. Let $\tilde{m}_n^*$ be a second-price auction with reserve $br$, with random $r \sim \Phi_n^*$. Equivalently, $\tilde{m}_n^* = \textnormal{SPA}(\tilde{\Phi}_n^*)$ with
\begin{align*}
    \tilde{\Phi}_n^*(v) = \left( \frac{v/b}{v/b - r_n^*} \right)^{n-1} \log\left( \frac{v/b}{r_n^*} \right) - \sum_{k=1}^{n-1} \frac{1}{k} \left( \frac{v/b}{v/b - r_n^*} \right)^{n-1-k} \text{ for } v \in [b r_n^*, b].
\end{align*}

Let $\tilde{F}_n^*$ be an isorevenue starting at $b r_n^*$ given by $\tilde{F}_n^*(v) = 1-r_n^*/(v/b)$ for $v \in [br_n^*, b)$ and $\tilde{F}_n^*(b) = 1$. 
Then, $(\tilde{m}_n^*, \tilde{F}_n^*)$ is a saddle point of the regret $R_n(m,F)$, and the minimax regret is $b$ times the original minimax regret, namely,
\begin{align*}
    b \left[ (1-r_n^*)^{n-1} - \int_{v=r_n^*}^{v=1} \left( 1 - \frac{r_n^*}{v} \right)^{n-1} dv \right].
\end{align*}

Note also that $(\tilde{m}_n^*, \tilde{F}_n^*)$ is still a saddle point if we instead assume that the support is $[a,b]$ with $a \leq r_n^* b$.
\end{proposition}

The motivation of the known support information assumption is that we operate in a world with minimal or no data, and significant ``Knightian'' uncertainty. The bounds need not be learned from data but rather are derived from asking experts, using domain knowledge, or common sense. \jaedit{Some examples might be launching a new product, or auctioning rarely traded goods (fine art, collectibles, jewelry).} \jacomment{should we add this example sentence? not quite sure what the examples should be} In these contexts, it is easier and more intuitive to come up with a reasonable \textit{range of values} than to guess something like the shape of the valuation distribution (either parametric or nonparametric like regular or monotone hazard rate) or distributional parameters like the mean or the optimal monopoly price.

\jacomment{Comments about other references from REviewer 2 I don't think we need to say that here.}

We will now show that even if the upper bound is misspecified, the performance loss (increase in minimax regret) is small and degrades gracefully with the error. In particular, we will compute the minimax regret when the upper bound assumed for the computation of the mechanism (as given in Proposition~\ref{prop:upper-bound-b}) is $b_{\textnormal{mech}}$, but the worst case distribution class $\mathbf{F}$ is over distributions with support $[0,b_{\textnormal{actual}}]$. Note that if $b_{\textnormal{mech}} = 1$, then the mechanism is $\textnormal{SPA}(\Phi_n^*)$.

\begin{proposition}\label{prop:worst-case-different-b}
Suppose that we use the mechanism $\textnormal{SPA}(\Phi_n^*)$, but the worst-case distribution class $\mathcal{F}$ is over support $[0,b]$. For $r_n^* \leq b \leq 1$, the worst case $F$ is
\begin{align*}
    F^*(v) = \begin{cases}
    0 &\text{ if } v \in [0,r_n^*] \\
    1-\frac{r_n^*}{v} &\text{ if } v \in [r_n^*,b) \\
    1 &\text{ if } v = b.
    \end{cases}
\end{align*}
For $b \geq 1$, the worst case $F$ is
\begin{align*}
    F^*(v) = \begin{cases}
    0 &\text{ if } v \in [0,r_n^*] \\
    1-\frac{r_n^*}{v} &\text{ if } v \in [r_n^*,s^*] \\
    1-\frac{r_n^*}{s^*} &\text{ if } v \in [s^*,b) \\
    1 &\text{ if } v = b
    \end{cases}
\end{align*}
where $s^* \in [nr_n^*,1]$ is a solution to \begin{align*}
    (b-1)  \left( 1 - \frac{n r_n^*}{s^*} \right) = \int_{v=s^*}^{v=1} (\mu^*)'(v) dv = \int_{v=s^*}^{v=1}  \Phi_n^*(v) \left[ \left( 1 - \frac{r_n^*}{s^*} \right) \left( \frac{v-nr_n^*}{v-r_n^*} \right) - 1 + \frac{n r_n^*}{s^*} \right] dv.
\end{align*}

\end{proposition}

The worst-case regret calculation in Proposition~\ref{prop:worst-case-different-b} assumes $b_{\textnormal{mech}} = 1$ for convenience, but by the same scaling argument used in Proposition~\ref{prop:upper-bound-b} earlier, we can compute the worst-case regret for any $(b_{\textnormal{mech}}, b_{\textnormal{actual}})$.

\begin{corollary}\label{corollary:worst-case-different-b}
If the worst-case regret under the mechanism $\textnormal{SPA}(\Phi_n^*)$ and support $[0,b]$ (that is, $b_{\textnormal{mech}} = 1, b_{\textnormal{actual}} = b$) as computed in Proposition~\ref{prop:worst-case-different-b} is given by a function $g(b)$, then the worst-case regret if the $b$ assumed by the seller is $b_{\textnormal{mech}}$ but the true $b$ is $b_{\textnormal{actual}}$ is given by $b_{\textnormal{mech}} g(b_{\textnormal{actual}}/b_{\textnormal{mech}})$.
\end{corollary}

We can now perform \textit{sensitivity analysis} on $b_{\textnormal{mech}}$ by computing the performance loss when the upper bound is misspecified. We fix Nature's distribution class to have support $[0,1]$, but we assume that the mechanism is computed assuming the potentially misspecified $b_{\textnormal{mech}}$ and evaluate the minimax regret of that mechanism as a function of $b_{\textnormal{mech}}$. The results are shown in Figure~\ref{fig:minimax-regret-vary-b-mech}. We can see that the minimax regret is Lipschitz in $b_{\textnormal{mech}}$ and that the performance guarantee is not very sensitive to $b_{\textnormal{mech}}$, especially for $n \geq 2$. For example, when $n = 2$, if the seller underestimates (resp. overestimates) the true upper bound by 20\%, the increase in regret is only 7.3\% (resp. 16.3\%).

\begin{figure}[h!]
    	\centering
    	\includegraphics[height=0.45\linewidth]{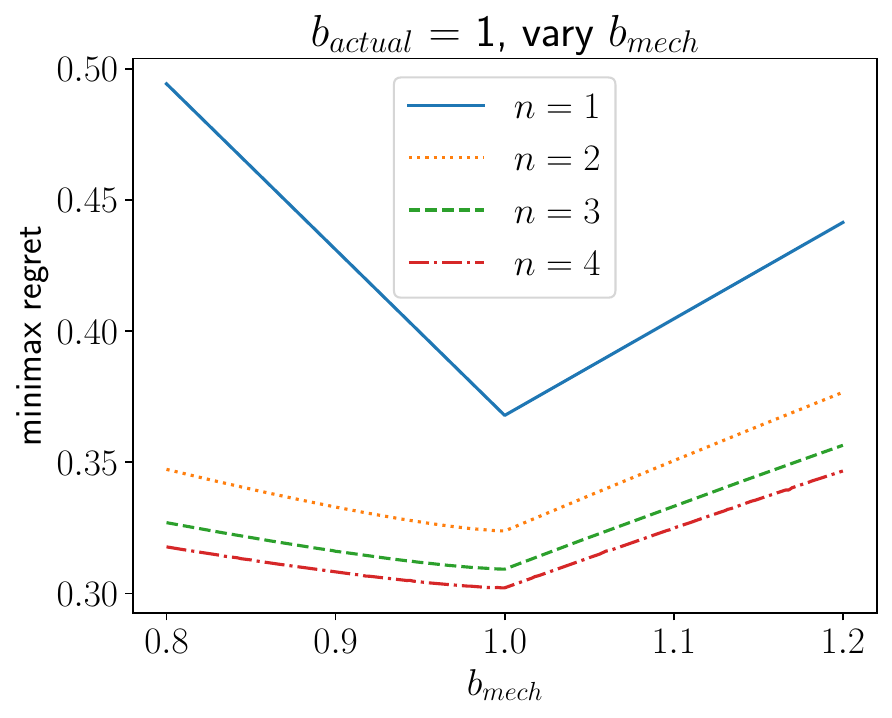}

	\caption{Minimax regret for each $n$ as a function of $b_{\textnormal{mech}}$ when Nature's worst-case distribution class has support $[0,b_{\textnormal{actual}}]$ with $b_{\textnormal{actual}}=1$, while the mechanism is computed assuming $b_{\textnormal{mech}}$.}
\label{fig:minimax-regret-vary-b-mech}
\end{figure}

\subsection{Proofs for Appendix \ref{app:subsec:model-discuss-upper-bound}}



\begin{proof}[Proof of Proposition~\ref{prop:worst-case-different-b}]
The regret under $\textnormal{SPA}(\Phi_n^*)$ and arbitrary $F$ supported on $[0,b]$ is given by
\begin{align*}
    r_n^*   - \int_{v=0}^{v=r_n^*} F(v)^n dv  + \int_{v=r_n^*}^{v=b} (1 -v {\Phi_n^*}'(v) - \Phi_n^*(v) ) (1-F(v)^n ) +  \Phi_n^*(v) n F(v)^{n-1}(1-F(v)) dv
\end{align*}

This is simply a modification of the (Regret-$F$) expression. Note also that $\Phi_n^*(v) = 1$ for $v \in [1,b]$, so we can also write the above regret as
\begin{align*}
    r_n^*   - \int_{v=0}^{v=r_n^*} F(v)^n dv  + \int_{v=r_n^*}^{v=1} (1 -v {\Phi_n^*}'(v) - \Phi_n^*(v) ) (1-F(v)^n ) +  \Phi_n^*(v) n F(v)^{n-1}(1-F(v)) dv \\ + \int_{v=1}^{v=b}  n F(v)^{n-1}(1-F(v)) dv
\end{align*}
for $b \geq 1$.

We first consider the case $n \geq 2$. If we maximize this regret expression pointwise, then we want $F^*(v) = 0$ for $v \in (0,r_n^*)$ to make the second term $\int_{0}^{r_n^*} F(v)^{n} dv$ zero. For the term that is the integral from $r_n^*$ to $\min(b,1)$, the first-order condition gives
\begin{align*}
    (1 -v {\Phi_n^*}'(v) - \Phi_n^*(v) ) ( -n F^*(v)^{n-1} ) +  \Phi_n^*(v) n ( (n-1)F^*(v)^{n-2} - n F^*(v)^{n-1} ) = 0  \\
    F(v) = \frac{(n-1) \Phi_n^*(v)}{1 -v {\Phi_n^*}'(v) +(n-1) \Phi_n^*(v)}
\end{align*}
Using the expression for ${\Phi_n^*}'(v)$ from (ODE-$\Phi$) reduces the above expression to
\begin{align*}
    F^*(v) = 1-\frac{r_n^*}{v}
\end{align*}
This is not surprising, because the argument in the main proof shows that this isorevenue $F^*(v)$ maximizes the regret pointwise, and we simply change the range.

For $b \geq 1$ there is also the term that is the integral from $1$ to $b$, which is $n F(v)^{n-1} (1-F(v))$, which is pointwise-optimized at $F^*(v) = \frac{n-1}{n}$.

This pointwise maximization gives a valid $F^*$ (and hence, the true worst-case $F$) if and only if the resulting $F^*$ is weakly increasing (that is, non-decreasing). If $b \leq 1$, then this gives a valid $F$ as shown in the proposition, and we are done. If $b > 1$, then the pointwise optimization gives $\lim_{v \uparrow 1} F^*(v) = 1 - r_n^* > \frac{n-1}{n} = \lim_{v \downarrow 1} F^*(v)$, so the resulting $F$ is decreasing at $1$, which is not feasible. Rather, we solve for the optimal $F$ by dualizing the monotonicity constraint and characterizing optimal Lagrangian multipliers. We still set $F^*(v) = 0$ for $v \in [0,r_n^*]$, but now we let $\mu(v)$ be the value of the dual function corresponding to the increasing constraint of $F$ at $v$, for $v \in [r_n^*,b]$. It is sufficient to give a specific $F^*$ and $\mu^*$ such that all duality-based constraints hold, namely, (i) $\mu^*(v) \geq 0$ for all $v \in [r_n^*,b]$, (ii) Lagrangian optimality:
\begin{align*}
    F^* \in \arg\max_{F} \mathcal{L}(F,\mu^*) := r_n^*    + \int_{v=r_n^*}^{v=1} (1 -v {\Phi_n^*}'(v) - \Phi_n^*(v) ) (1-F(v)^n ) +  \Phi_n^*(v) n F(v)^{n-1}(1-F(v)) dv \\ + \int_{v=1}^{v=b}  n F(v)^{n-1}(1-F(v)) dv  + \int_{v=r_n^*}^{v=b} \mu^*(v) dF(v) ,
\end{align*}
where the maximization is now over arbitrary $F$ in $[0,1]$ because the increasing constraint has been dualized away, and (iii) complementary slackness: $\mu^*(v) = 0$ if $F^*$ is strictly increasing at $v$ for every $v$. We will consider $F^*$ that is of the form given in the proposition, that is, $F^*(v) = 1-r_n^*/v$ isorevenue from $v \in [r_n^*,s^*]$, and $F^*(v) = 1-r_n^*/s^*$ constant from $v \in [s^*,b]$. Note that because $F^*(v)$ is strictly increasing on $[r_n^*,s^*]$, we must have $\mu^*(v) = 0$ for $v \in [r_n^*,s^*]$. Because $\lim_{v \uparrow b} F^*(v) = 1-r_n^*/s < 1 = F^*(b)$, we also set $\mu^*(b) = 0$. We will now use the Lagrangian optimality condition to pin down $\mu^*$ and show that it is indeed nonnegative. Using integration by parts,
\begin{align*}
    \int_{v=r_n^*}^{v=b} \mu^*(v) dF(v) = \int_{v=s}^{v=b} \mu^*(v) dF(v) = \mu(b) F(b) - \mu(s) F(s) - \int_{v=s}^{v=b} F(v) (\mu^*)'(v) dv \\=  - \int_{v=s}^{v=b} F(v) (\mu^*)'(v) dv,
\end{align*}
because $\mu^*(s) = \mu^*(b) = 0$.

Because the Lagrangian optimality condition is now optimizing over arbitrary $F$ in $[0,1]$, it is equivalent to a pointwise maximization condition. For $v \in [r_n^*,s^*]$, $F^*(v) = 1 - r_n^*/v$ satisfies the pointwise maximization with $\mu^*(v) = 0$ as we have shown. For $v \in (s^*,1)$ and $v \in (1,b)$ we have:
\begin{align*}
    &\text{ for } v\in (s^*,1), \quad &&F^*(v) \in \arg\max_{F(v) \in [0,1]} (1 -v {\Phi_n^*}'(v) - \Phi_n^*(v) ) (1-F(v)^n) \\ & &&\quad\quad\quad + \Phi_n^*(v) n F(v)^{n-1} (1-F(v)) - F(v) (\mu^*)'(v) , \\
    &\text{ for } v\in (1,b), \quad &&F^*(v) \in \arg\max_{F(v) \in [0,1]}  n F(v)^{n-1} (1-F(v)) - F(v) (\mu^*)'(v) 
\end{align*}
First-order conditions give
\begin{align*}
    (1 -v {\Phi_n^*}'(v) - \Phi_n^*(v) ) (1-F(v)^n) + \Phi_n^*(v) n F(v)^{n-1} (1-F(v)) -  (\mu^*)'(v) = 0 \text{ for } v \in (s^*,1) \\
    n ((n-1)F^*(v)^{n-2} - n F^*(v)^{n-1} ) - (\mu^*)'(v) = 0 \text{ for } v \in (1,b)
\end{align*}
We note that $F^*(v) = F^*(s) = 1-r_n^*/s$ for all $v \in (s^*,1)$ and $v \in (1,b)$. We also have an expression for ${\Phi_n^*}'(v)$ in terms of $\Phi_n^*(v)$ from (ODE-$\Phi$). We then have, for $v \in (s^*,1)$,
\begin{align*}
    (\mu^*)'(v) = n F^*(s)^{n-2} \Phi_n^*(v) \left[ F^*(s) \left( \frac{v-nr_n^*}{v-r_n^*} \right) + n-1-n F^*(s) \right] \\ \geq n F^*(s)^{n-2} \Phi_n^*(v) \left[ F^*(s) \left( \frac{s^*-nr_n^*}{s^*-r_n^*} \right) + n-1-n F^*(s) \right] =0
\end{align*}
where the inequality is true because $v \geq s^* \geq r_n^*$ and the function $\frac{v-nr_n^*}{v-r_n^*}$ is increasing in $v$, and we can directly check the last equality by substituting $F^*(s) = 1-r_n^*/s$. 

For $v \in (1,b)$, we have
\begin{align*}
    (\mu^*)'(v) =  n F^*(s^*)^{n-2} (n-1-n F^*(s^*)) = n F^*(s^*)^{n-2} \left( \frac{n r_n^*}{s^*} - 1 \right)
\end{align*}
If we can prove that $s^*\geq n r_n^*$, then we will have $(\mu^*)'(v) \leq 0$ for $v \in (1,b)$. Combining this with $(\mu^*)'(v) \geq 0$ for $v \in (s^*,1)$ and $\mu^*(s) = \mu^*(b) = 0$ gives us that $\mu^*$ is increasing from $s^*$ to $1$ and decreasing from $1$ to $b$, being nonnegative the entire time, which verifies (i). These constraints also pin down $s^*$ from
\begin{align*}
    0 = \mu^*(s^*) - \mu^*(b) = \int_{v=s^*}^{v=1} (\mu^*)'(v) dv + \int_{v=1}^{v=b} (\mu^*)'(v) dv 
\end{align*}
which is equivalent to
\begin{align*}
    (b-1)  \left( 1 - \frac{n r_n^*}{s^*} \right) = \int_{v=s^*}^{v=1} (\mu^*)'(v) dv = \int_{v=s^*}^{v=1}  \Phi_n^*(v) \left[ \left( 1 - \frac{r_n^*}{s^*} \right) \left( \frac{v-nr_n^*}{v-r_n^*} \right) - 1 + \frac{n r_n^*}{s^*} \right] dv.
\end{align*}


It remains to show that the above equation has a solution in $s^* \in [nr_n^*,1]$. We consider the difference between the left hand side and the right hand side as a function in $s^*$. For $s^* = nr_n^*$, the left hand side is zero while the right hand side is nonnegative, so the difference is nonpositive. For $s^* = 1$, the left hand side is nonnegative while the right hand side is zero, so the difference is nonpositive. By the intermediate value theorem, there exists a $s^* \in [nr_n^*,1]$ such that the difference is zero, and we are done.

Lastly, we consider the case $n  = 1$. For $1/e \leq b \leq 1$, the regret is identically $1/e$ under $\Phi_n^*$. For $b \geq 1$, the regret is $1/e + \int_{v=1}^{v=b} (1-F(v)) dv$, which is maximized at $F(v) = 0$ for $v \in (1,b)$, giving the worst-case regret of $1/e+b-1$. This is equivalent to $s^* = r_n^* = 1/e$, which is indeed a solution to the equation in the proposition.\end{proof}

\section{Further Discussions About The Model}\label{app:sec:model-discuss}

\subsection{Discussions on the Benchmark}\label{app:subsec:model-discuss-benchmark}

\jacomment{Is this equivalence true for exchangeable distributions?}

In the main text, we consider the benchmark to be the maximum revenue achievable when the \textit{valuations} $\mathbf{v}$ of the buyers are known, namely, $\max(\mathbf{v})$. This is also known as the \textit{first-best} benchmark. There is another benchmark that we consider, the \textit{second-best} benchmark, defined as the maximum revenue achievable when the \textit{distribution} $\mathbf{F}$ of the buyers are known. We denote this benchmark by $\text{OPT}(\mathbf{F})$. When $\mathbf{F} = F^n$ is i.i.d., we abuse the notation and write $\textnormal{OPT}(\mathbf{F})$ as $\textnormal{OPT}(F)$.

Both the first-best benchmark and the second-best benchmark are extensively used in the literature, and there is not necessarily a single universal benchmark. The first-best benchmark has been used in the economics, operations, and computer science literatures such as \citep{BergemannSchlag08,CaldenteyLiuLobel17,KocyigitKR20-new,robust-monopoly-regulation,kleinberg-yuan}. 

The first-best benchmark is also reminiscent of the \textit{offline optimum} benchmark, which is extensively used in the analysis of algorithms \citep{BorodinElYaniv}. An advantage of the first-best benchmark is that it is easily computable and can be evaluated counterfactually (using the reported values when the mechanism is DSIC).

The main technical result of this section is that when the distribution class consists of all distributions ($\mathcal{F} =\mathcal{F}_{\textnormal{all}}$), minimax regret against the first-best benchmark and the second-best benchmark are equal, so the choice of the benchmark is immaterial in that case. 

\begin{proposition}\label{prop:fb-sb-equiv}
If $\mathcal{F} = \mathcal{F}_{\textnormal{all}}$, then
\begin{align*}
    \inf_{m \in \mathcal{M}} \sup_{\mathbf{F} \in \mathcal{F}} \bE_{\mathbf{v} \sim \mathbf{F}} \left[ \textnormal{OPT}(\mathbf{F})   - \sum_{i=1}^{n} p_i(\mathbf{v}) \right] = \inf_{m \in \mathcal{M}} \sup_{\mathbf{F} \in \mathcal{F}} \bE_{\mathbf{v} \sim \mathbf{F}} \left[ \max(\mathbf{v}) - \sum_{i=1}^{n} p_i(\mathbf{v}) \right]
\end{align*}
\end{proposition}

\begin{proof}[Proof of Proposition~\ref{prop:fb-sb-equiv}]

It is clear that $\textnormal{OPT}(\mathbf{F}) \leq \bE_{\mathbf{v} \sim \mathbf{F}} [ \max(\mathbf{v}) ]$, so 
\begin{align*}
    \inf_{m \in \mathcal{M}} \sup_{\mathbf{F} \in \mathcal{F}} \bE_{\mathbf{v} \sim \mathbf{F}} \left[ \textnormal{OPT}(\mathbf{F})   - \sum_{i=1}^{n} p_i(\mathbf{v}) \right] \leq \inf_{m \in \mathcal{M}} \sup_{\mathbf{F} \in \mathcal{F}} \bE_{\mathbf{v} \sim \mathbf{F}} \left[ \max(\mathbf{v}) - \sum_{i=1}^{n} p_i(\mathbf{v}) \right] .
\end{align*}

It remains to prove the opposite inequality.

For each $\mathbf{v} \in [0,1]^{n}$, let $\delta_{\mathbf{v}}$ be a (joint) distribution that puts all weight on $\mathbf{v}$. Note that $\mathbf{v} \in \mathcal{F}_{\textnormal{all}}$. Therefore, for any fixed $m$ and $\mathbf{v}$, the sup over all $\mathbf{F}$ must be at least the value evaluated at $\mathbf{F} = \delta_{\mathbf{v}}$, namely,
\begin{align*}
    & \sup_{\mathbf{F} \in \mathcal{F}} \bE_{\mathbf{v} \sim \mathbf{F}} \left[ \textnormal{OPT}(\mathbf{F})   - \sum_{i=1}^{n} p_i(\mathbf{v}) \right] \geq      \left[ \textnormal{OPT}(\delta_{\mathbf{v}}) - \sum_{i=1}^{n} p_i(\mathbf{v}) \right] =    \left[ \max(\mathbf{v}) - \sum_{i=1}^{n} p_i(\mathbf{v}) \right]
\end{align*}
Since this is true for every $\mathbf{v}$, we have
\begin{align*}
    \sup_{\mathbf{F} \in \mathcal{F}} \bE_{\mathbf{v} \sim \mathbf{F}} \left[ \textnormal{OPT}(\mathbf{F})   - \sum_{i=1}^{n} p_i(\mathbf{v}) \right] \geq \sup_{\mathbf{v} \in [0,1]^{n} } \left[ \max(\mathbf{v}) - \sum_{i=1}^{n} p_i(\mathbf{v}) \right]  =  \sup_{\mathbf{F} \in \mathcal{F}} \bE_{\mathbf{v} \sim \mathbf{F}} \left[ \max(\mathbf{v}) - \sum_{i=1}^{n} p_i(\mathbf{v}) \right].
\end{align*}
Since this is true for every $m \in \mathcal{M}$, we have
\begin{align*}
    \inf_{m \in \mathcal{M}} \sup_{\mathbf{F} \in \mathcal{F}} \bE_{\mathbf{v} \sim \mathbf{F}} \left[ \textnormal{OPT}(\mathbf{F})   - \sum_{i=1}^{n} p_i(\mathbf{v}) \right] \geq \inf_{m \in \mathcal{M}} \sup_{\mathbf{F} \in \mathcal{F}} \bE_{\mathbf{v} \sim \mathbf{F}} \left[ \max(\mathbf{v}) - \sum_{i=1}^{n} p_i(\mathbf{v}) \right],
\end{align*}
which concludes the proof.\end{proof}

Extending our work to the characterization of the minimax regret against the second-best benchmark when $\mathcal{F} = \mathcal{F}_{\textnormal{iid}}$ is an important direction for future work. In general, it is not clear a priori if it is possible to solve for the optimal minimax mechanism. In this paper, we were able to show that it is possible to do so against the first-best benchmark through a pure saddle point approach. We just want to highlight that characterizing a minimax mechanism against the second-best benchmark would require a different approach and is a priori substantially more difficult. In particular, a pure saddle point $(m^*,F^*)$ cannot exist. Indeed, if it did, letting $\text{OPT}(F)$ be the second-best benchmark and $\text{Rev}(m,F)$ the revenue of mechanism $m$ under distribution $F$, one could swap the min and the max in the zero-sum game, giving minimax regret
\begin{align*}
\min_{m} \max_{F} [ \text{OPT}(F) - \text{Rev}(m,F) ] = \max_{F} \min_{m} [ \text{OPT}(F) - \text{Rev}(m,F) ] = 0,
\end{align*}
where the last equality follows from $\max_{m} \text{Rev}(m,F) = \text{OPT}(F)$ for any $F$. This leads to a contradiction. Instead, we would need to identify a \textit{mixed} saddle point $(m^*, D^*)$ where $D^*$ is a \textit{distribution over distributions}, which is, at least at this stage, a much harder object to deal with.

Viewed through this lens, our proof of Proposition~\ref{prop:fb-sb-equiv} amounts to showing that if $(m^*,\mathbf{F}^*)$ is a saddle point of the first-best minimax regret problem, then $(m^*,D^*)$ is a saddle point of the second-best problem, where $D^*$ is a distribution over all point distributions $\delta_{\mathbf{v}}$ such that the mixture of $D^*$ is $\mathbf{F}^*$ (i.e., the weight on $\delta_{\mathbf{v}}$ is the measure of $\mathbf{F}^*$ at $\mathbf{v}$). This is true because for any distribution $\delta_{\mathbf{v}}$ on the support of $D^*$, the first-best and the second-best benchmark are the same, so the saddle inequalities for the first-best problem immediately transfer to the second-best problem.

\subsection{Discussion on Minimax Regret as Value of Competition and Limiting Minimax Regret}\label{app:subsec:model-discuss-limit-regret}

Lastly, we note that interpreting the minimax regret against the first-best benchmark as quantifying the value of competition is subtle. We note for example, that in previous works by \cite{KocyigitKR20-new,KocyigitRK21-old}, while nominally about multiple bidders, the worst-case ended up reducing back the problem of one-bidder performance. Given the power of nature, the decision-maker could never leverage additional competition to improve performance.   In our work, we limit the power of nature (e.g., to i.i.d values) and this allows to capture the improved worst-case performance as a function of the  number of bidders $n$.

The fact that the limiting regret as $n \to \infty$ is strictly positive is surprising at first glance (we do not think it is ex ante obvious), but it does make some sense once we realize that having infinitely many bidders is not the same as perfect knowledge because the seller has to commit to a selling mechanism beforehand and is restricted in how the mechanism can use that information to extract revenue. This apparently counterintuitive result is driven by the fact that Nature can choose a distribution that depends on $n$. The result is robust to the choice of the benchmark and is not simply an artifact of our modeling assumptions. The following proposition shows that for any $n$, the minimax regret against the second-best benchmark is at least $e^{-2} > 0$.

\begin{proposition}\label{prop:minimax-regret-opt-limit}
For $\mathcal{F} \in \{ \mathcal{F}_{\textnormal{iid}}, \mathcal{F}_{\textnormal{aff}}, \mathcal{F}_{\textnormal{mix}}, \mathcal{F}_{\textnormal{exc}}, \mathcal{F}_{\textnormal{all}} \}$, we have for all $n\ge 1$,
\begin{align*}
    \inf_{m \in \mathcal{M}} \sup_{\mathbf{F} \in \mathcal{F}} \left\{ \textnormal{OPT}(\mathbf{F}) - \bE_{\mathbf{v} \sim \mathbf{F}} \left[ \sum_{i=1}^{n} p_i(\mathbf{v}) \right] \right\} \ge e^{-2}.
\end{align*}
\end{proposition}

\begin{proof}[Proof of Proposition~\ref{prop:minimax-regret-opt-limit}] As before, it is sufficient to show the result for the smallest class $\mathcal{F}_{\textnormal{iid}}$. 

Fix $q = 1/n$, and consider the two-point distribution $F_{r}$ that takes values $r$ with probability $q$ and zero otherwise. If the distribution $F_r$ is known to the seller, the optimal mechanism against any such a distribution is a SPA with reserve price $r$:
\begin{align*}
    \textnormal{OPT}(F_r) = r \prob_{\mathbf{v} \sim F_{r}^n}(v^{(1)}=r) .
\end{align*}
Also, by the IR constraint, under $F_r$, the payment is always bounded above by $r$, the maximum possible valuation, and if $\mathbf{v} = \mathbf{0}$, the payment is bounded above by 0. Therefore,
\begin{align*}
    \bE_{\mathbf{v} \sim F_r^{n}} \left[ \sum_{i=1}^{n} p_i(v) \right] \leq r  \prob_{\mathbf{v} \sim F_{r}^n}(v^{(1)}=r, v^{(2)}=r)  + p_1(r,\mathbf 0_{-1}) \prob_{\mathbf{v} \sim F_{r}^n}(v^{(1)}=r, v^{(2)}=0) .
\end{align*}
Now consider a distribution-over-distributions $D$ in which we randomize over the values of $r$ according to the isorevenue distribution on $[1/e,1]$, that is 
\begin{align*}
    D(r) = \begin{cases}
    0 &\text{ if } r \in [0, 1/e] \\
    1 - \frac{1}{e r} &\text{ if } r \in [1/e,1) \\
    1 &\text{ if } r = 1.
    \end{cases}
\end{align*}
By Yao's Lemma we obtain
\begin{align*}
     & \inf_{m \in \mathcal{M}} \sup_{\mathbf{F} \in \mathcal{F}} \left\{ \textnormal{OPT}(\mathbf{F}) - \bE_{\mathbf{v} \sim \mathbf{F}} \left[ \sum_{i=1}^{n} p_i(\mathbf{v}) \right] \right\} \\
     &\quad \ge \inf_{m \in \mathcal{M}} \bE_{r \sim D} \left[ \textnormal{OPT}(F_r) - \bE_{\mathbf{v} \sim F_r^n} \left[ \sum_{i=1}^{n} p_i(\mathbf{v}) \right] \right] \\
     &\quad\ge 
     \inf_{m \in \mathcal{M}} \bE_{r \sim D} \left[
     r \prob_{\mathbf{v} \sim F_{r}^n}(v^{(1)}=r) - r  \prob_{\mathbf{v} \sim F_{r}^n}(v^{(1)}=r, v^{(2)}=r)  - p_1(r,\mathbf 0_{-1}) \prob_{\mathbf{v} \sim F_{r}^n}(v^{(1)}=r, v^{(2)}=0) \right]\\
     &\quad = n q (1-q)^{n-1} \inf_{m \in \mathcal{M}} \bE_{r \sim D} \left[
     r  - p_1(r,\mathbf 0_{-1}) \right]\\
     &\quad = n q (1-q)^{n-1} \left( \bE_{r \sim D} [r] - \sup_{m \in \mathcal{M}} \bE_{r \sim D} \left[p_1(r,\mathbf 0_{-1}) \right] \right)\,,   
\end{align*}
where the second inequality follows from the expresion for $\textnormal{OPT}(F_r)$ and the upper bound on the expected payment that we have previously derived, the first equality because $\prob_{\mathbf{v} \sim F_{r}^n}(v^{(1)}=r) = \prob_{\mathbf{v} \sim F_{r}^n}(v^{(1)}=r, v^{(2)}=r) + \prob_{\mathbf{v} \sim F_{r}^n}(v^{(1)}=r, v^{(2)}=0)$ and $\prob_{\mathbf{v} \sim F_{r}^n}(v^{(1)}=r, v^{(2)}=0) = n q (1-q)^{n-1}$, and the last from extracting constant terms.

Using the IC and IR constraints, we obtain that
\begin{align*}
    \sup_{m \in \mathcal{M}} \bE_{r \sim D} \left[p_1(r,\mathbf 0_{-1}) \right] 
     & \le \sup_{m \in \mathcal{M}} \bE_{r \sim D} \left[r x_1(r,\mathbf 0_{-1}) -  \int_{\tilde{v}_1=0}^{\tilde{v}_1=r} x_1(\tilde{v}_1,\mathbf{0}_{-1}) d\tilde{v}_1 \right]\\ 
     & = \sup_{m \in \mathcal{M}} \frac 1 e x_1(1, \mathbf 0_{-1}) + \int_{1/e}^1 r x_1(r,\mathbf 0_{-1}) \frac {1}{er^2} dr - \int_0^{1} x_1(r, \mathbf 0_{-1}) (1-D(r)) dr \\ 
     & = \sup_{m \in \mathcal{M}} \frac 1 e x_1(1, \mathbf 0_{-1})  - \int_0^{1/e} x_1(r, \mathbf 0_{-1}) dr \\   
     & \le \frac 1 e\,,
\end{align*}
where the first equation follows by integration by parts, and the last inequality because the allocation satisfies $0 \le x_1(\mathbf v) \le 1$. 

The result follows because $\bE_{r \sim D} [r] = 2/e$ and using that $n q (1-q)^{n-1} \ge 1/e$ for $q=1/n$.
\end{proof}

The result that the limiting regret against the second-best benchmark is strictly positive does not contradict the well-known result of \cite{BulowKlemperer96}. They show that the revenue of SPA with no reserve  is at least $(n-1)/n$ of the second-best benchmark (optimal revenue with knowledge of the distribution), \textit{assuming that the distribution is regular.}\footnote{More precisely, their Assumption A.1 assumes that the marginal revenue curve is downward-sloping. In the central case of i.i.d.~bidders, this assumption is equivalent to regularity.} Their result does not apply in our setting when one does not assume regularity. 


Even though the limiting regret is strictly positive, we think it is possible to meaningfully the minimax regret value as a function of $n$ against the first-best benchmark as measuring some notion of value of competition. We could view the $n = 1$ value as the ``ceiling'' and the $n = \infty$ value as the ``floor,'' where the competition is maximized and what remains is purely the effect of the benchmark itself. In a sense, we calculate the limiting minimax regret not for its own sake, but to establish a reference point. The value of competition for each $n$ is how much the regret has gone down from the ceiling compared to the floor, and the benchmark effect is implicitly subtracted away. This is how we interpret our result that the value of competition is significant for low $n$ (say from $n = 1 $ to $n=2$) but has diminishing returns.

\end{document}